\newcommand\floor[1]{\lfloor#1\rfloor}
\title{On the shape of the general error locator polynomial for cyclic codes}
\author[1]{F. Caruso}
\author[2]{E. Orsini}
\author[1]{M. Sala}
\author[1,\thanks{{\emph Corresponding authors}: maxsalacodes@gmail.com, claudia.tinnirello@gmail.com}]{C. Tinnirello}
\affil[1]{Department of Mathematics, University of Trento, Italy}
\affil[2]{Department of Computer Science, University of Bristol, United Kingdom}
\begin{document}
\maketitle
\begin{abstract}
A general result on the explicit form of the general error locator polynomial for all cyclic codes is given, along
with several results for infinite classes of cyclic codes with $t=2$ and $t=3$.
From these, a theoretically justification of the sparsity of the general error locator polynomial is obtained
for all cyclic codes with $t\leq 3$ and $n<63$, except for three cases where the sparsity is proved by a computer check.
Moreover, we discuss some consequences of our results to the understanding of the complexity of bounded-distance decoding of cyclic codes.
\end{abstract}

\begin{IEEEkeywords}
cyclic codes, bounded-distance decoding, general error locator polynomial, symmetric functions, computational algebra,
finite fields, Groebner basis
\end{IEEEkeywords}


\bigskip
\bigskip
\section{Introduction} \label{sec:introduction}
This paper focuses primarily on some issues concerning the efficiency of bounded-distance decoding for cyclic codes.
Cyclic codes form a large class of widely used error correcting codes. They include important codes such as 
the Bose-Chaudhuri-Hocquenghem (BCH) codes, quadratic residue (QR) codes and Golay codes.
In the last fifty years many efficient bounded-distance decoders have been developed for special classes, e.g. the Berlekamp-Massey
(BM) algorithm (\cite{massey1969shift}) designed for the BCH codes. Although BCH codes can be decoded efficiently, it is known that their decoding
performance degrades as the length increases (\cite{CGC-cd-art-linweldon}). Cyclic codes are not known to suffer from the same 
distance limitation, but no efficient bounded-distance decoding algorithm is known for them (up to their actual distance).

On the other hand, the BM algorithm can also be applied to some cyclic codes, provided that there are enough consecutive
known syndromes; namely, $2t$ consecutive syndromes are needed to correct a corrupted word with at most $t$ errors. Unfortunately,
for an arbitrary cyclic code the number of consecutive known syndromes is less than $2t$.
When few unknown syndromes are needed to get $2t$ consecutive syndromes, a good strategy could be to develop an efficient method to determine
expressions of unknown syndromes in terms of known syndromes. In \cite{feng1994new} Feng and Tzeng proposed a matrix method which is based on the
existence of a syndrome matrix with a particular structure. This method depends on the weight of the error pattern, so it leads to a
step-by-step decoding algorithm, and hence the error locator polynomial may not be determined in one step. In \cite{he2001decoding} He et al.
developed a modified version of the Feng-Tzeng method, and used it to determine the needed unknown syndrome and  to decode the binary QR code of
length $47$. In \cite{chang2003algebraic,truong2008algebraic,truong2005algebraic} Chang et al. presented algebraic decoders
for other binary QR codes combining the Feng-He matrix method and the BM algorithm. Another method used to yield representations of unknown
syndromes in terms of known syndromes is the Lagrange interpolation formula (LIF)\cite{chang2010algebraic}. This method has two main problems: it
can be applied only to codes generated by irreducible polynomials and its computational time grows substantially as the number of errors increases. 
The first problem was overcome by Chang et al. in \cite{chang2012multivariate}. Here the authors introduced a multivariate interpolation formula
(MVIF) over finite fields and used it to get an unknown syndrome representation method similar to that in \cite{chang2010algebraic}. They also apply
the MVIF to obtain the coefficients of the general error locator polynomial of the $[15,11,5]$ Reed-Solomon (RS) code. Later, trying to overcome the
second problem, Lee et al.\cite{lee2012new} presented a new algorithm which combines the syndrome matrix search and the modified Chinese remainder theorem (CRT).
Compared to the Lagrange interpolation method, this substantially reduces the computational time for binary cyclic codes generated by irreducible polynomials.  

Besides the unknown syndrome representation method, other approaches have been proposed to decode cyclic codes. In 1987 Elia \cite{elia1987algebraic}  proposed
a seminal efficient algebraic decoding algorithm for the Golay code of length $23$. Orsini and Sala \cite{CGC-cd-art-gelp1} introduced the general
error locator polynomial and presented an algebraic decoder which permits to determine the correctable error patterns of a cyclic code in one
step. They constructively showed that general error locator polynomials exist for any cyclic code (it is shown to exist in a Gr\"{o}bner basis of the
syndrome ideal), and gave some theoretical results on the structure of such polynomials in \cite{orsini2007general}, without the need to
actually compute a Gr\"{o}bner basis. In particular, for all binary cyclic codes with length less than $63$ and correction capability less or equal than $2$,
they provided a sparse implicit representation, and show that most of these codes may be grouped in a few classes, each allowing a theoretical
interpretation for an explicit sparse representation. In any case, direct computer computations show that the general error locator polynomial
for all these codes is actually sparse. The efficiency of the previous method depends on the sparsity of the relevant polynomial. At present, there is no theoretical proof
of the sparsity of general error locator polynomials for arbitrary cyclic codes (and no proof for sparse representations), but there is some experimental evidence
in the binary case. The proof of its sparsity in the general case would be a significant result in complexity theory, because it would imply that the
complexity of the bounded-distance decoding problem for cyclic codes (allowing unbounded preprocessing) is polynomial in the code length.

In \cite{orsini2007general} the authors also provide a structure theorem for the general error locator
polynomials of a class of binary cyclic codes. A generalization of this result is given in \cite{chang2010algebraic}. 
The low computational complexity of the general error locator polynomial for the two error-correctable cyclic codes has motivated the studies for variations on this polynomial
\cite{marcolla2012improved, lee2010unusual, lee2011weak}. 
We note that Gr\"{o}bner bases could also be used  for online decoding. In \cite{augot2009decoding} Augot et al. proposed an online Gr\"{o}bner basis decoding algorithm which consists
of computing for each received word a Gr\"{o}bner basis of the syndrome ideal with the Newton identities,  in order   to express the coefficients of the  error locator polynomial in terms
of the syndromes of the received word.

\subsection*{Our results}

In what follows, we list the main original contributions of this paper.

\begin{itemize}
\item We give a general result on the structure of the general error
locator polynomial for {\emph all} cyclic codes, which generalizes
Theorem 1 of \cite{chang2010algebraic}.
\item We provide some results on the general error locator polynomial
for several families of binary cyclic codes with $t\leq 3$, adding
theoretical evidence to the sparsity of the general error locator
polynomial for infinite classes of codes.
\item As a first direct consequence to $t=2$, we theoretically justify the sparsity of the
general error locator polynomial for all the five remaining cases
which were not classified in \cite{orsini2007general}. 
%
\item As a second direct consequence to $t=3$, we classify the cyclic codes with $n<63$ and
$t=3$ according to the shape of their general error locator polynomial, justifying theoretically the results
for all cases except three. For the remaining three cases, the general error locator polynomial can be computed explicitly.

\item Finally, we provide some results on the complexity of bounded-distance decoding of some classes of cyclic codes.
Some results are conditioned to a conjecture and others hold unconditionally.
\end{itemize}

\subsection*{Paper organization}
The remainder of the paper is organized as follows. 
In Section \ref{sec:preliminaries} we review some definitions concerning cyclic codes:
we recall Cooper's philosophy, the notion of general error locator
polynomial  for cyclic codes and how this polynomial can be use to decode.
In Section \ref{sec:generalshape} we show our main result, Theorem \ref{SparseTwoSyndromes} 
which provides a general structure of the error locator polynomial for all cyclic codes.
In Section \ref{sec:specialcases} we show how the previous results can be used to obtain a sparse
representation of the general error locator polynomial for three of the five exceptional cases of Theorem 28(\cite{orsini2007general})
and we give new results on the structure of the general error locator polynomial for some {\emph infinite} classes of binary cyclic codes with $t=2$,
including the two remaining cases.
In Section \ref{sec:$t=3$} we provide a general error locator polynomials for all binary cyclic codes with $t=3$
and $n<63$. A sparse representation is theoretically justified for all cases, except three. We also give new results on the structure of the general error locator polynomial for
some {\emph infinite} classes of binary cyclic codes with $t=3$
In Section \ref{sec:conclusions}, we draw some conclusions.

\section{Preliminaries}\label{sec:preliminaries}

In this section we review standard notation.
The reader is referred to \cite{CGC-cd-book-macwilliamsTOT},
\cite{CGC-cd-book-petersonI} and \cite{CGC-cd-book-handbook}
for general references on coding theory.
Throughout the paper, $n$ will denote  an odd number $n\geq 3$.
Vectors are denoted by bold lower-case letters.
\subsection{Some Algebraic Background and Notation}
Let $q=p^s$, where $p\geq 2$ is any prime and $s\geq 1$ is any positive integer. For us, $\FF_q$ denotes the finite
field with $q$ elements. 

Sometimes we will deal with rational expressions of the kind $\frac{f}{g}$, with $f,g\in \FF_q[x_1,\ldots,x_\ell]$ for some
$\ell\geq 1$. When we evaluate this expression on any point $P\in (\FF_q)^\ell$ it is possible that
$g(P)=0$. However, our rational expressions are evaluated only in points such that if $g(P)=0$ then also $f(P)=0$,
and when this happens we always use the convention that $\frac{f(P)}{g(P)}=0$.

\subsection{Cyclic codes}
A linear code $C$ is a cyclic code if it is invariant under any cyclic shift
of the coordinates. Cyclic codes  have been extensively studied in coding theory
for their useful algebraic properties.
We only consider $[n,k,d]_q$ cyclic codes with
$(n,q)=1$, that is $n$ and $q$ are coprime. Let $R=\FF_q[x]/(x^n-1)$,
each vector $\vc \in {\FF_q}^n$ is associated to a polynomial $c_0+ c_1x + \dots + c_{n-1}x^{n-1}
\in R$, and it is easy to prove that cyclic codes of length $n$ over $\FF_q$ are ideals in $R$.
Let $\FF_{q^m}$ be the splitting field of $x^n-1$  over $\FF_q$,
and let $\alpha$ be  a primitive $n$-th root of unity over $\FF_q$, then it holds
$x^n-1= \prod_{i=0}^{n-1}(x-\alpha^i)$.
Let $g(x) \in \FF_q[x]$ be the generator polynomial of an
$[n,k,d]_q$-cyclic code $C$, i.e. the monic polynomial of degree $n-k$ such that
$\langle g(x) \rangle = C$. It is well-known that $g(x)$ divides $x^n-1$ and
the set $\tilde{S}_C=\{i_1, \dots, i_{n-k}\mid g(\alpha^{i_j})=0,\; j=1,\dots,n-k\}$ is called the
{\emph{complete defining set}} of $C$. Also, the roots of unity $\{\alpha^i \mid i \in \tilde{S}_C \}$ are called the {\emph{zeros}}
of the cyclic code $C$. Notice that the complete defining set permits to specify a cyclic code. 
By this fact, we can write a parity-check matrix for $C$  as an $(n-k) \times n$ matrix $H=\{h_{j\ell}\}_{j,\ell}$
over $\F$ such that   $h_{j\ell}=\alpha^{\ell  i_j}$, where  $\alpha$ is 
a fixed primitive $n$th roots of unity, $i_j \in \tilde{S}_C$ and $\ell=0,\dots, n-1$.
As the complete defining set is partitioned into cyclotomic classes, any subset of $\tilde{S}_C$ containing 
at least one element per cyclotomic class is sufficient to specify the code. We call such a set 
a {\emph{defining set}} of $C$. We will use $S_C$ to denote a defining set which is not necessarily a complete defining set.

\subsection{Cooper's philosophy}

In this section we describe the so-called {\emph{Cooper's philosophy}} approach to decode cyclic codes
up to their true error correction capability \cite{mora2009decoding}. 
The high level idea here is to reduce the decoding problem to
that of solving a polynomial system of equations where the unknowns are the error locations and the error values.

Given an
$[n,k,d]_q$ code $C$, we recall that the {\emph error correction capability} of $C$
is $t=\lfloor(d-1)/2 \rfloor$, where   $\lfloor x \rfloor$ denotes the greatest
integer less than or equal to $x$.
Let $\vc, \vr, \ve \in (\F_q)^n$ be, respectively, the
transmitted codeword, the received vector and the error vector, then 
$\vr = \vc + \ve$. If we apply the previously indicated parity-check matrix $H$ to $\vr$, we get 
$H\vr = H (\vc + \ve) = H \ve = \vs \in (\F_{q^m})^{n-k}$. 
The vector $\vs$ is called {\emph{syndrome vector}}. Recall that a {\emph{correctable syndrome}} is a syndrome 
corresponding to an error vector $\ve$ with Hamming weight $\mu \leq t$.
If there is an   error vector $\ve$ of weight $\mu \leq t$, such that
$${\bf e}=(\underbrace{0,\dots,0}_{l_1-1},
\underset{\underset{l_1}{\uparrow}}{e_{l_1}},
0,\dots,0,\underset{\underset{l_k}{\uparrow}}{e_{l_k}}, 0,\dots,0,
\underset{\underset{l_{\mu}}{\uparrow}}{e_{l_\mu}},
\underbrace{0,\dots,0}_{n-1-l_\mu}) \,,$$
then the set $L=\{l_1, \dots, l_{\mu}\}\subset\{0,\ldots,n-1\}$ is the set of the 
{\emph{error positions}}, the set $\{\alpha^l \mid l \in L\}$
is the set of the {\emph{error locations}}, and $\{e_{l_1}, \dots, e_{l_\mu}\}$ is the set of the {\emph{error values}}.
The {\emph{classical error locator polynomial}} associated to the error $\ve$ 
is the polynomial $\sigma_e(z)= \prod_{l \in L}(1-z \alpha^l)$, i.e. the polynomial having 
as zeros the reciprocal of the error locations; whereas the {\emph{plain error locator polynomial}} is the polynomial  
$
\mathbf{L}_e(z)= \prod_{l \in L}(z-\alpha^{l})
$.

It is clear that finding $\mathbf{L}_e(z)$ (or $\sigma(z)$) is equivalent to 
finding the error $\ve$: once it is found the decoding process concludes by applying the Chien search 
\cite{CGC-cd-art-ChienSearch} to find the error positions.
Associating variables $Z=(z_1, \dots, z_t)$ to the error locations, $X=(x_1,\dots,x_{n-k})$ to the syndromes components
$s_i$, and $Y=(y_1,\dots,y_t)$ to the error values, the Cooper's philosophy approach consists of writing a polynomial
system of equations (and the corresponding solutions form the {\emph{syndrome variety}} or CHRT-{\emph{variety}}), as follows.
To accomodate for the case when the error number, $|L|$, is strictly less than $t$, it is convenient to add {\emph ghost error
locations}, that is, $t-|L|$ zero values for the $z_i$'s. As we will show in a second, Cooper's equations will remain valid
with this assumption. The syndromes are rewritten in terms of power sums functions 
$$
x_j = \sum_{h=1}^{t} y_h z_h^{i_j}, \quad 1 \leq j \leq n-k ;
$$
to specify where these values lie the equations
$$
x_j^{q^m} -  x_j, \quad z_h^{n+1} - z_h, \quad y_h^{q-1} -1,  \quad  {\mbox{ for }}  1 \leq h \leq t, \;  1 \leq j \leq n-k, 
$$
are added, and finally, using
$$
z_h  z_{h'} p_{h,h'}, \quad {\mbox{ where }} \quad 
p_{h,h'}= (z_h^n - z_{h'}^n)/(z_h- z_{h'}), 1 \leq h < h' \leq t, 
$$
it is guaranteed that the  locations (if not zero) are all distinct. 
Studying the structure of the syndrome variety and its \GR \ basis, 
it was proved  (\cite{CGC-cd-art-gelp1}) the existence of a 
{\emph{general error locator polynomial}} for every cyclic code.
In more details, a general error locator polynomial $\mathcal{L}$
for an $[n,k,d]_q$ cyclic code $C$ is a polynomial in $\F_q[X,z]$,
with $X=(x_1,\dots,x_{n-k})$ such that
\begin{itemize}
\item  $\mathcal{L}(X,z)=z^t+a_{t-1}(X)z^{t-1}+\cdots+a_0(X)$,
      with $a_j\in \FF_q[X]$,  $0 \leq j \leq t-1$;
\item given a correctable syndrome
${\bf{s}}=(\bar x_1,\dots,
\bar x_{n-k})$, 
if we evaluate the $X$ variables in ${\bf{s}}$, then the $t$ roots
of $\mathcal{L}({\bf{s}},z)$ are the $\mu$ error locations plus  zero counted
with multiplicity $t-\mu$.
\end{itemize} 
Notice  that the second property is equivalent to
$\mathcal{L}({\bf{s}},z)=z^{t-\mu} \mathbf{L}_e(z)$, where $e$ is the error associated
to syndrome $\bf{s}$.
Also, the general error locator polynomial $\mathcal{L}$ does not depend on the errors actually occurred,
but it is computed in a preprocessing fashion once and for all. As a consequence, the decoding algorithm is made up of the following steps:
\begin{itemize}
 \item Compute the syndrome vector ${\bf{s}}$ corresponding to the received vector ${\bf{r}}$;
 \item Evaluate $\mathcal{L}$ at the syndromes ${\bf{s}}$;
 \item Apply the Chien search on $\mathcal{L}(\vs,z)$ to compute the error locations $\{\alpha^l \mid l \in L\}$ ;
 \item Compute the error values $\{e_{l} \mid l \in L\}$.
 \end{itemize} 
This approach is efficient as long as the evaluation of $\mathcal{L}$ is efficient (see Section \ref{sec:complexity}).


%

\subsection{General error locator polynomials for some binary  cyclic codes}

Here we recall some techniques used in \cite{orsini2007general} to efficiently compute a general error locator polynomial
for binary cyclic codes without using \Gr \ bases.   
In this section we only deal with binary cyclic codes and we will often  shorten ``binary cyclic (linear) code'' to ``code'' when
it is clear from the context.

Let $C$ be a code with error capability $t=2$, ${\bf s}$ a correctable syndrome, and
$\bar z_1$ and $\bar z_2$ the (possibly ghost) error locations corresponding to the syndrome ${\bf s}$.
Then, by definition we know that  ${\mathcal{L}}(X,z)=z^2+a z+b$, where $a,b\in \FF_2[X]$, and
$b({\bf s})=\bar z_1 \bar z_2$, $a({\bf s})=\bar z_1+\bar z_2$.
Moreover, there are exactly two errors if and only if $b({\bf s})\not= 0$, and
there is exactly one  error if and only if $b({\bf s})= 0$ and
$a({\bf s})\not=0$ (in this case the error location is $a({\bf s})$).

\begin{definition}
We denote by $\mathcal{V}^\mu$ the set of syndromes corresponding to $\mu$ errors,
with $0 \le \mu \le t$.
The set of {\em correctable syndromes}  $\mathcal{V}$ is given by the (disjoint) union
of the syndromes $\mathcal{V}^\mu$ for $\mu=0, \dots, t$ (corresponding to $0$, $\dots$,  $t$ errors, respectively),
i.e. $ \mathcal{V}= \mathcal{V}^0\sqcup \mathcal{V}^1\sqcup \dots, \sqcup \mathcal{V}^t$.
\end{definition}
\begin{definition}
Let $C$ be a code with $t=2$.
A polynomial  ${\sf h}(X)$ in $\FF_2[x_1,\ldots,x_{n-k}]=\FF_2[X]$ is called 
a {\bf bordering polynomial} if
$$
   {\sf h}({\mathcal V}^0)={\sf h}({\mathcal V}^1)=\{0\},
   \quad {\sf h}({\mathcal V}^2)=\{1\} \,.
$$
\end{definition}
%
%
The importance of bordering polynomials
comes from the following facts:
\begin{proposition}[\cite{orsini2007general}]
\label{prop:b*}
Let $C$ be a code with $t=2$.
Let ${\mathcal L}^*=z^2+a(X) z+b^*(X)$ be a polynomial in
$\FF_2[x_1,\ldots,x_{n-k},z]=\FF_2[X,z]$, s.t.
${\mathcal L}^*({\bf s},z)$ is an error locator polynomial for any
weight-$2$ error corresponding to a syndrome ${\bf s} \in \mathcal{V}^2$.
Let ${\mathcal L}'_C= z+a(X)$ be an error locator polynomial for any weight-$1$ error,
and  ${\sf h}$ be a bordering polynomial. Then
$$
  {\mathcal L}_C(X,z)\,=\; z^2+a(X) z+b^*(X) {\sf h}(X)
$$
is a general error locator polynomial for $C$.
\end{proposition}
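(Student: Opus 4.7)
The plan is to verify directly that ${\mathcal L}_C$ satisfies the two defining properties of a general error locator polynomial for $t=2$: (i) it has the correct form $z^2 + \tilde a(X) z + \tilde b(X)$ with coefficients in $\FF_2[X]$, which is immediate from the construction with $\tilde a = a$ and $\tilde b = b^*{\sf h}$; and (ii) for every correctable syndrome $\mathbf{s}$, the evaluation ${\mathcal L}_C(\mathbf{s},z)$ factors as $z^{2-\mu}\mathbf{L}_e(z)$, where $\mu \in \{0,1,2\}$ is the weight of the error corresponding to $\mathbf{s}$. I would proceed by splitting on $\mu$ and using the defining property of the bordering polynomial ${\sf h}$ to toggle the $b^*(X){\sf h}(X)$ summand on or off.

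First, for $\mathbf{s} \in \mathcal{V}^2$, the definition of a bordering polynomial gives ${\sf h}(\mathbf{s}) = 1$, and hence
\[
{\mathcal L}_C(\mathbf{s},z) \;=\; z^2 + a(\mathbf{s})\,z + b^*(\mathbf{s}) \;=\; {\mathcal L}^*(\mathbf{s},z),
\]
which by hypothesis coincides with $\mathbf{L}_e(z)$. Next, for $\mathbf{s} \in \mathcal{V}^1$, the bordering property gives ${\sf h}(\mathbf{s}) = 0$, so
\[
{\mathcal L}_C(\mathbf{s},z) \;=\; z^2 + a(\mathbf{s})\,z \;=\; z\bigl(z + a(\mathbf{s})\bigr) \;=\; z\cdot {\mathcal L}'_C(\mathbf{s},z) \;=\; z\,\mathbf{L}_e(z),
\]
where the last equality uses that ${\mathcal L}'_C$ is by hypothesis an error locator polynomial for every weight-1 syndrome. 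Finally, for $\mathbf{s} \in \mathcal{V}^0$ we again have ${\sf h}(\mathbf{s}) = 0$, so ${\mathcal L}_C(\mathbf{0},z) = z^2 + a(\mathbf{0})\,z$; we need to show this equals $z^2$.

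The one step requiring some care is the zero-error case: we need $a(\mathbf{0})=0$. I would argue this from the very definition of a general error locator polynomial, which forces ${\mathcal L}'_C(\mathbf{0},z) = z$ (the single ghost location at $0$), thereby pinning down $a(\mathbf{0}) = 0$; equivalently, one reads the hypothesis on ${\mathcal L}'_C$ as guaranteeing the correct evaluation on $\mathcal{V}^0 \sqcup \mathcal{V}^1$, as is consistent with how the general error locator polynomial is set up earlier in the section. Once this is in place, the three cases exhaust $\mathcal{V} = \mathcal{V}^0 \sqcup \mathcal{V}^1 \sqcup \mathcal{V}^2$ and the proof is complete. The whole argument is essentially a short case analysis; the substantive content is that the bordering polynomial ${\sf h}$ acts as a $\{0,1\}$-valued selector that activates the constant-term contribution $b^*$ only where two actual error locations must be encoded.
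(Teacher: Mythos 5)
Your verification is correct, and there is nothing in the paper to compare it against: Proposition~\ref{prop:b*} is imported from \cite{orsini2007general} and stated here without proof. Your three-way case split on $\mu\in\{0,1,2\}$, using ${\sf h}$ as a $\{0,1\}$-selector, is exactly the natural argument, and you rightly flag the only delicate point, namely that the hypotheses as literally stated only constrain $a$ on $\mathcal{V}^1\sqcup\mathcal{V}^2$, so $a(\mathbf{0})=0$ must be read into the hypothesis on ${\mathcal L}'_C$ (as is harmless in practice, since $a$ is typically the syndrome $x_1$, which vanishes on the zero-error syndrome).
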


\begin{remark}
Notice that if $S_C$ contains $0$ then, as shown in \cite[p. 1105]{orsini2007general}, a bordering polynomial
for $C$ is given by $(1+x_0)$ with $x_0$ the syndrome corresponding to $0$. 
\end{remark}

Another useful definition from \cite{orsini2007general} is that of 
$\spl$ code.
\begin{definition} \label{defstrict}
Let $C$ be a code, we say that $C$ is a
{\bf{strictly-two-error-correcting code}} (briefly
s2ec code) if, when we know that exactly two errors have
occurred (that is $\mu=2$), then we can correct them.
\end{definition}

Trivially every code with distance $d \geq 5$ is a s2ec code,
however there are  s2ec codes with distance $d=3$,
e.g. the code defined by $n=9$ and $S_C=\{ 1 \}$ (see \cite{orsini2007general}
for details).

In \cite{orsini2007general} a complete classification of all binary cyclic code with $t \leq 2$ and $ n\leq 63$,
with respect to their defining sets is given. 

\begin{theorem}[\cite{orsini2007general}]
\label{theostruct4}
Let $C$ be an $[n,k,d]$ code with $d \in  \{5,6\}$
and $7 \leq n <63$ ($n$ odd).
Then there are seven cases:
\begin{enumerate}
\item[$1.$] either $n$ is such that the code with defining set $\{0,1\}$ has distance
      at least~$5$,
\item[$2.$] or $C$ is a BCH code, i.e. $S_C=\{1,3\}$,
\item[$3.$] or $C$ admits a defining set of type $S_C=\{1,n-1,l\}$, with $l=0,n/3$,
\item[$4.$] or $C$ admits a defining set of type $S_C=\{1,n/l  \}$,
      for some $l\geq 3$,
\item[$5.$] or $C$ is one of the following exceptional cases
  \begin{itemize}\label{missingCases}
   \item[$a)$] $n=31$, $S_C=\{1,15\}$,
   \item[$b)$] $n=31$, $S_C=\{1,5\}$,
   \item[$c)$] $n=45$, $S_C=\{1,21\}$,
   \item[$d)$] $n=51$, $S_C=\{1,9\}$,
   \item[$e)$] $n=51$, $S_C=\{0,1,5\}$.
  \end{itemize}
\item[$6.$] or $C$ is a sub-code of  one of the codes of the above cases,
\item[$7.$] or $C$ is equivalent to one of the codes of the above cases.
\end{enumerate}
\end{theorem}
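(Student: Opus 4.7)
The plan is to prove this by exhaustive enumeration over odd lengths $7 \leq n < 63$, combined with equivalence and sub-code reductions to cut the casework down to a finite manageable list. I want to stress that the theorem is essentially a classification result, so the argument is structural/combinatorial rather than algebraic magic.

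First I would list, for each odd $n$ in the range, the $2$-cyclotomic cosets modulo $n$; these index the irreducible factors of $x^n-1$ over $\FF_2$, and any defining set is (up to replacing elements by others in the same coset) a union of coset representatives. I would then enumerate, for each $n$, all minimal defining sets $S_C$ such that the associated cyclic code has distance $d\in\{5,6\}$. Distance verification is done by a lower bound (BCH, Hartmann--Tzeng or Roos when applicable) combined with an explicit minimum-weight codeword for the upper bound; in the few cases where standard bounds fail, a direct computation with the parity-check matrix in $\FF_{2^m}$ settles the matter. Equivalence under the multiplier action $i \mapsto a i \pmod n$ for $\gcd(a,n)=1$ removes a further layer of redundancy, and the sub-code relation $S_{C'}\supseteq S_C \Rightarrow C'\subseteq C$ reduces everything to ``maximal'' $t=2$ codes at each length.

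With that setup, each item (1)--(5) of the theorem is established as follows. For (1), I would characterize for which $n$ the code with $S_C=\{0,1\}$ already reaches distance $5$ (this is the ``repeated-root'' style expansion of the length-$n$ code with a single zero, and amounts to a simple parity condition on the $2$-adic structure of $n$). Items (2), (3), (4) are families: (2) is $\{1,3\}$ whose distance is $\geq 5$ by the BCH bound; (3) and (4) are verified by exhibiting the appropriate Hartmann--Tzeng or Roos configuration inside the defining set (the hypotheses $l\in\{0,n/3\}$ and $l\geq 3$ are exactly what make such a configuration fit). Items (6) and (7) are trivially closed operations on codes with $t\geq 2$. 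What remains, by the enumeration of step one, is a finite list of ``leftover'' codes that do not fit any of families (1)--(4) nor arise as sub-codes/equivalents thereof; a direct check shows this list consists precisely of the five cases in (5).

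The main obstacle is completeness: one must argue that the enumeration in step one is exhaustive and that every residual defining set really does coincide with one of the five listed cases up to equivalence and sub-code containment. I would expect to handle this by a systematic case-by-case inspection per length $n$, which is tedious but mechanical; for a few intermediate lengths (notably $n=31, 45, 51$, where many small-weight coset unions produce $d\in\{5,6\}$) a computer-algebra verification using the defining set and the BCH/HT bounds is the most reliable way to avoid overlooking a case. Everything else (distance certifications for the five exceptions, closure of the seven items under the cyclotomic multiplier group) is routine given the groundwork.
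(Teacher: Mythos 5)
This statement is imported verbatim from \cite{orsini2007general} (it is their Theorem~28), and the present paper gives no proof of it, so there is no in-paper argument to compare yours against; what can be judged is whether your plan would actually establish the classification. Your overall strategy --- enumerate the $2$-cyclotomic cosets modulo each odd $n$ with $7\leq n<63$, list all unions of cosets whose code has $d\in\{5,6\}$, quotient by the multiplier equivalences $i\mapsto ai \pmod n$ with $\gcd(a,n)=1$ and by the subcode relation, and then match the survivors against the families $1.$--$4.$ and the five exceptions in $5.$ --- is indeed the right, and essentially the only, way to prove a result of this kind, and it is in the spirit of how the source obtains it (a computer-aided search per length, with theoretical descriptions of the recurring families).

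However, as written the proposal is a plan rather than a proof, and the gap sits exactly where you yourself place ``the main obstacle'': the entire content of the theorem is the completeness of the list, and that is deferred to an unexecuted ``systematic case-by-case inspection'' and an unspecified computer run whose output is never exhibited. Until that finite verification is actually carried out, nothing is proved. Two of the sketched theoretical justifications are also wrong or unjustified as stated. For case $1.$, whether the code with defining set $\{0,1\}$ has $d\geq 5$ is not ``a simple parity condition on the $2$-adic structure of $n$''; by Lemma~\ref{s2ec_cara2} (with $l=1$) it is equivalent to the code with defining set $\{1\}$ being s2ec, i.e.\ to the injectivity of $\{z_1,z_2\}\mapsto z_1+z_2$ on unordered pairs of distinct $n$-th roots of unity --- a genuinely arithmetic condition on $n$, not a parity check. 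For case $4.$, the complete defining set of $\{1,n/l\}$ is the coset of $1$ together with multiples of $n/l$, and this does not in general contain a Hartmann--Tzeng or Roos configuration certifying $d\geq 5$ (try $n=15$, $S_C=\{1,5\}$); the assertion that ``$l\geq 3$ is exactly what makes such a configuration fit'' is not substantiated and, where these families do have $d\geq5$, the reason is the subgroup structure of the zeros $\alpha^{n/l}$ among the $n$-th roots of unity rather than a consecutive-roots bound. So even granting the computer enumeration, the theoretical coverage you sketch for families $1.$ and $4.$ would have to be redone.
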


According to this classification, the authors in \cite{orsini2007general}
provide an explicit general error locator polynomial for all the codes in the first five cases. 
For the codes in cases $6.$ and $7.$, such locators can be gotten from those of the previous cases, as described in the next theorem

\begin{theorem}[\cite{orsini2007general}]
\label{th:equivsubcodes}
Let $C$,$C'$ and $C''$ be three codes with the same length $n$ and the same correction capability $t$.
Let $\mathcal{L}_C$,$\mathcal{L}_{C'}$ and $\mathcal{L}_{C''}$ denote their respective general error locator polynomials.

If $C$ is a subcode of $C'$, then we can assume $\mathcal{L}_C=\mathcal{L}_{C'}$.

If $C$ is equivalent to $C''$ via the coordinate permutation function $\phi:(\FF_2)^n\rightarrow (\FF_2)^n$, then we can decode $C$ using $\mathcal{L}_{C''}$ (via $\phi$).
\end{theorem}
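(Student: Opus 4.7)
The plan is to treat the two claims separately, both by direct verification from the definition of a general error locator polynomial.

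For the subcode claim, I would first use the standard fact that $C\subseteq C'$ is equivalent to $g_{C'}(x) \mid g_C(x)$, equivalently $\tilde S_{C'}\subseteq \tilde S_C$, so one can pick defining sets with $S_{C'}\subseteq S_C$. Consequently the syndrome vector $X'=(x_j)_{j\in S_{C'}}$ associated with $C'$ is a sub-tuple of the syndrome vector $X=(x_j)_{j\in S_C}$ associated with $C$, and $\mathcal{L}_{C'}(X',z)$ is well defined as an element of $\FF_2[X,z]$. The key observation is that if $\ve\in(\FF_2)^n$ is any error vector of weight $\mu\le t$, the syndrome $\vs'$ of $\ve$ relative to $C'$ is exactly the projection of the syndrome $\vs$ of $\ve$ relative to $C$ onto the indices $S_{C'}$. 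Since $C$ and $C'$ share the same correction capability $t$, $\vs'$ is automatically a correctable syndrome for $C'$, and by the defining property of $\mathcal{L}_{C'}$ one has $\mathcal{L}_{C'}(\vs',z)=z^{t-\mu}\mathbf{L}_e(z)$. Hence the same polynomial, read in the larger variable set $X$, satisfies the defining property of a general error locator polynomial for $C$.

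For the equivalence claim, let $\phi$ be the coordinate permutation with $\phi(C)=C''$, and fix a parity-check matrix $H''$ for $C''$. Given a received word $\vr=\vc+\ve$ for $C$ with $\mathrm{wt}(\ve)=\mu\le t$, apply $\phi$ to obtain $\phi(\vr)=\phi(\vc)+\phi(\ve)$, which is a received word for $C''$ with error $\phi(\ve)$ of the same weight. Compute the syndrome $\vs''=H''\phi(\vr)$, evaluate $\mathcal{L}_{C''}(\vs'',z)$, and apply Chien search to recover the error locations of $\phi(\ve)$ in $C''$, say $\{\alpha^{l}\mid l\in L''\}$; then the error positions of $\ve$ in $C$ are $L=\phi^{-1}(L'')$, and the error values are obtained from $\vs''$ in the usual manner since the values themselves are invariant under $\phi$. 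Thus decoding $C$ reduces to decoding $C''$ by pre/post-composing with $\phi$ and $\phi^{-1}$.

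The main obstacle, such as it is, is purely notational: keeping track of which variable tuple $\mathcal{L}_{C'}$ and $\mathcal{L}_{C''}$ live in, and being explicit about the containment $S_{C'}\subseteq S_C$ in part one and the conjugation by $\phi$ in part two. No nontrivial algebraic input is needed beyond the definitions recalled in Section~\ref{sec:preliminaries}; the result is essentially a compatibility statement saying that the general error locator polynomial is monotone with respect to code inclusion and equivariant under permutation equivalence.
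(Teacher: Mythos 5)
Your argument is correct; note that the paper states this theorem as a quoted result from \cite{orsini2007general} and does not reproduce a proof of it, so there is no internal proof to compare against. Your two verifications --- that choosing defining sets with $S_{C'}\subseteq S_C$ lets $\mathcal{L}_{C'}$, read in the larger variable tuple of $C$, satisfy the defining property of a general error locator polynomial for $C$ (using that the two codes share the same $t$, so every weight-$\mu\leq t$ error yields a correctable syndrome for $C'$), and that decoding commutes with a coordinate permutation $\phi$ since error weights and values are $\phi$-invariant --- are exactly the standard arguments behind the cited result.
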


It is straightforward to see that any code $C$ (up to equivalence)
covered in Theorem~\ref{theostruct4} has a general error locator polynomial of type
$\mathcal{L}(X)=z^2+x_1 z + b(X)$, where
$x_1$ is the syndrome corresponding to $1$, since $1 \in S_C$ in all the cases.
So what remains to compute is $b(X)$.  For 
all the first four cases the shape of $b(X)$ can be theoretically determined and proved to be sparse.
The codes in $5.$ of Theorem~\ref{theostruct4} are those for which the general error locator polynomials
are not  theoretically justified  within \cite{orsini2007general}, and
are solely obtained with a \GR\ basis calculation.

\begin{remark}\label{remark_bordering}
When the defining set is not complete, the general error locator polynomial will not contain $n-k$ $X$ variables, but a smaller number.
If the defining set is given as small as possible, then there is only one syndrome per cyclotomic class and we call such syndromes
{\bf primary syndromes}. For example in Theorem \ref{theostruct4}, the defining sets of cases $1.,2.$ and $5.$ correspond only to primary
syndromes, while for cases $3.$ and $4.$ one syndrome could be unnecessary. The reason why we keep a formally unnecessary syndrome is that
our aim is to provide a sparse description of our locators in polynomial ring $\mathbb{F}_{q}[x_1,\ldots,x_{n-k},z]$ and to do that we are obviously
authorized to use all the ring variables. 
From now on we reserve the letter $r$ to denote the number of syndromes we are actually working on and 
so $r$ will be at least the number of primary syndromes and at most $n-k$. In particular, we will assume
$\mathcal{V}\subset(\F_{q^m})^r$ and $X=(x_1,\ldots, x_r)$.
\end{remark}


\section{A general description for the locator polynomial} \label{sec:generalshape}

In this section we give a 
new general result on the structure of the
error locator polynomial for {\emph all} cyclic  codes over $\F_q$.

Let $R_n=\{\alpha^i \mid i=0,\ldots,n-1\}$. Let us denote with $T_{n,t}$ the following set (compare with \cite[p. 131]{chang2010algebraic} and Def.~13 of 
\cite{orsini2007general})
$$
T_{n,t}=\;\{(\alpha^{l_1},\ldots,\alpha^{l_\mu},0,\ldots,0) \mid 0\leq l_1<\cdots< l_\mu<n,\, 0\leq\mu\leq t\}\subset(R_n\cup\{0\})^{t}.
$$
Let C be a cyclic code over $\FF_q$, with length $n$ and correction
capability $t$, defined by
$S_C=\{i_1,\dots, i_{r} \}$ and let $x_j$ be the syndrome corresponding to $i_j$ for $j\in\{1,\dots r\}$.
The following theorem (Theorem \ref{SparseTwoSyndromes})generalizes Theorem 1 of \cite{chang2010algebraic}, which dealt with the case where 
the code could be defined by only one syndrome.
Here we provide a description of the shape of the coefficients 
of a general error locator polynomial for cyclic codes over $\FF_q$.
We recall that these coefficients are polynomials in the syndrome variables $X$.
When they are evaluated at a correctable syndrome, corresponding 
to an error of weight $\mu \leq t$, they can be expressed as the elementary
symmetric functions on the error locations $z_1, \dots, z_\mu$ and zero (with multiplicity $t-\mu$), since the latters are the roots of the locator.
By definition of elementary symmetric functions, they can then be expressed as elementary symmetric polynomials in $\mu$ variables on the $z_1,\ldots,z_\mu$.
\indent
We will need the existence of a polynomial representation for arbitrary functions from $(\FF_q)^n$ to $\FF_q$.
This is not unique and can be obtained in several ways, including multivariate interpolation ([9]).
We report a standard formulation in the following lemma.
\begin{lemma} [{ \cite[p.~26]{mullen_panario2013}}]
\label{LemmaRepresentativePol}
Let $f:\FF_q^n\rightarrow \FF_q$. Then $f$ can be represented by a polynomial in $\FF_q[x_1,\ldots,x_n]$, that is, there exists a polynomial $P\in\FF_q[x_1,\ldots,x_n]$ such that
$P(b_1,\ldots,b_n)=f(b_1,\ldots,b_n)$ for all $(b_1,\ldots,b_n)\in\FF_q^n$. In particular the polynomial
$$
\sum_{(a_1,\ldots,a_n)\in\FF_q^n}^{}f(a_1,\ldots,a_n)[1-(x_1-a_1)^{q-1}]\cdots[1-(x_n-a_n)^{q-1}].
$$
represents $f$.
\end{lemma}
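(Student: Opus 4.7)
The plan is to verify the lemma by direct substitution, exploiting the fact that over $\FF_q$ the polynomial $1-(x-a)^{q-1}$ behaves as an indicator of the singleton $\{a\}$.

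First I would invoke Fermat's little theorem in the form $c^{q-1}=1$ for every $c\in\FF_q^*$, together with the convention $0^{q-1}=0$. This immediately shows that for any fixed $a\in\FF_q$, the polynomial $\delta_a(x):=1-(x-a)^{q-1}\in\FF_q[x]$ satisfies $\delta_a(b)=1$ when $b=a$ and $\delta_a(b)=0$ when $b\in\FF_q\setminus\{a\}$. In other words, $\delta_a$ is the characteristic function of $\{a\}$ on $\FF_q$.

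Next I would extend this from one variable to $n$ variables. For any fixed $(a_1,\ldots,a_n)\in\FF_q^n$, consider the product
$$\Delta_{(a_1,\ldots,a_n)}(x_1,\ldots,x_n)=\prod_{j=1}^{n}\bigl[1-(x_j-a_j)^{q-1}\bigr].$$
By the one-variable computation, evaluating at $(b_1,\ldots,b_n)\in\FF_q^n$ gives $\prod_j \delta_{a_j}(b_j)$, which is $1$ precisely when $a_j=b_j$ for every $j$, and $0$ otherwise. Thus $\Delta_{(a_1,\ldots,a_n)}$ is the characteristic function of the singleton $\{(a_1,\ldots,a_n)\}\subset\FF_q^n$.

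Finally, I would substitute this into the candidate polynomial
$$P(x_1,\ldots,x_n)=\sum_{(a_1,\ldots,a_n)\in\FF_q^n}f(a_1,\ldots,a_n)\,\Delta_{(a_1,\ldots,a_n)}(x_1,\ldots,x_n).$$
Evaluating at an arbitrary $(b_1,\ldots,b_n)\in\FF_q^n$, the previous step shows that every term in the sum vanishes except the one indexed by $(a_1,\ldots,a_n)=(b_1,\ldots,b_n)$, and that term contributes $f(b_1,\ldots,b_n)\cdot 1$. Hence $P(b_1,\ldots,b_n)=f(b_1,\ldots,b_n)$, proving that $P$ represents $f$.

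There is essentially no serious obstacle here: the whole argument reduces to Fermat's little theorem plus a bookkeeping observation about products of indicator polynomials. The only point that deserves care is the treatment of the case $c=0$ (so that $0^{q-1}=0$ is not confused with a convention borrowed from characteristic zero); once that edge case is spelled out, the rest is a one-line verification.
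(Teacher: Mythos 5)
Your proof is correct and is the standard argument: $1-(x-a)^{q-1}$ is the indicator polynomial of $\{a\}$ by Fermat's little theorem (and the fact that $0^{q-1}=0$ since $q-1\geq 1$, which needs no special convention), so the displayed sum interpolates $f$ pointwise. The paper itself offers no proof of this lemma --- it simply cites it from the literature --- and your argument coincides with the classical one given in the cited reference, so there is nothing to add.
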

\noindent The next two lemmas clarify some links between syndromes and error locations which will be essential in our proof of Theorem \ref{SparseTwoSyndromes}.
\begin{lemma}\label{LemmaSymmetricPol}
Let $\sigma \in \FF_q[y_1,\dots, y_t]$
be a symmetric function. Then there exists $a\in\FF_q[X]$ such that for $(\bar{x}_1,\ldots,\bar{x}_r)\in \mathcal{V}^\mu$
$$
a(\bar{x}_1,\ldots,\bar{x}_r)=\sigma({z}_1,\ldots,{z}_\mu,0\ldots,0)
$$
with ${z}_1,\ldots {z}_\mu$ the error locations corresponding to $\bar{x}_1\ldots,\bar{x}_r$.
\end{lemma}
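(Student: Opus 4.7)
The plan is to construct $a$ by polynomial interpolation over $\FF_{q^m}$ and then use a Galois-equivariance argument to descend the coefficients to $\FF_q$.

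First, I would define a function $f \colon (\FF_{q^m})^r \to \FF_{q^m}$ by setting $f(\bar{x}_1,\ldots,\bar{x}_r) = \sigma(z_1,\ldots,z_\mu, 0,\ldots,0)$ whenever $(\bar{x}_1,\ldots,\bar{x}_r)\in \mathcal{V}^\mu$ has associated error locations $z_1,\ldots,z_\mu$ (uniquely determined since the syndrome is correctable) and by $f = 0$ elsewhere. Applying the direct analogue of Lemma~\ref{LemmaRepresentativePol} over $\FF_{q^m}$ (replace $q$ by $q^m$ and $n$ by $r$) would then yield a polynomial $a \in \FF_{q^m}[X]$ representing $f$ on all of $(\FF_{q^m})^r$; in particular $a$ satisfies the identity required on $\mathcal{V}$. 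What remains is to show that $a$ can be chosen with coefficients in $\FF_q$.

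For this, I would use the Frobenius $\phi \colon u\mapsto u^{q}$ generating $\mathrm{Gal}(\FF_{q^m}/\FF_q)$ and establish that $f$ is Frobenius-equivariant. Given $(\bar{x}_1,\ldots,\bar{x}_r)\in\mathcal{V}^\mu$ coming from locations $z_1,\ldots,z_\mu$ and values $y_1,\ldots,y_\mu\in\FF_q$, each syndrome satisfies
\[
\bar{x}_j^{\,q}=\Bigl(\sum_h y_h z_h^{i_j}\Bigr)^{\!q}=\sum_h y_h (z_h^{\,q})^{i_j},
\]
since $y_h^{\,q}=y_h$, so $(\bar{x}_1^{\,q},\ldots,\bar{x}_r^{\,q})\in\mathcal{V}^\mu$ with locations $z_1^{\,q},\ldots,z_\mu^{\,q}$. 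Because $\sigma$ has coefficients in $\FF_q$, the analogous identity gives $\sigma(z_1^{\,q},\ldots,z_\mu^{\,q},0,\ldots,0)=\sigma(z_1,\ldots,z_\mu,0,\ldots,0)^{\,q}$, hence $f(\bar{x}^{\,q})=f(\bar{x})^{\,q}$ on $\mathcal{V}$. Outside $\mathcal{V}$ both sides vanish because $\phi$ restricts to a bijection on $\mathcal{V}$, so equivariance holds on the whole of $(\FF_{q^m})^r$.

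Finally, I would apply $\phi$ coefficientwise to $a$. Using the explicit Lagrange-type formula one gets
\[
a^{\phi}(X)=\sum_{\bar{a}\in(\FF_{q^m})^r} f(\bar{a})^{\,q}\prod_{j=1}^{r}\bigl[1-(x_j-a_j^{\,q})^{q^m-1}\bigr],
\]
and reindexing this sum via the bijection $a_j\mapsto a_j^{\,1/q}$ of $\FF_{q^m}$, together with the equivariance identity $f(\bar{a}^{\,1/q})^{\,q}=f(\bar{a})$, yields $a^{\phi}=a$, so $a\in\FF_q[X]$. I expect the main obstacle to be precisely this last bookkeeping step: one must verify that extending $f$ by zero outside $\mathcal{V}$ genuinely preserves Frobenius-equivariance on all of $(\FF_{q^m})^r$ (not just on $\mathcal{V}$), since that is what the explicit formula needs in order to descend the coefficients; the identification of $(\bar{x}_1^{\,q},\ldots,\bar{x}_r^{\,q})$ as an element of the same stratum $\mathcal{V}^\mu$ is the crux, and it relies crucially on the error values lying in $\FF_q$.
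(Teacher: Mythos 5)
Your proof is correct, but it takes a genuinely different route from the paper's. The paper's argument is a two-step reduction: first it observes that for the \emph{elementary} symmetric functions $\sigma_1,\dots,\sigma_t$ the claim is exactly the existence of the coefficients of a general error locator polynomial, which is already guaranteed by the Gr\"obner-basis construction of \cite{CGC-cd-art-gelp1}; then it invokes the fundamental theorem on symmetric functions to write $\sigma=H(\sigma_1,\dots,\sigma_t)$ and sets $a=H(a_1,\dots,a_t)$. You instead build $a$ from scratch: interpolate the function $\bar{x}\mapsto\sigma(z_1,\dots,z_\mu,0,\dots,0)$ over $\FF_{q^m}$ via the analogue of Lemma \ref{LemmaRepresentativePol}, and then descend the coefficients to $\FF_q$ by showing Frobenius-equivariance of the interpolated function and reindexing the Lagrange sum. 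Your descent argument is sound: the key points --- that $(\bar{x}_1^q,\dots,\bar{x}_r^q)$ is again a correctable syndrome in the same stratum $\mathcal{V}^\mu$ with locations $z_1^q,\dots,z_\mu^q$ (which works because the $z_h^q$ are still distinct $n$-th roots of unity and the error values lie in $\FF_q$), and that $\phi$ permutes $\mathcal{V}$ so the zero-extension stays equivariant --- are all verified correctly. Two small remarks: you should say explicitly that the symmetry of $\sigma$ is what makes $f$ well-defined, since a correctable syndrome determines the error locations only as an unordered set; and you need $\sigma\in\FF_q[y_1,\dots,y_t]$ for the identity $\sigma(z_1^q,\dots)=\sigma(z_1,\dots)^q$, which the hypothesis supplies. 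What each approach buys: the paper's proof is shorter because it leans on the previously established existence theorem for the general error locator polynomial; yours is self-contained, avoids both that existence theorem and the fundamental theorem of symmetric functions, and, when specialized to $\sigma=\sigma_i$, actually gives an independent elementary proof that the locator coefficients can be taken in $\FF_q[X]$.
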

\begin{proof}
We claim that the statement is obvious for elementary symmetric functions, as follows. Let $\sigma_1\ldots \sigma_t$ be the elementary symmetric functions in $\FF_q[y_1,\dots, y_t]$.
The existence of a general error locator polynomial for any cyclic code guarantees that, for any $1\leq i \leq t$, for any $\sigma_i$ there is 
$a_i\in\FF_q[x_1,\dots, x_r]$ such that for $(\bar{x}_1,\ldots,\bar{x}_r)\in \mathcal{V}^\mu$
$$
a_i(\bar{x}_1,\ldots,\bar{x}_r)=\sigma_i({z}_1,\ldots,{z}_\mu,0\ldots,0)
$$
 with ${z}_1,\ldots {z}_\mu$ the error locations corresponding to $\bar{x}_1\ldots,\bar{x}_r$.

For the more general case of any symmetric function  $\sigma\in\FF_q[y_1,\ldots,y_t]$, we need the fundamental theorem on symmetric functions, which 
shows the existence of a polynomial $H\in \FF_q[y_1\ldots,y_t]$ such that $\sigma(y_1,\ldots,y_t)=H(\sigma_1(y_1,\ldots,y_t),\ldots,\sigma_t(y_1\ldots,y_t))$.
We can define $a=H(a_1,\ldots,a_t)\in\F_q[X]$. So for $(\bar{x}_1,\ldots,\bar{x}_r)\in \mathcal{V}^\mu$ and the corresponding locations ${z}_1,\ldots {z}_\mu$, we have
\begin{equation*}
\begin{split}
&\sigma(z_1,\ldots,z_\mu,0,\ldots,0)  = H(\sigma_1(z_1,\ldots,z_\mu,0,\ldots,0),\ldots,\sigma_t(z_1,\ldots,z_\mu,0,\ldots,0))=\\
& = H(a_1(\bar{x}_1,\ldots,\bar{x}_r),\ldots,a_t(\bar{x}_1\ldots,\bar{x}_r))=a(\bar{x}_1,\ldots,\bar{x}_r)\, . \qedhere
\end{split}
\end{equation*}
\end{proof}

\begin{lemma}\label{lemma_zero_over_a_line}
 Let $h\in\FF_q[X]$ with $\deg_{x_i}h<q$ for all $i=1,\ldots,r$, and $h(\bar{x}_1,\bar{x}_2,\ldots, \bar{x}_r)=0$ 
 for all $(\bar{x}_1,\ldots, \bar{x}_r)\in (\FF_q)^{r}$ with $\bar{x}_1\neq 0$. Let $l\in\FF_q[X]$ and $g(x_2\ldots,x_r)\in\FF_q[x_2,\ldots,x_r]$ such that 
 $h(X)=x_1l(x_1,x_2,\ldots,x_r)+g(x_2,\ldots,x_r)$. Then 
 $$
 h=g(x_2,\ldots,x_r)\quad \textrm{or} \quad h =(1-x_1^{q-1})\cdot g(x_2,\ldots,x_r).
 $$
\end{lemma}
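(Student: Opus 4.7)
The plan is to reduce the multivariate problem to a univariate one in $x_1$ by freezing $(x_2,\ldots,x_r)$, exploit the fact that a univariate polynomial of degree $< q$ over $\FF_q$ that vanishes on $\FF_q\setminus\{0\}$ is essentially determined, and then lift the conclusion back to a polynomial identity via the uniqueness of reduced representations.

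First I would write $h$ as a polynomial in $x_1$ with coefficients in $\FF_q[x_2,\ldots,x_r]$:
$$h(X)=\sum_{i=0}^{q-1}h_i(x_2,\ldots,x_r)\,x_1^i,$$
which is possible because $\deg_{x_1}h<q$. For any fixed $(\bar x_2,\ldots,\bar x_r)\in\FF_q^{r-1}$, the specialization $p(x_1):=h(x_1,\bar x_2,\ldots,\bar x_r)\in\FF_q[x_1]$ has degree less than $q$ and, by hypothesis, vanishes at every element of $\FF_q\setminus\{0\}$. Since $\prod_{a\in\FF_q\setminus\{0\}}(x_1-a)=x_1^{q-1}-1$ has degree exactly $q-1$, the polynomial $p$ must be a scalar multiple $c\,(x_1^{q-1}-1)$ with $c\in\FF_q$, that is $p(x_1)=-c+c\,x_1^{q-1}$.

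Comparing coefficients, this forces $h_i(\bar x_2,\ldots,\bar x_r)=0$ for every $1\leq i\leq q-2$ and $h_{q-1}(\bar x_2,\ldots,\bar x_r)=-h_0(\bar x_2,\ldots,\bar x_r)$. Now these identities hold for every choice of $(\bar x_2,\ldots,\bar x_r)\in\FF_q^{r-1}$; since each $h_i$ has $x_j$-degree $<q$ for every $j\geq 2$ (inherited from $h$), uniqueness of the reduced polynomial representative of a function $\FF_q^{r-1}\to\FF_q$ (Lemma \ref{LemmaRepresentativePol} applied to the difference of two such representatives) gives the \emph{polynomial} identities $h_i=0$ for $1\leq i\leq q-2$ and $h_{q-1}=-h_0$ in $\FF_q[x_2,\ldots,x_r]$.

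Combining these yields
$$h(X)=h_0(x_2,\ldots,x_r)-h_0(x_2,\ldots,x_r)\,x_1^{q-1}=(1-x_1^{q-1})\,h_0(x_2,\ldots,x_r).$$
Finally, from the decomposition $h(X)=x_1\,l(X)+g(x_2,\ldots,x_r)$ one reads off $g(x_2,\ldots,x_r)=h(0,x_2,\ldots,x_r)=h_0(x_2,\ldots,x_r)$, so $h=(1-x_1^{q-1})\,g$. The alternative $h=g$ arises as the degenerate subcase: substituting it into $h=(1-x_1^{q-1})g$ forces $x_1^{q-1}g=0$, hence $g=0$ and $h=0$, which is simultaneously of both announced forms. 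The argument is essentially bookkeeping; the only step that requires a moment of care is the passage from pointwise to polynomial equality in the coefficients, which is exactly where the hypothesis $\deg_{x_i}h<q$ for all $i$ is needed.
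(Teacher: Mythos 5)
Your proof is correct, and it reaches the conclusion by a somewhat different route than the paper. The paper writes down the candidate $\bar h=(1-x_1^{q-1})\cdot g(x_2,\ldots,x_r)$ and verifies $h=\bar h$ by checking equality of values at every point of $(\FF_q)^r$, splitting on $\hat x_1=0$ versus $\hat x_1\neq 0$; you instead derive the form of $h$ by expanding in powers of $x_1$, specializing to a univariate polynomial, and using that a polynomial of degree $<q$ vanishing on all of $\FF_q\setminus\{0\}$ must be a scalar multiple of $x_1^{q-1}-1$. Both arguments ultimately rest on the same principle --- a polynomial of degree $<q$ in each variable is determined by its values on $(\FF_q)^{r}$ --- which you invoke coefficientwise for the $h_i$ and the paper invokes once for $h-\bar h$. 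Your version buys a marginally sharper statement: $h=(1-x_1^{q-1})g$ holds unconditionally, and the alternative $h=g$ in the lemma is realized only in the degenerate case $h=g=0$; the paper's preliminary case split on whether $h\in\FF_q[x_2,\ldots,x_r]$ obscures this. One small quibble: Lemma \ref{LemmaRepresentativePol} as stated gives \emph{existence} of a representing polynomial, not uniqueness of the reduced representative; the fact you actually need is that a polynomial with degree $<q$ in each variable vanishing on all of $(\FF_q)^{r-1}$ is identically zero --- standard, and used just as implicitly by the paper, but it deserves to be named as such rather than attributed to that lemma.
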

\begin{proof}
Clearly, for any $h$ the two polynomials $l$ and $g$ are uniquely determined.\\
If $h(X)\in\FF_q[x_2\ldots,x_r]$, trivially, we have that $h(X)=g(x_2,\ldots,x_r)$. So we can suppose that $h(X)\notin \FF_q[x_2\ldots,x_r]$,
and let us define the polynomial $\bar{h}(X)=(1-x_1^{q-1})\cdot g(x_2,\ldots,x_r)$.
Note that $\deg_{x_i}\bar{h}<q$ for all $i=1,\ldots,r$. We claim that $h=\bar{h}$. Since the degree w.r.t. each variable $x_i$ of both the
polynomials $h$ and $\bar{h}$ is less than $q$, to prove our claim we suffice to show that $h(\hat{X})=\bar{h}(\hat{X})$ for all 
$\hat{X}\in(\FF_q)^r$. Let us distinguish the cases $\hat{x}_1=0$ and $\hat{x}_1\neq 0$.\\
If $\hat{x}_1=0$ then $h(\hat{X})=0\cdot  l(0,\hat{x}_2,\ldots, \hat{x}_r)+g(\hat{x}_2,\ldots,\hat{x}_r)=g(\hat{x}_2,\ldots,\hat{x}_r)$ and
$\bar{h}(\hat{X})=(1-0)\cdot g(\hat{x}_2,\ldots,\hat{x}_r)=g(\hat{x}_2,\ldots,\hat{x}_r)$. So, in this case $h(\hat{X})=\bar{h}(\hat{X})$.\\
Otherwise, let $\hat{x}_1\neq 0$. By hypothesis, $h(\hat{X})=0$. On the other hand, $\bar{h}(\hat{X})=(1-1)\cdot g(\hat{x}_2,\ldots,\hat{x}_r)=0$.
So, also in this case $h(\hat{X})=\bar{h}(\hat{X})$.
\end{proof}

\begin{theorem}\label{SparseTwoSyndromes}
Let C be a cyclic code over $\FF_q$, with length $n$ and correction
capability $t$, defined by
$S_C=\{i_1,\dots, i_{r} \}$ and let $x_j$ be the syndrome corresponding to $i_j$ for $j\in\{1,\dots r\}$.
Let $\sigma \in \FF_q[y_1,\dots, y_t]$ be a symmetric homogeneous function of total degree $\delta$, with $\delta$ a multiple of $i_1$,
and let $\lambda$ be a divisor of $n$. Then there exist polynomials $a\in \FF_q[X]$, $g\in\FF_q[x_2,\ldots,x_r]$, some non-negative integers $\delta_2,\ldots,\delta_r$ and 
some univariate polynomials $F_{h_2,\dots, h_{r}} \in \mathbb{F}_q[y]$ such that for any $0\leq\mu\leq t$,
for any $(\bar{x}_1, \bar{x}_2,\dots, \bar{x}_{r})\in\mathcal{V}^\mu$ and the corresponding error locations $z_1,\ldots,z_\mu$, we have
\begin{equation}\label{defgelp}
a(\bar{x}_1, \bar{x}_2,\dots, \bar{x}_{r})=\sigma(z_1,\ldots,z_\mu,0,\ldots,0) 
\end{equation}

and
\begin{equation}\label{sparseClaim}
        a(X) = x_1^{\delta/{i_1}}\sum_{h_r=0}^{\delta_r} \left(
                  \left( \frac{x_{r}}{x_1^{i_{r}}} \right)^{h_r}
              \cdots
              \sum_{h_{2}=0}^{\delta_{2}} \left(
                  \left( \frac{x_{2}}{x_1^{i_{2}}} \right)^{h_{2}}
                     F_{h_2,\dots,h_{r}}(x_1^\lambda)  \right) \cdots \right)+ (1-x_1^{q^m-1})\cdot g(x_2,\ldots,x_r) \, ,
\end{equation}
 where $\delta_i\leq q^m-1$, $\deg F_{h_2,\dots, h_{r}} \leq (q^m-1)/\lambda$.
\begin{proof}
We observe that (\ref{defgelp}) is immediate by Lemma \ref{LemmaSymmetricPol}.
To prove (\ref{sparseClaim}) we first show the case $\bar{x}_1\neq 0$ and then the general case.\\

{\bf Case $\bar{x}_1\neq 0$} \\
Let us consider the following map $A:\{ X \in\mathcal{V}\mid x_1\neq 0 \} \rightarrow \FF_{q^m}$ defined by 
\begin{equation}\label{defA}
A(\bar{x}_1, \bar{x}_2,\dots, \bar{x}_{r})=
\frac{\sigma(z_1,\dots, z_{\mu},0,\dots,0)}{\bar{x}_1^{\delta/{i_1}}}, 
\end{equation}
where $(z_1,\dots, z_{\mu},0,\dots,0)$ is the element of $T_{n,t}$ associated to the syndrome
vector $(\bar{x}_1, \bar{x}_2,\dots, \bar{x}_{r})$.
We claim that $A$ depends only on $(\bar{x}_1^\lambda, \bar{x}_2/\bar{x}_1^{i_{2}},\ldots, \bar{x}_{r}/\bar{x}_1^{i_{r}})$.
If our claim is true, then we have that 
\begin{equation}\label{shapeA}
A(\bar{x}_1, \bar{x}_2,\dots, \bar{x}_{r})=f(\bar{x}_1^\lambda, \bar{x}_2/\bar{x}_1^{i_{2}},\ldots, \bar{x}_{r}/\bar{x}_1^{i_{r}}) 
\end{equation}
for a function $f:(\F_{q^m})^r \rightarrow \FF_{q^m}$, and so, by Lemma \ref{LemmaRepresentativePol}, we can view $f$ as a polynomial in $\FF_{q^m}[x_1,\dots, x_{r}]$.
Since $\mathcal{V} \subset (\FF_{q^m})^r$, we can also view $A$ as a (non-unique) polynomial $A(X)\in \FF_{q^m}[X]$. 
On the other hand, (\ref{defA}) and Lemma \ref{LemmaSymmetricPol} show that $A(X) x_1^{\delta/i_1}\in \FF_{q^m}[X]$ equals a polynomial
$a\in\FF_q[X]$ and so also $A(X)$ can be chosen in $\FF_q[X]$. Therefore, by (\ref{shapeA}) also $f$ can be chosen in $\FF_q[X]$.\\
Let $\delta_2=\deg_{x_2}(f),\ldots,\delta_r=\deg_{x_r}(f)$. Then, by collecting the powers of $x_r$ in $f$, we will have
$f=\sum_{h=0}^{\delta_r} x_r^h f_{h}$, for some $f_h$'s, which are polynomials in $\FF_q[x_1,\ldots,x_{r-1}]$. We observe that for any
$2\leq i\leq r-1$ we have that, for any $0\leq h\leq \delta_r$,  
$\deg_{x_i}(f_h) \leq \deg_{x_i}(f)=\delta_i$ and there is at least one $\sf{h}$ such that 
$\deg_{x_i}(f_{\sf{h}}) =\delta_i$. We can repeat this argument on all $f_h$'s by collecting powers of $x_{r-1}$ and iterate on the other
$X$ variables, $x_1$ excluded, until we obviously obtain the following formal description
\begin{equation}\label{sparseClaimf}
        f(x_1,x_2,\ldots,x_r) \quad = \quad \sum_{h_r=0}^{\delta_r} x_r^{h_r}\sum_{h_{r-1}=0}^{\delta_{r-1}}
                   x_{r-1}^{h_{r-1}}
              \cdots
              \sum_{h_{2}=0}^{\delta_{2}} 
                   x_2 ^{h_{2}}
                     F_{h_2,\dots,h_{r}}(x_1) \, ,
\end{equation}
 where any $F_{h_2,\dots,h_{r}}$ is a univariate polynomial in $\FF_q[x_1]$.\\
From (\ref{defA}), (\ref{shapeA}) and (\ref{sparseClaimf}) we directly obtain the restriction of (\ref{sparseClaim}) to the case
$x_1\neq 0$, considering that $x_j^{q^m}=x_j$ implies $(1-x_j^{q^m-1})=0$, $\delta_i\leq q^m-1$ and $\deg F_{h_1,\dots h_r} \leq (q^m-1)/\lambda$. 

We now prove our claim that gives (\ref{shapeA}).
Let us take $(\tilde{x}_1, \dots, \tilde{x}_{r})$
and $(\bar{x}_1, \dots, \bar{x}_{r})$ such that
$\tilde{x}_k^\lambda= \bar{x}_k^\lambda$,
and $\tilde{x}_j/\tilde{x}_1^{i_j} = \bar{x}_j/\bar{x}_1^{i_j}$,
for $j=2,\dots, r$.
\\
The first relation implies
\begin{equation}\label{x1rel}
\tilde{x}_1 = \beta \bar{x}_1,
\end{equation}
for some $\beta$ such that $\beta^\lambda= 1$.
\\
Substituting $\tilde{x}_1$ for $\beta \bar{x}_1$ in the second relation for $j=2,\dots, r$, we obtain
\begin{equation}\label{x2rel}
\frac{\tilde{x}_j}{\tilde{x}_1^{i_j}} = \frac{\bar{x}_j}{\bar{x}_1^{i_j}}
\implies
\frac{\tilde{x}_j}{(\beta \bar{x}_1)^{i_j}} = \frac{\bar{x}_j}{\bar{x}_1^{i_j}}
\implies \tilde{x}_j = \beta^{i_j} \bar{x}_j.
\end{equation}
\\
Suppose that $(\tilde{x}_1, \dots, \tilde{x}_{r})\in\mathcal{V}^{\mu}$
and $(\bar{x}_1, \dots, \bar{x}_{r})\in \mathcal{V}^{\mu^\prime}$, with $\mu, \mu^\prime \le t$. From (\ref{x1rel}), we get
$\tilde{y}_1 \tilde{z}_1^{i_1} + \dots + \tilde{y}_{\mu} \tilde{z}_{\mu}^{i_1} =
\beta (\bar{y}_1 \bar{z}_1^{i_1}+ \dots + \bar{y}_{\mu^\prime} \bar{z}_{\mu^\prime}^{i_1}) =
\beta \bar{y}_1 \bar{z}_1^{i_1} + \dots + \beta \bar{y}_{\mu^\prime} \bar{z}_{\mu^\prime}^{i_1}$, where $\bar{z}_i$'s and $\bar{y}_i$'s are the locations 
and the error values, respectively, associated to $(\bar{x}_1, \dots, \bar{x}_{r})$; and similarly for $\tilde{z}_i$'s and $\tilde{y}_i$'s.
Also, from (\ref{x2rel}) we get
$\tilde{y}_1 \tilde{z}_1^{i_j} + \dots
+ \tilde{y}_{\mu} \tilde{z}_{\mu}^{i_j} =
\beta^{i_j} (\bar{y}_1 \bar{z}_1^{i_j} + \dots +
             \bar{y}_{\mu^\prime} \bar{z}_{\mu^\prime}^{i_j})$,
for $j=2,\dots, r$.
\\
Let us now take
$\hat{y}_j= \bar{y}_j$ and
$\hat{z}_j= \beta \bar{z}_j$, for  $j=1,\dots, \mu^\prime$.
Since the $\hat{z}_j$ are distinct valid error locations
(i.e. $\hat{z}_j^n=1$, for $j=1,\dots, \mu^\prime$) 
we have that
their syndromes are
\begin{eqnarray*}
\hat{x}_j  & = &
\hat{y}_1 \hat{z}_1^{i_j} + \dots + \hat{y}_{\mu^\prime} \hat{z}_{\mu^\prime}^{i_j} =
\beta^{i_j} \bar{y}_1 \bar{z}_1^{i_j} + \dots
  + \beta^{i_j} \bar{y}_{\mu^\prime} \bar{z}_{\mu^\prime}^{i_j} =  \\
& = & \beta^{i_j}
( \bar{y}_1 \bar{z}_1^{i_j} + \dots + \bar{y}_{\mu^\prime} \bar{z}_{\mu^\prime}^{i_j}) =
\tilde{y}_1 \tilde{z}_1^{i_j} + \dots
  + \tilde{y}_{\mu} \tilde{z}_{\mu}^{i_j} = \tilde{x}_j, \\
& \text{for\ } & j  =1,\dots, r \, .
\end{eqnarray*}
\\
Hence $(\hat{x}_1,\dots, \hat{x}_{r})=
(\tilde{x}_1,\dots, \tilde{x}_{r})$, which implies
that their corresponding locations and values
must be the same and unique,
because $\mu, \mu^\prime \le t$.
Therefore
$\mu=\mu^\prime$, 
$\{\tilde{z}_1, \dots, \tilde{z}_{\mu} \} =
\{ \beta \bar{z}_1, \dots, \beta \bar{z}_{\mu} \}$,
and
$\{\tilde{y}_1, \dots, \tilde{y}_{\mu} \} =
\{ \bar{y}_1, \dots, \bar{y}_{\mu} \}$,
from which, using the fact that $\sigma$ is a symmetric homogeneous function of degree $\delta$, we have

\begin{multline*}
A(\tilde{x}_1,\dots, \tilde{x}_{r})  = 
\frac{\sigma(\tilde{z}_1, \dots, \tilde{z}_{\mu}, 0,\ldots, 0)}
     {(\tilde{y}_1 \tilde{z}_1^{i_1}+\dots +
       \tilde{y}_{\mu} \tilde{z}_{\mu}^{i_1})^{\delta/{i_1}}} =
       \frac{\sigma(\beta \bar{z}_1,\dots, \beta \bar{z}_{\mu}, 0,\ldots, 0)}
       {(\bar{y}_1 (\beta \bar{z}_1)^{i_1} + \dots + \bar{y}_{\mu} (\beta \bar{z}_{\mu})^{i_1})^{\delta/{i_1}}} = \\
  \frac{\beta^{\delta}\sigma(\bar{z}_1,\dots, \bar{z}_{\mu}, 0,\ldots, 0)}
       {\beta^\delta (\bar{y}_1 \bar{z}_1^{i_1} + \dots + \bar{y}_{\mu} \bar{z}_{\mu}^{i_1})^{\delta/{i_1}}} = 
 \frac{\sigma(\bar{z}_1,\dots, \bar{z}_{\mu},  0,\ldots, 0)}
     {(\bar{y}_1 \bar{z}_1^{i_1} + \dots + \bar{y}_{\mu} \bar{z}_{\mu}^{i_1})^{\delta/{i_1}}} =
A(\bar{x}_1,\dots, \bar{x}_{r}) \, .
\end{multline*}
\\

{\bf General case } \\
Let us consider the map $A$ and the polynomial $a$ introduced in the case $\bar{x}_1\neq 0$. Let us extend $A$ to all points in 
$(\FF_{q^m})^r$ defining $A(\tilde{X})=\tilde{x}_1^{\delta/{i_1}} a(\tilde{x}_1,\ldots,\tilde{x}_r)$ when 
$\tilde{X}\in\FF_{q^m}\setminus\mathcal{V}$ with $\tilde{x}_1\neq 0$, and $A(\tilde{X})$ any element in $\FF_{q^m}$ when
$\tilde{X}=(0,\tilde{x_2},\ldots,\tilde{x_r})\in\FF_{q^m}$. Since $A$ is a map from $(\FF_{q^m})^r$ to $\FF_{q^m}$, it is a polynomial function and
the associated polynomial $A(X)$ with $deg_{x_i}(A(x))<q$, for all $i =1,\ldots, r$, is unique. Now, let us consider the polynomial 
$h(X)=a(X)-x_1^{\delta/{i_1}}A(X)$. Thanks to what we proved in the case $x_1\neq 0$, we have that $h(X)$ satisfies the hypothesis of 
Lemma \ref{lemma_zero_over_a_line}. Since $h(X)\notin\FF_{q^m}[x_2,\ldots,x_r]$, by Lemma \ref{lemma_zero_over_a_line}, we have that
$h(X)= (1-x_1^{q^m})\cdot g(x_2,\ldots, x_r)$ for some $g(x_2,\ldots, x_r)\in\FF_{q^m}$. 
So $a(X)=x_1^{\delta/{i_1}}A(X)+(1-x_1^{q^m})\cdot g(x_2,\ldots, x_r)$.
\end{proof}
\end{theorem}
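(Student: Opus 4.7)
Equation~(\ref{defgelp}) follows immediately from Lemma~\ref{LemmaSymmetricPol}. For the shape claim~(\ref{sparseClaim}), the plan is to split into the case $\bar x_1 \neq 0$, where I set up a normalized map and exploit an invariance coming from homogeneity and unique decodability, and then to extend to all of $(\FF_{q^m})^r$ via Lemma~\ref{lemma_zero_over_a_line}.

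On $\{\bar x_1 \neq 0\}\cap \mathcal{V}$ I would define
$$A(\bar x_1,\dots,\bar x_r) \;=\; \frac{\sigma(z_1,\dots,z_\mu,0,\dots,0)}{\bar x_1^{\delta/i_1}},$$
where $(z_1,\dots,z_\mu,0,\dots,0)$ is the unique tuple in $T_{n,t}$ associated with the correctable syndrome $(\bar x_1,\dots,\bar x_r)\in\mathcal{V}^\mu$. The central claim is that $A$ is a function of the reduced coordinates $(\bar x_1^\lambda,\, \bar x_2/\bar x_1^{i_2},\,\dots,\, \bar x_r/\bar x_1^{i_r})$ only. Given this, Lemma~\ref{LemmaRepresentativePol} (over $\FF_{q^m}$) represents $A$ by a polynomial in these reduced coordinates with degrees bounded by $q^m-1$ in each; multiplying by $\bar x_1^{\delta/i_1}$ recovers a polynomial equal to $a$ on $\{\bar x_1\neq 0\}$, and collecting powers of $x_r,x_{r-1},\dots,x_2$ successively produces the nested-sum expression with univariate polynomials $F_{h_2,\dots,h_r}(x_1^\lambda)$ of degree at most $(q^m-1)/\lambda$. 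Moreover, Lemma~\ref{LemmaSymmetricPol} guarantees that $a$ can be chosen in $\FF_q[X]$, and hence the coefficients $F_{h_2,\dots,h_r}$ can be taken in $\FF_q[y]$.

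The main obstacle is the invariance claim, which is the heart of the argument. Suppose $(\tilde x_1,\dots,\tilde x_r)$ and $(\bar x_1,\dots,\bar x_r)$ have identical reduced coordinates. Then $\tilde x_1 = \beta\,\bar x_1$ with $\beta^\lambda=1$, and $\tilde x_j = \beta^{i_j}\bar x_j$ for $j=2,\dots,r$. Setting $\hat z_j := \beta\,\bar z_j$ and $\hat y_j := \bar y_j$ for $j=1,\dots,\mu'$ gives valid distinct error locations (their $n$-th powers are unchanged) with syndromes $\sum_i \hat y_i\,\hat z_i^{\,i_j} = \beta^{i_j}\sum_i \bar y_i\,\bar z_i^{\,i_j} = \beta^{i_j}\bar x_j = \tilde x_j$ for every $j$. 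Uniqueness of bounded-distance decoding at weight $\le t$ (applied to both $\mu$ and $\mu'$) forces $\mu=\mu'$, $\{\tilde z_i\} = \{\beta\bar z_i\}$, and $\{\tilde y_i\}=\{\bar y_i\}$. Since $\sigma$ is symmetric and homogeneous of degree $\delta$, the numerator of $A(\tilde x)$ equals $\beta^\delta$ times the numerator of $A(\bar x)$, while the denominator $(\tilde x_1)^{\delta/i_1} = (\beta\bar x_1)^{\delta/i_1} = \beta^\delta\,\bar x_1^{\delta/i_1}$ contributes the same $\beta^\delta$; the factors cancel, giving $A(\tilde x)=A(\bar x)$.

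For the general case I would arbitrarily extend the polynomial constructed above to an $A(X)\in\FF_{q^m}[X]$ with $\deg_{x_i}A<q^m$ and form
$$h(X) \;:=\; a(X) - x_1^{\delta/i_1}\,A(X) \;\in\; \FF_{q^m}[X].$$
By the previous case $h$ vanishes on every point with $\bar x_1\neq 0$, so it satisfies the hypotheses of Lemma~\ref{lemma_zero_over_a_line}, yielding $h(X) = (1-x_1^{q^m-1})\,g(x_2,\dots,x_r)$ for some $g$. Rearranging produces the full identity~(\ref{sparseClaim}) with the stated degree bounds, concluding the proof.
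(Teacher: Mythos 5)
Your proposal is correct and follows essentially the same route as the paper's own proof: the same normalized map $A$, the same invariance argument via $\beta$ with $\beta^\lambda=1$ combined with uniqueness of bounded-distance decoding and homogeneity of $\sigma$, and the same reduction of the general case to Lemma~\ref{lemma_zero_over_a_line} through $h(X)=a(X)-x_1^{\delta/i_1}A(X)$. The only (welcome) differences are cosmetic: your degree bound $\deg_{x_i}A<q^m$ and the factor $(1-x_1^{q^m-1})$ in the general case are stated more carefully than in the paper's text, which contains small typos at those two points.
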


\begin{corollary}\label{SparseCorollary}
Let C be cyclic code over $\FF_q$ as in Theorem \ref{SparseTwoSyndromes}.
Then the coefficients of the general error
locator polynomial can be written in the form
given by the previous theorem.
\end{corollary}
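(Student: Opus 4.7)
The plan is to apply Theorem~\ref{SparseTwoSyndromes} separately to each coefficient of the general error locator polynomial, exploiting the fact that these coefficients are nothing but the elementary symmetric polynomials in the error locations (padded with ghost zeros).

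First, I would recall that a general error locator polynomial has the form $\mathcal{L}(X,z)=z^t+a_{t-1}(X)z^{t-1}+\cdots+a_0(X)$ and that, by its defining property, $\mathcal{L}(\mathbf{s},z)=z^{t-\mu}\prod_{j=1}^{\mu}(z-z_j)$ for every $\mathbf{s}\in\mathcal{V}^\mu$, where $z_1,\dots,z_\mu$ are the corresponding error locations. Matching the two expressions for $\mathcal{L}(\mathbf{s},z)$ via Vieta's formulas yields
\begin{equation*}
a_{t-k}(\mathbf{s})=(-1)^{k}\sigma_k(z_1,\dots,z_\mu,0,\dots,0),\qquad k=1,\dots,t,
\end{equation*}
where $\sigma_k\in\F_q[y_1,\dots,y_t]$ denotes the $k$-th elementary symmetric polynomial in $t$ variables. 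Each $\sigma_k$ is a symmetric homogeneous function of total degree exactly $k$, so whenever the divisibility condition of Theorem~\ref{SparseTwoSyndromes} is met I can invoke it with $\sigma=\sigma_k$ and $\delta=k$, absorbing the sign $(-1)^k$ into the univariate polynomials $F_{h_2,\dots,h_r}$, and thereby obtain a polynomial representative of $a_{t-k}(X)$ in the shape (\ref{sparseClaim}).

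The main step is therefore purely bookkeeping: assemble the $t$ coefficients produced by these $t$ applications of the theorem into $\mathcal{L}(X,z)$, observing that the description (\ref{sparseClaim}) is preserved coefficient-by-coefficient. The main obstacle is the divisibility requirement $i_1\mid\delta$ inherited from Theorem~\ref{SparseTwoSyndromes}: since $\sigma_k$ has total degree exactly $k$, we would need $i_1\mid k$ for every $k\in\{1,\dots,t\}$. This holds trivially whenever $1\in S_C$, i.e.\ $i_1=1$, which covers all the cyclic codes treated later in the paper. When $i_1>1$, the obstruction is bypassed either by passing, via code equivalence or an enlargement of the defining set, to a presentation containing the exponent $1$, or by applying the theorem to a suitable power $\sigma_k^{m}$ with $i_1\mid km$ and reading off the sparse form for $a_{t-k}$ itself from the structural constraints on the resulting polynomial.
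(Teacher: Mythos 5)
Your proof matches the paper's (implicit) argument: the corollary is stated there without proof, resting on the observation made just before Theorem~\ref{SparseTwoSyndromes} that the locator coefficients, evaluated at a correctable syndrome, are the elementary symmetric functions of the error locations padded with ghost zeros, so the theorem applies to each $\sigma_k$ (homogeneous of degree $k$) exactly as you describe. Your extra remark about the divisibility hypothesis $i_1\mid k$ is well taken --- the paper passes over it silently, and in all its applications $1\in S_C$ so $i_1=1$ --- though your fallback via powers $\sigma_k^{m}$ would need more care, since a representation of $a_{t-k}^{m}$ in the shape~(\ref{sparseClaim}) does not immediately yield one for $a_{t-k}$ itself.
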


\begin{corollary}
Let C be a code with $t=2$ defined by $S_C=\{i_1,\dots, i_{r} \}$, with $i_1=1$, and let $\mathcal{L}= z^2+x_1 z+b$ be a general error locator polynomial for $C$.
If C is a primitive code, i.e. $n=q^m-1$, then $b=x_1^2A$ with $A\in \FF_{q}[x_2/x_1^{i_2},\dots, x_{r}/x_1^{i_{r}}]$.
\end{corollary}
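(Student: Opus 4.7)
The proof would proceed by specializing Theorem~\ref{SparseTwoSyndromes} to the elementary symmetric function $\sigma(y_1,y_2) = y_1 y_2$, which is homogeneous of total degree $\delta = 2$; since $i_1 = 1$, we trivially have $i_1 \mid \delta$. The constant coefficient $b$ of $\mathcal{L}$ satisfies $b(\bar{X}) = \bar{z}_1 \bar{z}_2$ on $\mathcal{V}$ (interpreted as $0$ when fewer than two errors occur), so $b$ plays the role of the polynomial $a$ in the theorem. Choosing the divisor $\lambda = n = q^m - 1$, the bound $\deg F_{h_2,\ldots,h_r} \leq (q^m - 1)/\lambda = 1$ forces each $F_{\mathbf{h}}$ to be affine, say $F_{\mathbf{h}}(y) = c_{\mathbf{h},0} + c_{\mathbf{h},1}\, y$ with $c_{\mathbf{h},0}, c_{\mathbf{h},1} \in \FF_q$.

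Next, I would evaluate (\ref{sparseClaim}) on a correctable syndrome $(\bar{x}_1,\ldots,\bar{x}_r) \in \mathcal{V}$. Since $\bar{x}_1 \in \FF_{q^m}$, the scalar $\bar{x}_1^{q^m - 1}$ equals $1$ when $\bar{x}_1 \neq 0$ and $0$ otherwise. In the first regime $F_{\mathbf{h}}(\bar{x}_1^{q^m-1}) = c_{\mathbf{h},0} + c_{\mathbf{h},1}$ and the tail term $(1 - \bar{x}_1^{q^m-1}) g(\bar{x}_2,\ldots,\bar{x}_r)$ vanishes, giving
\[
b(\bar{X}) = \bar{x}_1^{2}\, A(\bar{x}_2/\bar{x}_1^{i_2}, \ldots, \bar{x}_r/\bar{x}_1^{i_r}), \quad A(u_2,\ldots,u_r) := \sum_{\mathbf{h}} (c_{\mathbf{h},0} + c_{\mathbf{h},1}) \prod_{j=2}^{r} u_j^{h_j} \in \FF_q[u_2,\ldots,u_r].
\]
This exhibits $b = x_1^2 A$ in the claimed form on all syndromes with $\bar{x}_1 \neq 0$.

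The remaining locus $\bar{x}_1 = 0$ is where the main (mild) obstacle lies, since the residual $(1 - x_1^{q^m-1}) g(x_2,\ldots,x_r)$ could in principle still contribute. The hypothesis that the coefficient of $z$ in $\mathcal{L}$ is \emph{exactly} $x_1$ pins down the binary setting: on $\mathcal{V}$ this coefficient evaluates to $\bar{z}_1 + \bar{z}_2$ while $\bar{x}_1 = \bar{y}_1 \bar{z}_1 + \bar{y}_2 \bar{z}_2$, so all error values must equal $1$ and hence $q = 2$. In binary, $\bar{x}_1 = \bar{z}_1 + \bar{z}_2 = 0$ forces $\mu = 0$, in which case $\bar{X} = \mathbf{0}$ and both $b(\mathbf{0}) = 0$ and $\bar{x}_1^{2}\, A(\cdots) = 0$ under the paper's convention that ratios $0/0$ evaluate to $0$. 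Therefore $b = x_1^2 A$ throughout $\mathcal{V}$, with $A \in \FF_q[x_2/x_1^{i_2}, \ldots, x_r/x_1^{i_r}]$, as claimed.
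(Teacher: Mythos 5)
Your proof is correct and follows essentially the same route as the paper's: both specialize Theorem~\ref{SparseTwoSyndromes} to $\sigma=y_1y_2$ with $\lambda=n=q^m-1$, use primitivity to collapse each $F_{\mathbf{h}}(x_1^n)$ to a constant on the locus $x_1\neq 0$, and dispose of the locus $x_1=0$ by observing that there $\mu=0$ and $b$ vanishes, so the $(1-x_1^{q^m-1})\,g$ tail can be dropped. One small caveat: your claim that the hypothesis $a(X)=x_1$ forces all error values to equal $1$ and hence $q=2$ does not actually follow (a single relation $\bar z_1+\bar z_2=\bar y_1\bar z_1+\bar y_2\bar z_2$ does not pin down $\bar y_1=\bar y_2=1$), but it is also unnecessary: since $n$ is odd, $z_1+z_2=0$ is impossible for distinct $n$-th roots of unity in any characteristic, so $x_1=a(\mathbf{s})=0$ already forces $\mu=0$, which is all you (and the paper) need.
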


\begin{proof}
 Since $t=2$, $x_1$ is zero if and only if there are no errors.
 Then, applying the previous theorem to $C$, we get that $b=x_1^2A$ with $A\in \FF_{q}[x_1^n,x_2/x_1^{i_2},\dots, x_{r}/x_1^{i_{r}}]$.
 On the other hand, since C is primitive, $x_1^n$ is zero when $x_1$ is zero, and it is $1$ when $x_1$
 is not zero. So for $\mu\in\{1,2\}$, $x_1^n=1$ and $b=x_1^2\bar{A}$ with $\bar{A}=A|_{x_1^n=1}$. We claim that $b^*=x_1^2\bar{A}$ is a valid location product also for the case
 $\mu=0$, which follows from the fact that $\mu=0$ if and only if $x_1=0$. 
\end{proof}
\noindent Previous corollary basically shows that in the case $t=2$ the term of the form
$(1-x_1^{q^m-1}) g$ does not appear in the expression of the locator coefficients.




\section{On some classes of codes with \texorpdfstring{$t=2$}{TEXT}} \label{sec:specialcases}

In this section we provide explicit sparse representations for infinite classes of codes with $t=2$, including all the five exceptional cases of Theorem \ref{theostruct4}.
For some of these we also show the connection with Theorem \ref{SparseTwoSyndromes}.

In Section \ref{sec:preliminaries} we have defined a bordering polynomial as a polynomial ${\sf h} \in \FF_2[X]$  such that
${\sf h}(\mathcal{V}^0)={\sf h}(\mathcal{V}^1)=\{ 0 \}$  and  $\bp(\mathcal{V}^2)= \{ 1 \}.$
If $0\in S_C$, then by Remark \ref{remark_bordering} it is trivial to exhibit a bordering polynomial.
If $0\not\in S_C$, we claim that in order to find such an $\bp$  it is sufficient to find a polynomial
$\wbp \in \FF_2[X]$, expressed in terms of primary syndromes only (see Remark \ref{remark_bordering}), such that
$$
\wbp(\mathcal{V}^0)= \wbp(\mathcal{V}^1) = \{ 0 \},  \quad {\mbox{ and  }} \quad 
0 \notin \wbp (\mathcal{V}^2),
$$
and we will call it a {\bf weakly bordering polynomial}.
\begin{lemma}
Let $\wbp$ be a weakly bordering polynomial for $C$. Then $\bp=\wbp/\overline{\wbp}$ is a bordering polynomial for $C$ where
$\overline{\wbp}$ is a rewriting of $\wbp$ using non-primary syndromes. 
\end{lemma}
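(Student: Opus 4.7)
The key observation is that $\wbp$ and $\overline{\wbp}$, although expressed in different sets of syndrome variables (primary vs.\ non-primary), take the \emph{same} value at every point of $\mathcal{V}$; once this is established the bordering conditions drop out by a direct case check, using the evaluation convention for rational expressions introduced at the beginning of Section~\ref{sec:preliminaries}. I therefore split the argument into two parts: identify $\wbp$ and $\overline{\wbp}$ pointwise on $\mathcal{V}$, then verify the three defining conditions of a bordering polynomial for $\bp$.

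For the identification, the engine is the Frobenius structure of the $2$-cyclotomic classes of the defining set. Whenever $i$ and $2^k i\bmod n$ both belong to the complete defining set of $C$, for every syndrome $\mathbf{s}$ one has
\[
x_{2^k i \bmod n}(\mathbf{s}) \;=\; x_i(\mathbf{s})^{2^k}.
\]
A ``rewriting'' of $\wbp$ using non-primary syndromes is, by definition, a syntactic substitution of the form $x_i^{2^k}\mapsto x_{2^k i\bmod n}$, iterated as needed, replacing primary variables of $\wbp$ by equivalent non-primary ones of the same cyclotomic class. By the identity above such substitutions preserve the scalar value at every syndrome, and in particular $\wbp(\mathbf{s})=\overline{\wbp}(\mathbf{s})$ for all $\mathbf{s}\in\mathcal{V}$.

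With this pointwise identification in hand, the bordering conditions for $\bp=\wbp/\overline{\wbp}$ are immediate. If $\mathbf{s}\in\mathcal{V}^0\cup\mathcal{V}^1$ then $\wbp(\mathbf{s})=0$ by the weakly-bordering property and hence $\overline{\wbp}(\mathbf{s})=0$ by the preceding paragraph; the convention of Section~\ref{sec:preliminaries} then gives $\bp(\mathbf{s})=0$. If instead $\mathbf{s}\in\mathcal{V}^2$, the weakly-bordering hypothesis yields $\wbp(\mathbf{s})\neq 0$ and $\overline{\wbp}(\mathbf{s})=\wbp(\mathbf{s})$, so $\bp(\mathbf{s})=1$. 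Thus $\bp$ satisfies the definition of a bordering polynomial.

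The main obstacle, I expect, is the first step: making precise the meaning of ``rewriting'' and checking that for the given $\wbp$ such a substitution is always available. Concretely one has to guarantee that $\wbp$ can be arranged so that each primary variable appearing in it occurs in combinations of the form $x_i^{2^k}$ with $2^k i$ indexing a non-primary syndrome in the same cyclotomic class; once such a normal form is produced, the Frobenius identity above yields the required $\overline{\wbp}$ and the rest of the argument is the straightforward case analysis sketched above.
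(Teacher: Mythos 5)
Your argument is essentially the paper's: establish that $\wbp$ and $\overline{\wbp}$ agree as functions on $\mathcal{V}$ via the Frobenius relation between syndromes in the same $2$-cyclotomic class, and then read off the three bordering conditions using the $0/0=0$ convention. That part is correct and is exactly what the paper does (the paper's own proof is a one-liner ending in ``it is obvious that such an $h$ is a bordering polynomial'').

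The one loose end is the obstacle you flag yourself, and it comes from the direction in which you chose to perform the substitution. Rewriting $x_i^{2^k}\mapsto x_{2^k i\bmod n}$ only applies when the primary variable happens to occur in $\wbp$ as a perfect $2^k$-th power, and there is no reason a given $\wbp$ has that normal form. The paper sidesteps this entirely by substituting in the opposite direction: for each primary syndrome variable $x_j$ occurring in $\wbp$, pick $i$ with $2i\equiv j\pmod n$ (which exists and is unique because $n$ is odd, and satisfies $i\neq j$ since $j\neq 0$ in the relevant case $0\notin S_C$), so that $x_j=x_i^2$ holds on all of $\mathcal{V}$ with $x_i$ a non-primary syndrome of the same cyclotomic class, and replace $x_j$ by $x_i^2$. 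This substitution is always available, needs no hypothesis on the shape of $\wbp$, and the rest of your case analysis then goes through verbatim. So the approach is right; you should just replace your tentative definition of the rewriting by this one to close the gap you identified.
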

\begin{proof}
Let $\wbp$ be a weakly bordering polynomial for $C$. Given a primary syndrome 
$x_j$ occurring in $\wbp$, we can consider the syndrome $x_i$ such that $x_j=x_i^2$, $2i\equiv j\pmod {n}$, and take $\overline{\wbp}$ as the
rewriting of $\wbp$ obtained with the substitution $x_j\rightarrow x_i^2$. This allows us to consider $\bp=\wbp/\overline{\wbp}$, with the usual
convention $0/0=0$. It is obvious that such an $h$ is a bordering polynomial for $C$.
\end{proof}

From now on, in order to write a general error locator polynomial for 
a binary cyclic code $C$ with $t=2$,  we will be satisfied with exhibiting $b^{*}$ (as in Proposition \ref{prop:b*}) 
and $\wbp$
as above.

\noindent The following obvious lemma characterizes s2ec
codes with $1\in S_C$ and it is a direct generalization of Lemma~45 \cite{orsini2007general}.

\begin{lemma}\label{s2ec_cara}
Let  $C$ be a code with length
$n$ and $S_C=\{i_1,i_2,\dots, i_r\}$, with $i_1=1$.
The following statements are equivalent:
\begin{itemize}
\item[a)] for any $\{\tilde{z_1},\tilde{z_2}\}$ and
$\{\bar{z}_1,\bar{z}_2\}$ subsets of $R_n$ such that
$$
{\tilde{z}_1^{i_j}}+{\tilde{z}_2^{i_j}}= {\bar{z}_1^{i_j}}+{\bar{z}_2^{i_j}}
\; \; j=1,\dots, r,
$$
we have \
$\{\tilde{z_1},\tilde{z_2}\}=\{\bar{z}_1,\bar{z}_2\}$
\item[b)]  $C$  is a {\emph{\Special}}.
\end{itemize}
\end{lemma}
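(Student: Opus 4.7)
The plan is to unwind the definitions on both sides of the equivalence, observing that in the binary setting the syndromes of a weight-$2$ error pattern are exactly the power sums of the two error locations at the exponents $i_j\in S_C$. Once this dictionary is established, both implications reduce to an injectivity statement about the map sending an unordered pair $\{z_1,z_2\}\subset R_n$ to the tuple $(z_1^{i_j}+z_2^{i_j})_{j=1,\dots,r}$.

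First I would record the computational fact that if an error vector $\ve$ of weight exactly $2$ has locations $\alpha^{l_1},\alpha^{l_2}$, then over $\F_2$ the syndrome component corresponding to $i_j$ is simply $x_j=\alpha^{l_1 i_j}+\alpha^{l_2 i_j}$, because all nonzero error values are $1$. Setting $z_k=\alpha^{l_k}\in R_n$, the syndrome vector is $(z_1^{i_j}+z_2^{i_j})_{j=1,\dots,r}$, with the $z_k$'s distinct elements of $R_n$.

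For (b) $\Rightarrow$ (a), I would argue contrapositively: if there were distinct $2$-subsets $\{\tilde z_1,\tilde z_2\}$ and $\{\bar z_1,\bar z_2\}$ of $R_n$ yielding identical tuples of power sums at the exponents $i_1,\dots,i_r$, then they correspond to two weight-$2$ error patterns producing the same syndrome. A decoder that knows exactly two errors have occurred would then be unable to decide between them, contradicting the s2ec property. For (a) $\Rightarrow$ (b), I would take any received word whose error is known to have weight exactly $2$, compute its syndrome, and note that by the above dictionary the set of error locations is the unique $2$-subset of $R_n$ producing that syndrome tuple (thanks to (a)); hence the two error positions can be recovered, which is precisely the s2ec condition.

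The only delicate point, which I do not expect to be a real obstacle, is to agree on a reading of ``subset of $R_n$'' so that the two pairs are treated as unordered and consist of distinct elements of $R_n$ (i.e., nonzero and distinct), matching the combinatorial content of ``weight exactly $2$'' in characteristic $2$; if one allows $\tilde z_1=\tilde z_2$ the power sums collapse to zero for every $i_j$, corresponding to a weight-$0$ error, which is outside the s2ec scope and can be handled as a trivial aside. The role of the hypothesis $i_1=1$ is mild here: it guarantees that one of the exponents is the identity, so that no periodicity of $z\mapsto z^{i_j}$ on $R_n$ can collapse distinct locations to the same power everywhere, but even without it the argument is just a literal reading of the definitions.
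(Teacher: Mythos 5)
Your proof is correct, and it matches the paper's (implicit) reasoning: the paper states this lemma without proof, calling it ``obvious'' and a direct generalization of Lemma~45 of \cite{orsini2007general}, and the intended justification is precisely your observation that in characteristic $2$ the syndrome of a weight-$2$ error with locations $\{z_1,z_2\}\subset R_n$ is the tuple $(z_1^{i_j}+z_2^{i_j})_{j=1,\dots,r}$, so that condition~(a) is literally the injectivity of the syndrome map on weight-$2$ patterns, which is the definition of s2ec. Your side remarks (that the pairs must be genuine $2$-subsets, and that the hypothesis $i_1=1$ is not actually needed for the equivalence) are also accurate.
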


The following lemma is a generalization of Lemma~$43$ in \cite{orsini2007general}.

\begin{lemma}\label{s2ec_cara2}
Let $C$ be a code with $S_C=\{ 0,1,l \}$
with correction capability $t$.
Let $C'$ be a code with the same length of $C$ and $S_C'=\{ 1,l \}$.
Then the correction capability of $C$ is $t=2$ if and only if $C'$ is 
{\emph{s2ec}}.
\end{lemma}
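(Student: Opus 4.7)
The plan is to combine Lemma \ref{s2ec_cara}, which characterises s2ec of $C'$ as the injectivity of the map sending an unordered pair $\{z_1,z_2\}\subset R_n$ to its $C'$-syndrome $(z_1^{i_j}+z_2^{i_j})_{i_j\in\{1,l\}}$, with the observation that $C$ is essentially the even-weight subcode of $C'$: the extra coordinate $0\in S_C$ contributes the overall parity check $s_0=\sum r_i$, and in the binary setting a weight-$2$ error automatically satisfies $s_0 = 1+1 = 0$. Thus the $C$-syndrome and the $C'$-syndrome carry the same information on weight-$2$ errors, and the two conditions should line up.

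\emph{Forward direction.} Assume the correction capability of $C$ is $t=2$, i.e.\ $d(C)\geq 5$. Take pairs $\{\tilde z_1,\tilde z_2\}$, $\{\bar z_1,\bar z_2\}\subset R_n$ with $\tilde z_1^{i_j}+\tilde z_2^{i_j}=\bar z_1^{i_j}+\bar z_2^{i_j}$ for $i_j\in\{1,l\}$. Since both weight-$2$ errors contribute $0$ to the syndrome at position $0$, their full $C$-syndromes coincide. Unique decoding in $C$ forces the errors to be equal, so $\{\tilde z_1,\tilde z_2\}=\{\bar z_1,\bar z_2\}$, and $C'$ is s2ec by Lemma \ref{s2ec_cara}.

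\emph{Reverse direction.} Assume $C'$ is s2ec; I show $d(C)\geq 5$. Because $0\in S_C$, every codeword of $C$ has even weight, so weights $1$ and $3$ are ruled out. A weight-$2$ codeword would force $\alpha^{l_1}+\alpha^{l_2}=0$ at position $1$, impossible for distinct $l_1,l_2$ with $\alpha$ a primitive $n$-th root. A weight-$4$ codeword with support $\{l_1,l_2,l_3,l_4\}$ would give
\begin{equation*}
\alpha^{l_1}+\alpha^{l_2}=\alpha^{l_3}+\alpha^{l_4},\qquad \alpha^{l\,l_1}+\alpha^{l\,l_2}=\alpha^{l\,l_3}+\alpha^{l\,l_4},
\end{equation*}
exhibiting two disjoint (hence distinct) unordered pairs in $R_n$ with identical $C'$-syndromes, contradicting Lemma \ref{s2ec_cara}. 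Therefore $d(C)\geq 5$, which yields $t(C)\geq 2$; in the context of this section the upper bound $t(C)\leq 2$ is implicit, giving $t(C)=2$.

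The main (and quite mild) obstacle is the reverse direction, specifically ruling out weight-$4$ codewords of $C$: the right move is the pair-splitting $\{l_1,l_2\}\,|\,\{l_3,l_4\}$ which converts a weight-$4$ codeword into a pair of weight-$2$ patterns with coincident $C'$-syndromes. Everything else reduces to bookkeeping on binary parity at the syndrome coordinate indexed by $0$.
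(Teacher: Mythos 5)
Your proof is correct. The forward direction is essentially identical to the paper's: for weight-$2$ patterns the syndrome at the coordinate indexed by $0$ is automatically $1+1=0$, so agreement of the $C'$-syndromes forces agreement of the full $C$-syndromes, and $t=2$ (i.e.\ $C$ s2ec) then forces the two pairs to coincide, which is exactly the criterion of Lemma~\ref{s2ec_cara}. The reverse direction is where you diverge: the paper argues operationally --- $C$ inherits s2ec from $C'$ as a subcode, the BCH bound gives $t\geq 1$, and the parity syndrome $x_0$ ($=1$ for $\mu=1$, $=0$ for $\mu=2$) lets the decoder separate the one-error and two-error cases, so $C$ corrects all errors of weight $\leq 2$ --- whereas you argue through the minimum distance, ruling out codewords of weight $1$ and $3$ by the even-weight constraint from $0\in S_C$, weight $2$ by injectivity of $z\mapsto z$ on $R_n$, and weight $4$ by splitting the support into two disjoint pairs with identical $C'$-syndromes, contradicting Lemma~\ref{s2ec_cara}. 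The two arguments carry the same mathematical content (the dual statements ``the decoder can separate the cases'' versus ``no low-weight codeword exists''), but yours makes the role of Lemma~\ref{s2ec_cara} explicit in both directions and avoids invoking the BCH bound and the subcode inheritance of s2ec. Note that both your proof and the paper's establish only $t\geq 2$ in the reverse direction (neither rules out $d\geq 7$); you flag this honestly, and the paper glosses over it in exactly the same way, so this is not a defect relative to the source.
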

\begin{proof}
Suppose that $t=2$. To prove that $C'$ is {\emph{s2ec}} we will use the previous lemma.
So, let $\{\tilde{z_1},\tilde{z_2}\}$ and $\{\bar{z}_1,\bar{z}_2\}$ be subsets of $R_n$ such that ${\tilde{z}_1}+{\tilde{z}_2}=
{\bar{z}_1}+{\bar{z}_2}$ and ${\tilde{z}_1^{l}}+{\tilde{z}_2^{l}}= {\bar{z}_1^{l}}+{\bar{z}_2^{l}}$, and we shall prove that
$\{\tilde{z_1},\tilde{z_2}\}=\{\bar{z}_1,\bar{z}_2\}$.
Since we are assuming that $t=2$, $C$ is obviously {\emph{s2ec}}. 
Note also that ${\tilde{z}_1^{n}}+{\tilde{z}_2^{n}}= {\bar{z}_1^{n}}+{\bar{z}_2^{n}}=0$, since $\tilde{z_1}$, $\tilde{z_2}$,
$\bar{z}_1$, $\bar{z}_2$ are elements of $R_n$.
So, applying the previous lemma to $C$, we get that $\{\tilde{z_1},\tilde{z_2}\}=\{\bar{z}_1,\bar{z}_2\}$.\\
Suppose, vice-versa, that $C'$ is {\emph{s2ec}}. $C$ is a subcode of $C'$, then also $C$ is {\emph{s2ec}}. In addition, from
the BCH bound we know that $t\ge 1$. Then we are reduced to proving that we are able to distinguish the case with one error to the case with
two errors. Let $x_0$ be the syndrome corresponding to $0$, that is, $x_0={\bar{z}_1^{n}}+{\bar{z}_2^{n}}$. When $\mu=1$ then $x_0=1+0=1\neq 0$,
while when $\mu=2$ then $x_0=1+1=0$. This proves that $t=2$.
\end{proof}

Note that in \cite{orsini2007general} the authors proved
a special case of Theorem \ref{SparseTwoSyndromes}:
let $C$ be a binary cyclic code with length $n$ and defining set  $S_C=\{ 1 \}$, and 
$\mathcal{L}^*$  the error locator
polynomial that is able to correct two errors, such that 
$\mathcal{L}^*=z^2+x_1 z+b^*(x_1)$.
Then there exists a polynomial $A \in \FF_2[y]$, such that
$$
b^*(x_1)= A(x_1^n)x_1^2 \, .
$$
The first four exceptional cases have $t=2$
and $S_C=\{1,l \}$, so  we can apply
Theorem~\ref{SparseTwoSyndromes} and get
$b^{*}$.
As regards the fifth case, we
consider the code  $C'$, defined by $S_{C'}=\{ 1,5 \}$, instead of
$C$ defined by $\{0,1,5 \}$, and  note that
$C'$ is a s2ec (by Lemma~\ref{s2ec_cara2}).
In this way  we can get $b^{*}$ for $C'$.
Notice that  $b^{*}$ for $C$ is the same
as that for $C'$, since the
additional syndrome (corresponding to $0$) plays
merely the role of determining the
error weight.
In conclusion,  the polynomial $b^{*}$ can be determinated for three
exceptional cases by Theorem~\ref{SparseTwoSyndromes}.
At this point  we are still left with the problem
of determining $b$, which reduces to finding
$\bp$ (or $\wbp$).
In what follows we show how we can do this for 
two out of three exceptional cases of Theorem \ref{theostruct4}.

\begin{theorem}\label{lambda}
Let $C$ be a binary code with $\{ 1,l \} \subset S_C$,
such that $l=2^v+1$, $v \ge 1$,
$t=2$ and $\gcd(l-2,n)=1$.
Then we can take $\wbp= x_1^l + x_2$, where $x_1$ is the syndrome corresponding 
to $1$ and $x_2$ is the syndrome corresponding to $l$.
\begin{proof}
We see that
\begin{eqnarray*}
x_1^l + x_2 & = (z_1 + z_2)^l + z_1^l + z_2^l =
(z_1^{2^v}+ z_2^{2^v}) (z_1+z_2) + z_1^l + z_2^l = \\
& = z_1^{2^v} z_2 + z_1 z_2^{2^v} = z_1 z_2 (z_1^{2^v-1} + z_2^{2^v-1}).
\end{eqnarray*}
We know that $z_1 z_2 \neq 0$, so  we are left to prove
$(z_1^{2^v-1} + z_2^{2^v-1})\neq 0$. This holds 
because if 
$z_1^{2^v-1} = z_2^{2^v-1}$, then 
$\gcd(l-2,n)=1$ implies  $z_1=z_2$, which is impossible.
\end{proof}
\end{theorem}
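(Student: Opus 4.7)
The plan is to verify directly that $\wbp = x_1^l + x_2$ satisfies the three defining conditions of a weakly bordering polynomial. The two easy cases can be dispatched immediately: on $\mathcal{V}^0$ both syndromes vanish, so $\wbp$ evaluates to $0$; on $\mathcal{V}^1$ with single error location $z_1 \in R_n$, we have $\bar{x}_1 = z_1$ and $\bar{x}_2 = z_1^l$, so $\wbp = z_1^l + z_1^l = 0$ (recall we are in characteristic $2$). This leaves the substantive case $\mathcal{V}^2$, where the claim is $\wbp \neq 0$.

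For this case, the plan is to exploit the hypothesis $l = 2^v + 1$ together with the Frobenius endomorphism of $\FF_2$. Writing $\bar{x}_1 = z_1 + z_2$ and $\bar{x}_2 = z_1^l + z_2^l$ for distinct error locations $z_1, z_2 \in R_n$, I would expand
\[
(z_1+z_2)^l = (z_1+z_2)^{2^v}(z_1+z_2) = (z_1^{2^v}+z_2^{2^v})(z_1+z_2),
\]
which produces $z_1^l + z_2^l$ plus two cross terms $z_1^{2^v}z_2 + z_1 z_2^{2^v}$. Adding $x_2$ cancels the $z_i^l$ summands and gives the clean factorisation
\[
\wbp(\bar{x}_1,\bar{x}_2) \;=\; z_1 z_2 \bigl( z_1^{2^v-1} + z_2^{2^v-1} \bigr) \;=\; z_1 z_2 \bigl( z_1^{l-2} + z_2^{l-2} \bigr).
\]

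It remains to prove non-vanishing. Since $z_1, z_2 \in R_n$ are $n$-th roots of unity, both are nonzero, so it suffices to show $z_1^{l-2} \neq z_2^{l-2}$. If equality held, then $\zeta := z_1 z_2^{-1}$ would satisfy $\zeta^{l-2} = 1$; combined with $\zeta^n = 1$, the multiplicative order of $\zeta$ would divide $\gcd(l-2, n)$, which equals $1$ by hypothesis. This would force $\zeta = 1$ and hence $z_1 = z_2$, contradicting the distinctness of the error locations. I do not expect a serious obstacle: the whole argument is a short Frobenius expansion plus an order-of-unity coprimality step, and the hypothesis $\gcd(l-2,n) = 1$ is precisely what makes the final cancellation step work.
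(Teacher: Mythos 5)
Your proposal is correct and follows essentially the same route as the paper: the Frobenius expansion $(z_1+z_2)^l=(z_1^{2^v}+z_2^{2^v})(z_1+z_2)$, the factorisation $z_1z_2(z_1^{l-2}+z_2^{l-2})$, and the coprimality argument via $\gcd(l-2,n)=1$ are exactly the paper's steps. Your explicit verification of the $\mathcal{V}^0$ and $\mathcal{V}^1$ cases is a small (and welcome) addition that the paper leaves implicit.
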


We can apply Theorem~\ref{lambda}  to describe the cases
$n=31$, with $S_C=\{1,5 \}$, and   $n=51$, with $S_C=\{ 1,9 \}$, of Theorem 
\ref{theostruct4},  taking 
$l=5$ and $v=2$ and  $l=9, v=3$, respectively.
In order to get $\bp$ for the last case treated by Theorem \ref{SparseTwoSyndromes}, i.e.
$n=51$, $S_C=\{0,1,5 \}$, we use the syndrome $x_0$
corresponding to $0$ and take $\bp=x+1$
(as in Section VI of \cite{orsini2007general}).

There are still two cases left. These require a different approach.  However, it turns out that this new approach can actually solve also the previous three cases in an even more efficient way.




\begin{theorem}\label{sl}
Let $C$ be a binary cyclic code with $\{1,l, s-2 l,s-l,s\} \subset S_C$,
$t=2$, $\gcd(s-2l,n)=\gcd(l,n)=1$.
Let $x_1$,$x_2$,$x_3$,$x_4$,$x_5$ be the syndromes corresponding
respectively to
$1$,$l$,$s-2l$, $s-l$, $s$.
Then
\begin{equation}
b=b^{*}=\left( \frac{x_2 x_4 + x_5}{x_3} \right)^{l^{+}}, \; \;
\mathcal{L}= z^2+x_1 z + b,
\end{equation}
\noindent where $l^{+}$ is the inverse of $l$ modulo $n$.

\end{theorem}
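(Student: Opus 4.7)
The plan is to verify directly that the proposed $b$ realizes the product of the (possibly ghost) error locations in each of the three regimes $\mu=0,1,2$, invoking Proposition~\ref{prop:b*} so that it suffices to exhibit a valid $b^*$ for weight-two errors and to check that the same expression behaves correctly on weight-$\le 1$ syndromes.

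First I would do the main algebraic identity. Fix a syndrome ${\bf s}\in\mathcal{V}^2$ with error locations $z_1,z_2\in R_n$, $z_1\neq z_2$, so that $x_j={z_1^{i_j}+z_2^{i_j}}$ for $i_j\in\{1,l,s-2l,s-l,s\}$. Expanding in characteristic two,
\begin{equation*}
x_2 x_4 + x_5 = (z_1^l+z_2^l)(z_1^{s-l}+z_2^{s-l}) + z_1^s + z_2^s
= z_1^l z_2^{s-l} + z_2^l z_1^{s-l}
= (z_1 z_2)^l\,(z_1^{s-2l}+z_2^{s-2l})
= (z_1 z_2)^l\,x_3 .
\end{equation*}
The hypothesis $\gcd(s-2l,n)=1$ forces $z_1^{s-2l}\neq z_2^{s-2l}$ whenever $z_1\neq z_2$, so $x_3\neq 0$ and the quotient $(x_2 x_4+x_5)/x_3=(z_1 z_2)^l$ is well-defined. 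Then using $\gcd(l,n)=1$ to define $l^+$ and the identity $(z_1 z_2)^n=1$, I conclude
\begin{equation*}
b({\bf s}) = \bigl((z_1 z_2)^l\bigr)^{l^+} = (z_1 z_2)^{l\,l^+} = z_1 z_2 ,
\end{equation*}
which is exactly the required location product for weight-two errors. This establishes that $b^* := ((x_2 x_4+x_5)/x_3)^{l^+}$ is a correct weight-two locator coefficient.

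Next I would check $\mu=1$ and $\mu=0$ to show that no additional bordering factor is needed (i.e.\ $b=b^*$). For $\mu=1$ with location $z_1$, one has $x_2 x_4+x_5 = z_1^l\cdot z_1^{s-l} + z_1^s = 0$ in characteristic two, while $x_3=z_1^{s-2l}\neq 0$, so $b({\bf s})=0$; thus $\mathcal{L}({\bf s},z)=z^2+x_1 z=z(z+z_1)$ has roots $0$ (ghost) and $z_1$, as required. For $\mu=0$ all syndromes vanish, so both numerator and denominator are $0$ and the convention recalled in the preliminaries gives $b({\bf s})=0$, yielding $\mathcal{L}({\bf s},z)=z^2$ whose two roots are the two ghost locations. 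Since $b^*$ alone already outputs zero on $\mathcal{V}^0\cup\mathcal{V}^1$, the bordering polynomial from Proposition~\ref{prop:b*} can be taken to be $1$, so $b=b^*$.

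There is no real obstacle here beyond spotting the identity $x_2 x_4 + x_5 = (z_1 z_2)^l x_3$; the rest is bookkeeping. The two gcd hypotheses play precisely distinct roles: $\gcd(l,n)=1$ is needed so that $l^+$ exists and $(z_1 z_2)^{l\,l^+}=z_1 z_2$, while $\gcd(s-2l,n)=1$ is exactly the condition that prevents $x_3$ from vanishing when two genuine errors occur. The only subtle point worth flagging explicitly in the write-up is that the rational expression is legitimately evaluated under the $0/0=0$ convention of the preliminaries precisely because the zero locus of the denominator on $\mathcal{V}$ is contained in the zero locus of the numerator, which the case analysis above verifies.
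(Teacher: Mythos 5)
Your proposal is correct and follows essentially the same route as the paper's proof: the key identity $x_2x_4+x_5=(z_1z_2)^l\,x_3$, the role of $\gcd(s-2l,n)=1$ in keeping $x_3\neq 0$ for $\mu=2$, the use of $l^+$ via $\gcd(l,n)=1$, and the vanishing of the numerator for $\mu=1$ all match the paper's argument. Your explicit treatment of $\mu=0$ and of the $0/0$ convention is slightly more careful bookkeeping than the paper provides, but it is not a different approach.
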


\begin{proof}
Let us consider $x_4 x_2$:
\begin{eqnarray*}
x_4 x_2 & = (z_1^{s-l} + z_2^{s-l}) ( z_1^l + z_2^l) =
z_1^s + z_2^s + z_1^{s-l} z_2^l + z_1^l z_2^{s-l} = \\
& = z_1^s + z_2^s + (z_1 z_2)^l (z_1^{s-2l} + z_2^{s-2l}) =
x_5 + (b^{*})^l x_3
\end{eqnarray*}
\noindent Therefore we have
$(b^{*})^l = (x_2 x_4 + x_5)/x_3$
\noindent and by the fact that $\gcd(l,n)=1$ we have
$b^{*} = ((x_2 x_4 + x_5)/x_3)^{l^{+}}$.

For $\mu=2$ we have
$x_3= z_1^{s-2l} + z_2^{s-2l}$
which cannot be zero because $\gcd(s-2l,n)=1$.

We also have $b^{*}=b$,
because for $\mu=1$,
$x_2 x_4 + x_5 = z^l z^{s-l} + z^s = 0$.

\end{proof}


\begin{theorem}\label{theoOneCase}
Let $C$ be a code with $\{1, (n-1)/l \} \subset S_C$,
$l$ a power of $2$, $3 \nmid n$, $l\not=(n-1)$, and $t=2$.
Let $x_1$,$x_2$,$x_3$,$x_4$ be the syndromes corresponding to $1$,$2$,
$(n-1)/l$, $n-2$. Then

\begin{equation*}
b^{*}= \frac{x_1}{x_3^{l}}, \; \; \wbp=x_4 x_1^2+1 , \; \;
\mathcal{L}= z^2+x_1 z + b^{*}\wbp/\overline{\wbp},
\end{equation*}
where $\overline{\wbp}=x_4 x_2+1$.
\begin{proof}
For $\mu=2$ and by the fact that
$l$ is a power of $2$ we have
\begin{eqnarray*}
x_1 x_3^l & = & (z_1+z_2) (z_1^{(n-1)/l}+z_2^{(n-1)/l})^l = \\
& = & (z_1 + z_2) (z_1^{n-1}+ z_2^{n-1}) = z_1^n + z_1 z_2^{-1} + z_2 z_1^{-1}+z_2^n = \\
& = & \frac{z_1}{z_2}+\frac{z_2}{z_1} = \frac{z_1^2+ z_2^2}{z_1 z_2}
= \frac{(z_1 + z_2)^2}{b^{*}} = x_1^2/b^{*} \\
\end{eqnarray*}
\noindent from which we get $b^{*}=x_1 x_3^{-l}$.

Let us now consider $\wbp(x_1,x_2,x_3,x_4)= x_4 x_1^2 + 1$.
We only need to prove that
$\wbp(\mathcal{V}_1) = 0$ and
$0 \notin \wbp(\mathcal{V}_2)$.
For $\mu=1$ we have
$\wbp = z^{n-2} z^2 + 1 = z^n + 1 = 0.$

For $\mu=2$ we have
\begin{eqnarray*}
&(z_1^{n-2}+ z_2^{n-2}) (z_1^2 + z_2^2) + 1=
z_1^{n-2} z_2^2 + z_2^{n-2} z_1^2 + 1 = \\
& \frac{z_2^2}{z_1^2} + \frac{z_1^2}{z_2^2} + 1 =
\left(  \frac{z_2}{z_1} + \frac{z_1}{z_2}\right)^2+1.
\end{eqnarray*}

We are then left with the problem of proving
\begin{equation*}
\frac{z_2}{z_1}+ \frac{z_1}{z_2} \neq 1.
\end{equation*}
This is equivalent to
$\alpha + \alpha^{-1} \neq 1$ for $\alpha=\frac{z_2}{z_1}$.
By contradiction, if we multiply by $\alpha$
we get $\alpha^2+\alpha+1 = 0$, which implies
$\alpha \in  \mathbb{F}_4 \setminus \mathbb{F}_2$, from which follows
that the order of $\alpha$ must be $3$.
However, $\alpha^n=1$ (since $z_1$ and $z_2$ are error locations)
and this is
impossible because we have assumed $3 \nmid n$.
\end{proof}
\end{theorem}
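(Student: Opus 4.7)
The plan is to verify two things: first that $b^{*}(X)=x_1/x_3^{l}$ evaluates to the product of the error locations on $\mathcal{V}^2$, and second that $\wbp=x_4x_1^2+1$ satisfies the weakly bordering polynomial condition on $\mathcal{V}^1$ and $\mathcal{V}^2$. With these two facts in hand, Proposition \ref{prop:b*} together with the lemma converting $\wbp$ into a bordering polynomial via $\bp=\wbp/\overline{\wbp}$ delivers the stated general error locator polynomial for $C$.

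For the first part, I would substitute $x_1=z_1+z_2$ and $x_3=z_1^{(n-1)/l}+z_2^{(n-1)/l}$ and then exploit that $l$ is a power of $2$, so that Frobenius distributes through the binomial in characteristic $2$, giving
\[
x_3^{l}=z_1^{n-1}+z_2^{n-1}=z_1^{-1}+z_2^{-1},
\]
where the last equality uses $z_i^n=1$. Multiplying by $x_1$ collapses to $(z_1^2+z_2^2)/(z_1z_2)=x_1^2/(z_1z_2)$, hence $z_1z_2=x_1/x_3^{l}$. Because $z_1\neq z_2$ forces $x_1\neq 0$ on $\mathcal{V}^2$, the identity $x_1x_3^{l}=x_1^2/(z_1z_2)\neq 0$ automatically guarantees $x_3\neq 0$ there, so the division is well-defined.

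For the second part, I would evaluate $\wbp$ on $\mathcal{V}^1$ and $\mathcal{V}^2$ separately. On a weight-one syndrome $\wbp=z^{n-2}z^2+1=z^n+1=0$. On $\mathcal{V}^2$, using $z_i^{n-2}=z_i^{-2}$, the expansion gives $x_4x_1^2=(z_1/z_2)^2+(z_2/z_1)^2$ (the $z_i^n=1$ cross-terms cancel in characteristic $2$), so $\wbp=((z_1/z_2)+(z_2/z_1))^2+1$. Setting $\alpha=z_2/z_1$, vanishing of $\wbp$ would force $\alpha+\alpha^{-1}=1$, i.e.\ $\alpha^2+\alpha+1=0$, placing $\alpha$ in $\FF_4\setminus\FF_2$ with multiplicative order exactly $3$. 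But $\alpha^n=1$ then contradicts the assumption $3\nmid n$. The companion rewriting $\overline{\wbp}=x_4x_2+1$, obtained by the substitution $x_1^2\leftrightarrow x_2$ (legitimate since $2\equiv 2\cdot 1\pmod n$ identifies these on the syndrome variety), then yields $\bp=\wbp/\overline{\wbp}$, and the conclusion follows from Proposition \ref{prop:b*}.

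The main obstacle is precisely the nonvanishing of $\wbp$ on $\mathcal{V}^2$: the Frobenius-driven simplifications for both $b^{*}$ and $\wbp$ are routine algebra, but the conclusion that $\alpha+\alpha^{-1}=1$ forces a third-order root of unity, incompatible with the hypothesis $3\nmid n$, is the sole place where the numerical code-theoretic hypotheses drive the argument.
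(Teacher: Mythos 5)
Your proposal is correct and follows essentially the same route as the paper: the Frobenius computation $x_3^l=z_1^{n-1}+z_2^{n-1}$ with $z_i^n=1$ yielding $b^*=x_1/x_3^l$, and the order-$3$ argument via $\alpha^2+\alpha+1=0$ against $3\nmid n$ for the nonvanishing of $\wbp$ on $\mathcal{V}^2$. Your explicit remarks that $x_3\neq 0$ on $\mathcal{V}^2$ and that the conclusion is assembled via Proposition \ref{prop:b*} are small additions of care that the paper leaves implicit, not a different method.
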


\begin{theorem}\label{containing_cases_4and5}
Let $C$ be a code with $\{1,l=2^j,r=2^j-2^i, s=2^j+2^{(i+1)}\} \subset S_C$ with $i\geq 0$ and $j\geq i+2$, and let $t=2$.
Let $x_1$,$x_2$,$x_3$,$x_4$ be the syndromes corresponding
respectively to $1$, $l$, $r$, $s$. Then
\begin{equation*}
(x_2+x_1^{l-r}x_3) {b^{*}}^{l-r}= (x_1^{s}+x_4)
\end{equation*}

\end{theorem}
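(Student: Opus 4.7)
The strategy is to verify the identity directly on the syndrome variety, working purely in characteristic $2$. When $\mu\le 1$ both sides vanish trivially: for $\mu=1$ one has $b^*=0$, so the left-hand side is zero, while $x_1^s+x_4=z^s+z^s=0$; for $\mu=0$ every syndrome is zero. So I would focus on the weight-two case, where $b^*=z_1z_2$ and each syndrome $x_j$ is a power sum in the error locations $z_1,z_2$. The driving observation is that $l=2^j$ and we are in characteristic $2$, hence the Frobenius power $a\mapsto a^{2^k}$ is additive; in particular $x_2=(z_1+z_2)^{2^j}=x_1^l$, and larger $x_1$-powers split as products of sums.

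My plan is to show that both sides collapse to
\[
(z_1z_2)^{2^{i+1}}\bigl(z_1^{2^j-2^{i+1}}+z_2^{2^j-2^{i+1}}\bigr).
\]
For the right-hand side I would write $s=2^j+2^{i+1}$ and use Frobenius twice to get $x_1^s=(z_1^{2^j}+z_2^{2^j})(z_1^{2^{i+1}}+z_2^{2^{i+1}})$; expanding and cancelling the diagonal terms $z_1^s,z_2^s$ against $x_4$ leaves precisely the cross-term expression above.

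For the left-hand side I would use $l-r=2^i$ together with $x_1^{2^i}=z_1^{2^i}+z_2^{2^i}$ to compute
\[
x_1^{l-r}x_3=z_1^{2^i+r}+z_2^{2^i+r}+z_1^{2^i}z_2^r+z_2^{2^i}z_1^r.
\]
Since $2^i+r=l=2^j$, the first two terms are exactly $x_2$, so in characteristic $2$ the sum $x_2+x_1^{l-r}x_3$ collapses to $z_1^{2^i}z_2^r+z_2^{2^i}z_1^r=(z_1z_2)^{2^i}(z_1^{2^j-2^{i+1}}+z_2^{2^j-2^{i+1}})$. Multiplying by $(b^*)^{l-r}=(z_1z_2)^{2^i}$ produces the same factorization as the right-hand side and closes the identity. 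The hypothesis $j\ge i+2$ is precisely what guarantees $r\ge 2^i$ and $2^j-2^{i+1}>0$, so every exponent that appears is a nonnegative integer; this is the one spot where care is needed. I do not expect any genuine obstacle: the whole proof is essentially bookkeeping of $2$-adic exponents under Frobenius additivity, with the condition $j\ge i+2$ ensuring that each algebraic manipulation is legitimate.
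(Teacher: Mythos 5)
Your proof is correct and follows essentially the same route as the paper: both arguments reduce the statement to a polynomial identity in $\FF_2[z_1,z_2]$, use Frobenius additivity to factor $x_1^s=(z_1^{2^j}+z_2^{2^j})(z_1^{2^{i+1}}+z_2^{2^{i+1}})$ and $x_1^{2^i}x_3$, and show that both sides collapse to the same cross-term $z_1^{2^{i+1}}z_2^{2^j}+z_1^{2^j}z_2^{2^{i+1}}$. The only cosmetic difference is that the paper states the computation once as an identity in $\FF_2[z_1,z_2]$ (so the $\mu\le 1$ cases follow by specializing to ghost locations) rather than splitting into cases on $\mu$.
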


\begin{proof}

We claim that in $\mathbb{F}_2[z_1,z_2]$ for all $i,j\in\mathbb{N}_0$, $j\geq i+2$:
\begin{equation*}\label{key_equation}
(z_1+z_2)^{2^j+2^{(i+1)}}+z_1^{2^j+2^{(i+1)}}+z_2^{2^j+2^{(i+1)}}=
(z_1z_2)^{2^i}\left(z_1^{2^j}+z_2^{2^j}+(z_1+z_2)^{2^i}(z_1^{2^j-2^i}+z_2^{2^j-2^i})\right) 
\end{equation*}
The statement is an immediate consequence of this claim.\\

We now prove our claim. The right-hand side of the equality we want to prove is equal to
\begin{align*}
&(z_1z_2)^{2^i}\left(z_1^{2^j}+z_2^{2^j}+(z_1^{2^i}+z_2^{2^i})(z_1^{2^j-2^i}+z_2^{2^j-2^i})\right)=\\
&(z_1z_2)^{2^i}\left(z_1^{2^i}z_2^{2^j-2^i}+z_1^{2^j-2^i}z_2^{2^i}\right)=
 z_1^{2^{(i+1)}}z_2^{2^j}+z_1^{2^j}z_2^{2^{(i+1)}}
\end{align*}
On the other hand, since $(z_1+z_2)^{2^{(i+1)}+2^j}=(z_1+z_2)^{2^{(i+1)}} (z_1+z_2)^{2^{j}}=
(z_1^{2^{(i+1)}}+z_2^{2^{(i+1)}}) (z_1^{2^j}+z_2^{2^j})$, also the left-hand side is equal
to $z_1^{2^{(i+1)}}z_2^{2^j}+z_1^{2^j}z_2^{2^{(i+1)}}$.
\end{proof}

\begin{corollary}\label{corollary1_cases45}
With the same hypothesis as in Theorem \ref{containing_cases_4and5}, and the assumptions that $i=0$ and $\gcd(r-1,n)=1$ we have that 

\begin{equation*}
b=b^{*}=\left( \frac{x_1^{s}+x_4}{x_2+x_1x_3} \right), \; \;
\mathcal{L}= z^2+x_1 z + b,
\end{equation*}
\end{corollary}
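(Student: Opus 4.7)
The plan is to specialize Theorem \ref{containing_cases_4and5} with $i=0$ and then show that the resulting rational expression correctly handles the weight-$1$ case automatically, so that no bordering polynomial is needed. Concretely, when $i=0$ we have $r = 2^j - 1$ and $l - r = 1$, so the identity of Theorem \ref{containing_cases_4and5} simplifies to
\begin{equation*}
(x_2 + x_1 x_3)\, b^{*} \;=\; x_1^{s} + x_4,
\end{equation*}
which, solving formally, gives $b^{*} = (x_1^{s} + x_4)/(x_2 + x_1 x_3)$.

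Next I would verify that the denominator is nonzero on $\mathcal{V}^{2}$, so that the division is legitimate there. Expanding with $r = l - 1$ gives
\begin{equation*}
x_2 + x_1 x_3 \;=\; (z_1^{l} + z_2^{l}) + (z_1+z_2)(z_1^{r} + z_2^{r}) \;=\; z_1 z_2\bigl(z_1^{r-1} + z_2^{r-1}\bigr).
\end{equation*}
Since $z_1,z_2 \neq 0$, the factor $z_1 z_2$ is nonzero; and $z_1^{r-1} = z_2^{r-1}$ together with $\gcd(r-1,n)=1$ and $z_i^{n}=1$ would force $z_1 = z_2$, contradicting the assumption of two distinct errors. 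Hence $x_2 + x_1 x_3 \neq 0$ on $\mathcal{V}^{2}$, and the quotient formula correctly produces $b^{*} = z_1 z_2$ there by Theorem \ref{containing_cases_4and5}.

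The last step — and the only subtlety — is to show that in this particular case $b^{*}$ coincides with $b$ on all correctable syndromes, i.e.\ we do not need to multiply by any (weakly) bordering polynomial. For $\mu = 1$ with location $z$, one directly computes
\begin{equation*}
x_1^{s} + x_4 \;=\; z^{s} + z^{s} \;=\; 0,
\qquad
x_2 + x_1 x_3 \;=\; z^{l} + z\cdot z^{r} \;=\; z^{l} + z^{l} \;=\; 0,
\end{equation*}
using $r = l - 1$. By the $0/0 = 0$ convention fixed in Section \ref{sec:preliminaries}, this forces $b^{*}(\mathbf{s}) = 0$ on $\mathcal{V}^{1}$, which is exactly the required value for the constant term of the error locator in the weight-$1$ case; for $\mu=0$ (no errors) all syndromes vanish and the same convention gives $b^{*}=0$. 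Combining with the weight-$2$ computation via Proposition \ref{prop:b*}, we conclude $b = b^{*}$ and $\mathcal{L} = z^{2} + x_1 z + b$.

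The main obstacle I anticipate is the numerator–denominator cancellation at $\mu = 1$: one might worry that the formula produces an indeterminate rather than a true zero, but this is precisely where the $0/0 = 0$ convention (together with the clean cancellation forced by $r = l - 1$) saves us. All other steps are direct algebraic verifications that follow from Theorem \ref{containing_cases_4and5} and the coprimality assumption $\gcd(r-1,n) = 1$.
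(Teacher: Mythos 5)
Your proposal is correct and follows essentially the same route as the paper: specialize the identity of Theorem \ref{containing_cases_4and5} to $l-r=1$, factor the denominator as $z_1z_2(z_1^{r-1}+z_2^{r-1})$ to show it is nonzero on $\mathcal{V}^2$ via $\gcd(r-1,n)=1$, and observe that the numerator vanishes for $\mu\le 1$ so that $b=b^{*}$ without a bordering polynomial. Your explicit check that the denominator also vanishes at $\mu=1$ (so the $0/0=0$ convention applies) is a slightly more careful rendering of a point the paper leaves implicit, but it is not a different argument.
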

\begin{proof}
By Theorem \ref{containing_cases_4and5}, we get that
\begin{equation*}
(x_2+x_1^{l-r}x_3) {b^{*}}^{l-r}= (x_1^{s}+x_4).
\end{equation*}
Let us suppose that $i=0$ and $\gcd(r-1,n)=1$. Then, the previous equality becomes 
\begin{equation}
(x_2+x_1x_3){b^{*}}=(x_1^{s}+x_4).
\end{equation}
First we show that $\mu=2$ implies $(x_2+x_1x_3)\neq 0$. 
We have that 
\begin{align*}
&x_2+x_1x_3=(z_1^{2^j}+z_2^{2^j})+(z_1+z_2)(z_1^{2^j-1}+z_2^{2^j-1})=\\
&(z_1^{2^j}+z_2^{2^j})+(z_1^{2^j}+z_2^{2^j}+z_1z_2^{2^j-1}+z_1^{2^j-1}z_2)=z_1z_2(z_1^{2^j-2}+z_2^{2^j-2}).
\end{align*}
Now, because $\mu=2$ and $\gcd(r-1,n)=\gcd(2^j-2,n)=1$, $z_1z_2(z_1^{2^j-2}+z_2^{2^j-2})\neq 0$. Then $(x_2+x_1x_3)$ is nonzero.
So, we get $ b^{*}=(x_1^{s}+x_4)/(x_2+x_1x_3)$.\\
Furthermore, since for $\mu=1$, $x_1^{s}+x_4=z^{s}+z^{s}=0$, we also conclude that $b^*=b$.\qedhere
\end{proof}

We now apply the previous results to the
exceptional cases of Theorem \ref{theostruct4}. Summing up we get 
the following.

\subsubsection*{a) Case  $\mathbf{n=31 , S_C=\{1,15\}}$}

This is a special case of those covered
by Theorem~\ref{theoOneCase} for $l=1$, i.e.
where $x_1,x_2,x_3,x_4$ correspond to
$1,2,30,29$. Hence we have 
$
b^{*}= x_1/x_2, \; \; \wbp=x_1^2 x_4+1 \,,
$
and so the general error locator polynomial for the code is 

\begin{equation}
z^2+ x_1 z + \frac{x_1}{x_3}\left(\frac{x_1^2 x_4+1}{x_2 x_4+1}\right) \, ,
\end{equation}

\subsubsection*{b) Case $\mathbf{n=31, S_C=\{1,5\}}$}

This case is a special case of
Theorem~\ref{sl} for $l=1$ and $s=10$.
Therefore we have
that the general error locator polynomial for $C$ is
\begin{equation}
z^2+ x_1 z + \left(\frac{x_1 x_4 + x_5}{x_3}\right) \, ,
\end{equation}
where $x_1,x_3,x_4,x_5$ are the syndromes
of $1,8,9,10$ ($x_2=x_1$). 
\subsubsection*{c) Case $\mathbf{n=45, S_C=\{1,21\}}$}

This case is a covered by
Theorem~\ref{sl} for $l=2$ and $s=23$,
which gives the following general error locator polynomial
\begin{equation}
z^2+ x_1 z + \left( \frac{x_2 x_4 + x_5}{x_3} \right)^{23},
\end{equation} \,
\noindent where
$x_1,x_2,x_3,x_4,x_5$ are the syndromes
of $1,2,19,21,23$.
%
%
Note that the  inverse of $2$ modulo $45$ is $23$.

\subsubsection*{d) Case $\mathbf{n=51, S_C=\{1, 9\}}$}
This is a special case of those covered
by Corollary~\ref{corollary1_cases45} for $i=0$ and $j=4$ which gives the following general error locator polynomial
\begin{equation}
 z^2+ x_1 z + \left( \frac{x_1^{18} + x_{4}}{x_1(x_1^{15}+x_{3})}\right),
\end{equation}
\noindent where
$x_1,x_3,x_4$ are the syndromes
of $1,15,18$.
\subsubsection*{e) Case $\mathbf{n=51, S_C=\{0, 1, 5\}}$}

For this code we get the general error locator polynomial as follows.\\
Thanks to Theorem~\ref{containing_cases_4and5} with $i=0$ and $j=3$, we have that $x_1(x_1^7+x_3){b^{*}}= (x_1^{10}+x_4)$ where $x_1,x_3,x_4$ are the syndromes
of $1,7,10$ respectively. Now, for some locations $(z_1,z_2)$ of weight $2$, $(x_1^7+x_3)$ becomes zero. However, when $(x_1^7+x_3)=0$, it is easy to get the value of $b^{*}$.
Indeed, 
\begin{equation}\label{eq:x1^7}
x_1^7=(z_1+z_2)^7=z_1^7+z_2^7+(z_1 z_2)\left((z_1+z_2)^5+(z_1z_2)^2(z_1+z_2)\right)
\end{equation}
So, when $(x_1^7+x_3)=0$ and $\mu=2$, we obtain that $\left((z_1+z_2)^5+{b^{*}}^2(z_1+z_2)\right)=0$, which gives $b^{*}=x_1^2$.

In conclusion, if $(x_1^7+x_3)$ is nonzero, then $b^{*}=(x_1^{10}+x_4)/(x_1(x_1^7+x_3))$, otherwise, $b^{*}=x_1^2$.\\
To unify the two representations, we use the following result, which is proved in the Appendix \ref{sec:appendix}, 
\begin{equation}\label{eq:iff51}
(x_1^7+x_3)=0\quad \textrm{if and only if}\quad x_1^{51}=1.
\end{equation}
Since $(x_1^{255}+1)=(x_1^{51}+1)F(x_1^{51})$ with $F(y)=y^4+y^3+y^2+y+1$, from \eqref{eq:iff51} we get that a general error locator polynomial for $C$ is
\begin{equation}
z^2+ x_1 z +\left( \frac{x_1^{10} + x_{4}}{x_1(x_1^7+x_3)} \right)+x_1^2 \left( \frac{{F(x_1^{51})}^2}{F(x_2^{51})}\right) ,
\end{equation}
\noindent where
$x_1,x_2,x_3,x_4$ are the syndromes
of $1,2,7,10$.

In Tables \ref{table_theorem_4.4}, \ref{table_theorem_4.6} and \ref{table_corollary1_4.7} 
we list binary cyclic codes, up to equivalence and subcodes, with length less than $121$ which are covered by Theorem~\ref{lambda},
Theorem~\ref{theoOneCase} and Corollary~\ref{corollary1_cases45} respectively. 
While, Table~\ref{table_theorem_4.5} reports the codes with length $n<105$ which are covered by Theorem~\ref{sl}.

\begin{table}[ht!]
\begin{center}
\begin{scriptsize}
\caption{Binary cyclic codes with $t=2$ and length $< 121$ covered by Theorem~\ref{lambda}}
\label{table_theorem_4.4}
\begin{tabular}{|l|l|l|l|l|l|l|}
\hline
$15,\{ 1, 3\}$ & $17,\{ 1\}$ & $21,\{ 1, 3\}$ & $25,\{ 1\}$ & $27,\{ 1, 9\}$ & $31,\{ 1, 3\}$ & $31,\{ 1, 5\}$ \\ \hline
$35,\{ 1, 5\}$ & $35,\{ 1, 3\}$ & $45,\{ 1, 21\}$ & $45,\{ 1, 3\}$ & $45,\{ 1, 9\}$ & $51,\{ 1, 3\}$ & $51,\{ 1, 9\}$ \\ \hline
$55,\{ 1\}$ & $63,\{ 1, 3\}$ & $65,\{ 1\}$ & $73,\{ 1, 9\}$ & $73,\{ 1, 3\}$ & $73,\{ 1, 5\}$ & $73,\{ 1, 17\}$ \\ \hline
$75,\{ 1, 3\}$ & $77,\{ 1, 33\}$ & $81,\{ 1, 9\}$ & $85,\{ 1, 3\}$ & $85,\{ 1, 5\}$ & $85,\{ 1, 9\}$ & $91,\{ 1, 17\}$ \\ \hline
$93,\{ 1, 3\}$ & $93,\{ 1, 9\}$ & $95,\{ 1\}$ & $99,\{ 1, 33\}$ & $99,\{ 1, 9\}$ & $105,\{ 1, 3\}$ & $115,\{ 1\}$ \\ \hline
$117,\{ 1, 9\}$ & $119,\{ 1, 17\}$ & $119,\{ 1, 13\}$ & & &  &\\ \hline
\end{tabular}
\end{scriptsize}
\end{center}
\end{table}
 
\begin{table}[ht!]
\begin{center}
\begin{scriptsize}
\caption{Binary cyclic codes with $t=2$ and length $< 105$ covered by Theorem~\ref{sl}}
\label{table_theorem_4.5}

\begin{tabular}{|l|l|l|l|l|l|l|}
\hline
$9,\{ 0, 1\}$ & $15,\{ 1, 3\}$ & $15,\{ 0, 1, 7\}$ & $17,\{ 0, 1\}$ & $21,\{ 0, 1, 5\}$ & $21,\{ 1, 3\}$ & $25,\{ 1\}$ \\ \hline
$27,\{ 1, 9\}$ & $27,\{ 0, 1\}$ & $31,\{ 0, 1, 15\}$ & $31,\{ 1, 5\}$ & $31,\{ 1, 3\}$ & $33,\{ 0, 1\}$ & $35,\{ 1, 3\}$ \\ \hline
$35,\{ 1, 5\}$ & $45,\{ 0, 1, 7\}$ & $45,\{ 1, 3\}$ & $45,\{ 1, 7, 15\}$ & $45,\{ 1, 21\}$ & $45,\{ 1, 9\}$ & $51,\{ 1, 3\}$ \\ \hline
$51,\{ 1, 9, 17\}$ & $51,\{ 0, 1, 19\}$ & $51,\{ 0, 1, 5, 11\}$ & $55,\{ 1\}$ & $63,\{ 1, 3\}$ & $63,\{ 1, 5, 9\}$ & $63,\{ 1, 27,31\}$ \\ \hline
$63,\{ 1, 9, 31\}$ & $63,\{ 1, 11,27\}$ & $63,\{ 0, 1, 31\}$ & $15,\{ 1, 21, 31\}$ & $63,\{ 1, 5, 13, 21\}$ & $63,\{ 1, 5,11, 21\}$ & $63,\{ 1, 23,27,31\}$ \\ \hline
$63,\{ 1, 5, 9, 31\}$ & $63,\{ 1, 7, 13, 27\}$ & $63,\{ 1, 5,11,27\}$ & $63,\{ 1, 23,27,31\}$ & $63,\{ 1, 5,11,21,31\}$ & $65,\{ 1, 3\}$ & $65,\{ 1, 7\}$ \\ \hline
$65,\{ 0, 1\}$ & $69,\{ 0,1,5\}$ & $73,\{ 1, 3\}$ & $73,\{ 1, 17\}$ & $73,\{ 0,1, 9\}$ & $75,\{ 1, 3\}$ & $75,\{ 0, 1,7\}$ \\ \hline
$77,\{ 1, 33\}$ & $81,\{ 1, 27\}$ & $81,\{ 0, 1\}$ & $81,\{ 1, 9\}$ & $85,\{ 1, 3\}$ & $85,\{ 1, 29\}$ & $85,\{ 1, 7, 13\}$ \\ \hline
$85,\{ 1, 7, 9\}$ & $85,\{ 1, 7, 21\}$ & $85,\{ 1, 9,17\}$ & $85,\{ 1, 13, 37\}$ & $85,\{ 1, 21, 37\}$ & $85,\{ 0, 1, 21\}$ & $87,\{ 0, 1, 5\}$ \\ \hline
$91,\{ 0, 1,17\}$ & $91,\{ 1,,9, 11, 13\}$ & $93,\{ 1, 3\}$ & $93,\{ 1, 5, 9\}$ & $93,\{ 1, 5, 15\}$ & $93,\{ 1, 5,21\}$ & $93,\{ 1, 5,45\}$ \\ \hline
$93,\{ 0, 1, 23\}$ & $93,\{ 1, 11, 15\}$ & $93,\{ 0, 1, 5, 11\}$ & $95,\{ 1\}$ & $99,\{ 1, 9\}$ & $99,\{ 1, 33\}$ & $99,\{ 0, 1\}$ \\ \hline
\end{tabular}
\end{scriptsize}
\end{center}
\end{table}

\begin{table}[ht!]
\begin{center}
\begin{scriptsize}
\caption{Binary cyclic codes with $t=2$ and length $< 121$ covered by Theorem~\ref{theoOneCase}}
\label{table_theorem_4.6}
\begin{tabular}{|l|l|l|l|l|l|l|}
\hline
$17,\{ 1\}$ & $25,\{ 1\}$ & $31,\{ 1, 15\}$ & $31,\{ 1, 3\}$ & $43,\{ 1\}$ & $55,\{ 1, 3\}$ & $65,\{ 1\}$ \\ \hline
$73,\{ 1, 9\}$ & $85,\{ 1, 21\}$ & $91,\{ 1, 3\}$ & $95,\{ 1, 7\}$ & $115,\{ 1, 7\}$ & $119,\{ 1, 13\}$ &  \\ \hline
\end{tabular}
\end{scriptsize}
\end{center}
\end{table}

For the sake of space, we do not show the codes covered by Theorem~\ref{sl} with length $105\leq n <121$.
We observe that in each table we also report BCH codes.

\begin{table}[ht!]
\begin{center}
\begin{scriptsize}
\caption{Binary cyclic codes with $t=2$ and length $< 121$ covered by Corollary~\ref{corollary1_cases45} }
\label{table_corollary1_4.7}
\begin{tabular}{|l|l|l|l|l|l|l|}
\hline
$15,\{ 1,3\}$ & $21,\{ 1,3\}$ & $25,\{ 1\}$ & $31,\{ 1, 3\}$ & $31,\{ 1,5,15\}$ & $35,\{ 1, 3\}$ & $45,\{1,3\}$ \\ \hline
$45,\{  1, 9, 15\}$ & $51,\{ 1, 3\}$ & $51,\{ 1, 9\}$ & $55,\{ 1, 3\}$ & $63,\{ 1, 3\}$ & $65,\{ 1, 3\}$ &  \\ \hline
$73,\{  1, 3\}$ & $73,\{ 1, 5\}$ & $73,\{ 1, 17\}$ & $75,\{ 1, 3\}$ & $85,\{ 1, 3\}$ & $85,\{ 1, 9,15\}$ &  \\ \hline
$85,\{ 1,9,37\}$ & $93,\{ 1,3\}$ & $93,\{ 1,9,15\}$ & $95,\{ 1\}$ & $105,\{ 1,3\}$ & $115,\{ 1\}$ &  \\ \hline
\end{tabular}
\end{scriptsize}
\end{center}
\end{table}

\section{On some classes of codes with \texorpdfstring{$t=3$}{TEXT}} \label{sec:$t=3$}
In this section we provide explicit sparse representations for some infinite classes of binary codes with correction capability $t=3$.
We also consider all binary codes with $t=3$ and $n<63$, showing that they can be regrouped in few classes and
we provide a general error locator polynomial for all these codes.
In \cite{chen1969some}  Chen produced a table of the minimum distances of binary cyclic codes of length at most
$65$. This table was extended to length at most 99 by Promhouse and Tavares \cite{promhouse1978minimum}.

The following theorem lists binary cyclic codes with $t=3$ and $n<63$ up to equivalence and subcodes that we obtain with MAGMA computer algebra system \cite{bosma1997magma}. 

\begin{theorem}\label{search_$t=3$}
Let $C$ be an $[n,k,d]$ code with $d\in\{7,8\}$ and $15\leq n < 63$ ($n$ odd). Then there are only three cases.
\begin{enumerate}
 \item[1)] Either $C$ is one of the following:
 \begin{itemize}
  \item[] $n=15, S_C=\{1,3,5\}, n=21, S_C=\{1,3,5\}, S_C=\{1,3,7,9\}, S_C=\{0,1,3,7\}$;   
  \item[] $n=23, S_C=\{1\}, n=31, S_C=\{1,3,5\}, S_C=\{0,1,7,15\}$;
  \item[] $n=35, S_C=\{1,3,5\}, S_C=\{1,5,7\}, n=45, S_C=\{1,3,5\}, S_C=\{1,5,9,15\}$;
  \item[] $n=49, S_C=\{1,3\}, n=51, S_C=\{1,3,9\}, n=55, S_C=\{0,1\}$;
 \end{itemize}
 
 \item[2)] or $C$ is a subcode of one of the codes of case $1)$;
 \item[3)] or $C$ is equivalent to one of the codes of the above cases.
\end{enumerate}

\end{theorem}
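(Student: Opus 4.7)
The plan is to convert the classification into a finite, mechanically verifiable enumeration indexed by $n$, carried out the way the authors' MAGMA script presumably proceeds. Two observations underlie the reduction. First, a binary cyclic code of length $n$ is determined by a union of $2$-cyclotomic classes modulo $n$, so once the set $\mathcal{C}(n)$ of cyclotomic classes of $2$ modulo $n$ is listed, every cyclic code of length $n$ corresponds to a subset $T\subseteq \mathcal{C}(n)$. Second, two codes are equivalent by a multiplier $a\in(\mathbb{Z}/n\mathbb{Z})^*$ when their complete defining sets satisfy $\tilde S_{C'}=a\tilde S_C\pmod n$, and $C\subseteq C'$ iff $\tilde S_{C'}\subseteq \tilde S_C$. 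So the enumeration space is finite and the equivalence/subcode relation is explicit.

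First I would loop over the odd values $n\in\{15,17,19,21,23,25,27,\dots,61\}$ and, for each $n$, compute $\mathcal{C}(n)$. Then for each subset $T\subseteq\mathcal{C}(n)$ I would construct the cyclic code with complete defining set $\tilde S_C=\bigcup_{c\in T}c$ and compute its minimum distance $d$ (using either a direct weight enumerator computation, or the BCH/Hartmann--Tzeng/Roos lower bounds combined with the actual weight check when needed). I retain only those codes with $d\in\{7,8\}$, i.e.\ $t=3$.

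Next I would group the retained codes into equivalence classes under the multiplier action: two codes are identified if $\tilde S_{C'}=a\tilde S_C$ for some $a\in(\mathbb{Z}/n\mathbb{Z})^*$. Within each $n$, I would then impose the subcode ordering $C\subseteq C'\Leftrightarrow\tilde S_{C'}\subseteq\tilde S_C$ and keep only the maximal representatives (those not contained in another $t=3$ code of the same length up to equivalence). The claim is that what remains is exactly the tabulated list of thirteen pairs $(n,S_C)$. For each surviving pair I would then exhibit a defining set $S_C$ (not necessarily complete) that selects one cyclotomic class representative per class present in $\tilde S_C$ and matches the list in the theorem.

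The main obstacle, such as it is, is not conceptual but bookkeeping: the search space is small enough for direct computation but one must be careful to (i) correctly compute $\mathcal{C}(n)$ for every odd $n<63$, including non-squarefree cases like $n=25,27,45,49$ where the cyclotomic structure is less uniform; (ii) use an actual minimum-distance routine rather than a lower bound to avoid missing codes whose true distance is $7$ or $8$ while their BCH bound is smaller; and (iii) verify that the multiplier equivalence classes have been correctly merged, so that the tabulated defining sets are genuinely inequivalent and the list is neither redundant nor incomplete. Once these three book-keeping issues are dispatched via MAGMA (as the authors indicate), cases $2)$ and $3)$ follow by definition from the way the maximal equivalence-class representatives were selected.
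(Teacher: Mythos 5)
Your proposal is correct and is essentially the paper's own argument: the authors establish this theorem purely by an exhaustive MAGMA search over all unions of $2$-cyclotomic cosets for each odd $15\leq n<63$, filtering by true minimum distance and then merging under multiplier equivalence and the subcode ordering (the resulting equivalences and subcodes being recorded in their appendix table). Your write-up simply makes explicit the enumeration, the distance check, and the bookkeeping that the paper delegates to the computer algebra system.
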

Subcodes and equivalences are described in Table \ref{table:eqcodes} in the Appendix \ref{tables}. 
By Theorem 12 \cite{orsini2007general}, we need to find a general error locator polynomial only for the codes in $1)$.
For our purposes, it is convenient to regroup the codes as showed in the following theorem.

\begin{theorem}\label{Classification_$t=3$}
Let $C$ be an $[n,k,d]$ code with $d\in\{7,8\}$ and $15\leq n < 63$ ($n$ odd). Then there are six cases

\begin{enumerate}

 \item[1)] either $C$ is a BCH code, i.e. $S_C=\{1,3,5\}$, 
 \item[2)] or $C$ admits a defining set containing $\{1, i, i+1, i+2, i+3, i+4\}$ where $i$ and $i+2$ are not zero modulo $n$, 
 \item[3)] or $C$ admits a defining set containing $\{1, 3, 2^i+2^j, 2^j-2^i, 2^j-2^{i+1}\}$ with $i\geq 0$ and $j\geq i+2$,
 \item[4)] or $C$ admits a defining set containing $\{1, 3, 9\}$ and $(n,3)=1$,
 \item[5)] or $C$ is one of the following:
 
 \begin{itemize}
  \item $n=21$, $S_C=\{0,1,3,7\}$;
  \item $n=51$, $S_C=\{1,3,9\}$;
  \item $n=55$, $S_C=\{0,1\}$.
 \end{itemize}

 \item[6)] or $C$ is a subcode of one of the codes of the above cases,
 \item[7)] or $C$ is equivalent to one of the codes of the above cases.
\end{enumerate}

\end{theorem}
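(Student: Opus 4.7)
The plan is to reduce the statement to a finite case check on top of Theorem~\ref{search_$t=3$}. Since cases $6)$ and $7)$ of Theorem~\ref{Classification_$t=3$} mirror cases $2)$ and $3)$ of Theorem~\ref{search_$t=3$}, it suffices to verify that each of the codes explicitly listed in item $1)$ of Theorem~\ref{search_$t=3$} falls into exactly one of the categories $1)$--$5)$ of Theorem~\ref{Classification_$t=3$}. For each such code I would compute the cyclotomic cosets modulo $n$ generated by the given $S_C$, and then test the resulting complete defining set against the four structural templates in order of decreasing specificity.

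More concretely, the plan is as follows. First, identify the BCH codes ($S_C = \{1,3,5\}$), which gives case $1)$ for $n \in \{15, 21, 31, 35, 45\}$. Second, for the remaining codes compute the cyclotomic closure of $S_C$ and search for five consecutive residues $i, i+1, i+2, i+3, i+4$ with $i \not\equiv 0$ and $i+2 \not\equiv 0 \pmod{n}$, which pins down case $2)$ (e.g.\ $n=31$, $S_C=\{0,1,7,15\}$ via $i=27$; $n=35$, $S_C=\{1,5,7\}$ via $i=7$; $n=45$, $S_C=\{1,5,9,15\}$ via $i=15$; $n=49$, $S_C=\{1,3\}$ via $i=2$). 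Third, for codes still unclassified, run through small pairs $(i,j)$ with $i \ge 0$, $j \ge i+2$, and test whether the five-element pattern $\{1,3,2^i+2^j,2^j-2^i,2^j-2^{i+1}\}$ sits inside the complete defining set: this should cover the last $t=3$ code of length $21$ that is non-BCH, namely $S_C = \{1,3,7,9\}$, with $(i,j)=(0,3)$ yielding $\{1,3,9,7,6\}$, and $6$ belongs to the cyclotomic class of $3$. Fourth, check case $4)$ by verifying that $\{1,3,9\}$ lies in the complete defining set and $\gcd(n,3)=1$: this covers the Golay code $n=23, S_C=\{1\}$ because $3, 9$ belong to the cyclotomic class of $1$ modulo $23$.

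Whatever does not match any of the four templates is assigned to the exceptional case $5)$. I would then show that the three remaining codes, $n=21, S_C=\{0,1,3,7\}$, $n=51, S_C=\{1,3,9\}$, and $n=55, S_C=\{0,1\}$, indeed fail all of cases $1)$--$4)$: in particular $n=51, S_C=\{1,3,9\}$ cannot be placed in case $4)$ because $\gcd(51,3)=3 \neq 1$, and no suitable 5-consecutive block or case $3)$ pattern occurs in the complete defining sets of any of the three (a short verification using the cyclotomic coset structure of $\mathbb{Z}/n\mathbb{Z}$).

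The proof is mechanical rather than conceptual, so the only real obstacle is bookkeeping: one must confirm that no code in Theorem~\ref{search_$t=3$}'s list escapes all five templates and that no code is double counted into a wrong template, which would make the subsequent sections (where each template is decoded separately) miss a case. The safest way is to tabulate, for each listed code, the chosen category together with the specific parameters $i$ (or $i,j$) witnessing membership, and then exhibit that the complement is precisely the three exceptional codes.
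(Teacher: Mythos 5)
Your proposal is correct and takes essentially the same route as the paper: the paper's own proof is the single sentence that it suffices to inspect the finite list of codes in Case 1) of the preceding classification theorem, and your plan (computing cyclotomic closures and matching each listed code to one of the templates, with explicit witnesses such as $i=27$ for $n=31$, $S_C=\{0,1,7,15\}$ and $(i,j)=(0,3)$ for $n=21$, $S_C=\{1,3,7,9\}$) is exactly that inspection carried out in detail. The only cosmetic divergence is that some codes fit more than one template (e.g.\ $n=49$, $S_C=\{1,3\}$ is both BCH-like and covered by your $i=2$ block of consecutive roots), which is harmless since the cases are not meant to be mutually exclusive.
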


\begin{proof}
It is enough to inspect Case $1)$ of Theorem \ref{search_$t=3$}
\end{proof}

\begin{corollary}
Let $C$ be a code with length $n< 63$ and distance $d\in\{7,8\}$. Then $C$ is equivalent to a code $D$ s.t $1\in S_D$.
\end{corollary}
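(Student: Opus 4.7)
The plan is to deduce the corollary directly from the classification provided in Theorem \ref{Classification_$t=3$} (or equivalently from Theorem \ref{search_$t=3$}) by a simple case inspection, taking advantage of the fact that subcodes inherit the defining set of their supercodes.

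First, I would observe that each of the representative codes listed in cases 1)--5) of Theorem \ref{Classification_$t=3$} explicitly has $1$ in its defining set: cases 1)--4) demand this by construction (each defining set is required to contain $\{1,\ldots\}$), and the three sporadic codes listed in case 5) ($n=21$ with $S_C=\{0,1,3,7\}$; $n=51$ with $S_C=\{1,3,9\}$; $n=55$ with $S_C=\{0,1\}$) all visibly contain $1$. Hence for any such $C$ the statement of the corollary is immediate by taking $D=C$.

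Next I would handle case 6) (subcodes): if $C$ is a subcode of some code $C'$ from cases 1)--5), then its generator polynomial $g_C$ is a multiple of $g_{C'}$, so the complete defining set of $C$ contains the complete defining set of $C'$, which contains $1$; therefore $1\in S_C$ and again we take $D=C$. Finally, for case 7) (equivalent codes), if $C$ is equivalent to some code $C''$ from the previous cases, then since $C''$ has been shown to be equivalent to a code $D$ with $1\in S_D$, so is $C$, by transitivity of equivalence.

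No serious obstacle is expected: the argument is purely combinatorial and reduces to reading off the defining sets in Theorem \ref{Classification_$t=3$}, together with the elementary observation that the complete defining set of a subcode contains that of its supercode. The only point that deserves a word of care is distinguishing \emph{defining set} from \emph{complete defining set} in the subcode argument, since $1$ may formally fail to appear in a minimal defining set $S_C$ while still lying in the complete defining set, but up to this harmless adjustment the conclusion is immediate.
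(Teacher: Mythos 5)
Your proposal is correct and follows essentially the same route as the paper, whose entire proof is the single line ``It is an immediate consequence of Theorem \ref{search_$t=3$}''; you simply spell out the case inspection (all listed representatives contain $1$, subcodes inherit the complete defining set of their supercode, and equivalence is transitive) that the paper leaves implicit.
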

\begin{proof}
It is an immediate consequence of Theorem \ref{search_$t=3$}. 
\end{proof}

Let $C$ be a code with $t=3$, $\bf{s}$ a correctable syndrome and $\bar{z}_1$, $\bar{z}_2$, $\bar{z}_3$ the error locations.
Then $\mathcal{L}(X,z)=z^3+a z^2+b z +c$, where $a,b,c\in\mathbb{F}_2[X]$, and $a(\mathbf{s})= \bar{z}_1+\bar{z}_2+\bar{z}_3$,
$b(\mathbf{s})= \bar{z}_1\bar{z}_2+\bar{z}_1\bar{z}_3+\bar{z}_2\bar{z}_3$, $c(\mathbf{s})= \bar{z}_1\bar{z}_2\bar{z}_3$.
Moreover, there are three errors if and only if $c(\mathbf{s})\neq 0$, there are two errors if and only if 
$c(\mathbf{s})=0$ and $b(\mathbf{s})\neq 0$, and there is one error if and only if $c(\mathbf{s})=b(\mathbf{s})=0$ and 
$a(\mathbf{s})\neq 0$. Note that from the previous corollary any code with $t=3$ and $n<63$ is equivalent to a code with $1$
in the defining set. This means that for all our codes the general error locator polynomial is of the form
$$
\mathcal{L}(X,z)=z^3+x_1 z^2+b z +c,
$$
where $x_1$ is the syndrome corresponding to $1\in S_C$. So we are left with finding the coefficients $b$ and $c$. Of course, $b$ in the $t=3$ case should not be confused with $b$ in the case of $t=2$ case. 
Also, when $3\in S_C$, actually we need to find only one of the two coefficients because in this case by Newton's identities
\cite{CGC-cd-book-macwilliamsTOT} we get $c=x_1^3+x_3+x_1 b$, which involves only known syndromes,
so from one coefficient we can easily obtain the other.
In the following, $\Sigma_{l,m}$ will denote all the six terms of the type $z_i^lz_j^m$, $i,j\in\{1,2,3\}$,
and $\Sigma_{l,m,r}$ denotes all the six terms of the type $z_i^lz_j^mz_k^r$, $i,j,k\in\{1,2,3\}$.

Let us consider the codes in $1)$ of Theorem \ref{Classification_$t=3$}. We have the following well-known result.
\begin{theorem}
Let $C$ be a BCH code with $t=3$. Then $\mathcal{L}(X,z)=z^3+x_1 z^2+b z +c$ with
$$
b=\frac{(x_1^2x_3+x_5)}{(x_1^3+x_3)},\quad c=\frac{(x_1^3x_3+x_1^6+x_3^2+x_1x_5)}{(x_1^3+x_3)}
$$
\end{theorem}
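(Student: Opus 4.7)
The plan is to derive the formulas using Newton's identities in characteristic $2$. For a binary BCH code with designed distance $7$ (so $\{1,3,5\}\subset S_C$), the syndromes $x_1,x_3,x_5$ equal the power sums $p_1=\sum z_i$, $p_3=\sum z_i^3$, $p_5=\sum z_i^5$ taken over the (up to three) error locations $z_1,z_2,z_3$, while the coefficients of $\mathcal{L}$ are $\sigma_1=x_1$, $\sigma_2=b$, $\sigma_3=c$. So the task is purely algebraic: solve for $b,c$ in terms of $x_1,x_3,x_5$.

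First I would write down the relevant Newton identities in characteristic $2$:
\begin{align*}
p_2 &= \sigma_1^2,\\
p_3 &= \sigma_1^3+\sigma_1\sigma_2+\sigma_3,\\
p_4 &= \sigma_1 p_3+\sigma_2 p_2+\sigma_3 p_1,\\
p_5 &= \sigma_1 p_4+\sigma_2 p_3+\sigma_3 p_2.
\end{align*}
From the third identity I read off $c=x_1^3+x_3+x_1b$ (this is exactly the Newton's-identity relation mentioned earlier in the paper). Substituting the expression for $p_4$ into that for $p_5$ and replacing $p_1,p_2,p_3$ by $x_1,x_1^2,x_3$, the two occurrences of $x_1^2 c$ cancel in characteristic $2$, leaving
$$x_5 \;=\; x_1^2 x_3 + b(x_1^3+x_3).$$
When $x_1^3+x_3\neq 0$, this gives $b=(x_1^2x_3+x_5)/(x_1^3+x_3)$, and substituting back into $c=x_1^3+x_3+x_1 b$ and clearing denominators yields the stated expression for $c$.

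Next I would check that the stated rational expressions agree with the coefficients of the error locator in every correctable situation, using the convention $0/0=0$ introduced in the preliminaries. The key input is the factorization
$$x_1^3+x_3 \;=\; \sigma_1\sigma_2+\sigma_3 \;=\; (z_1+z_2)(z_1+z_3)(z_2+z_3),$$
obtained by a direct expansion in characteristic $2$. For $\mu=3$ the three locations are distinct and nonzero, so this is nonzero; for $\mu=2$ it specialises to $(z_1+z_2)z_1z_2\neq 0$; for $\mu\leq 1$ both the denominator and each numerator vanish, so the convention correctly produces $b=c=0$, matching $\mathbf{L}_e(z)=z^\mu$ times the appropriate power of $z$.

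The main obstacle is not conceptual but bookkeeping: one must be careful that the characteristic $2$ cancellations in the $p_5$-identity are done correctly and that the denominator $x_1^3+x_3$ never vanishes when a nonzero coefficient is required. The cubic factorisation of $x_1^3+x_3$ as $(z_1+z_2)(z_1+z_3)(z_2+z_3)$ is exactly the ingredient that makes the argument go through in the three-error case.
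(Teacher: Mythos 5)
Your proof is correct and follows exactly the route the paper takes: the paper's entire proof is the one-line remark that it suffices to apply Newton's identities, and your proposal is a correct, fully worked-out version of that computation, including the useful factorization $x_1^3+x_3=(z_1+z_2)(z_1+z_3)(z_2+z_3)$ that justifies the nonvanishing of the denominator for $\mu\in\{2,3\}$ and the $0/0=0$ convention for $\mu\leq 1$.
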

\begin{proof}
It enough to apply Newton's identities.
\end{proof}

The next theorem provides a general error locator polynomial for codes in $2)$ of Th. \ref{Classification_$t=3$}.

\begin{theorem}\label{pro:consecutives}
Let $C$ be a code with $t=3$ and $S_C$ containing $\{1, i, i+1, i+2, i+3, i+4\}$ where $i$ and $i+2$ are not zero modulo $n$. 
Then $\mathcal{L}(X,z)=z^3+x_1 z^2+b z +c$ with
$$
b=\frac{x_iU+x_{i+1}V}{W},\quad c=\frac{x_{i+1}U+x_{i+2}V}{W}
$$

where $U=x_{i+4}+x_1x_{i+3}$, $V=x_{i+3}+x_1x_{i+2}$ and $W=x_{i+1}^2+x_ix_{i+2}$.
\end{theorem}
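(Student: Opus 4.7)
The plan is to use Newton's identities to reduce the determination of $b$ and $c$ to a $2\times 2$ linear system, and then apply Cramer's rule.

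Since the three (possibly ghost) error locations $z_1,z_2,z_3$ are by construction the roots of the locator $z^3+x_1z^2+bz+c$, Newton's identities specialised to characteristic $2$ give the linear recurrence
\[
x_k \;=\; x_1\,x_{k-1}+b\,x_{k-2}+c\,x_{k-3}\qquad (k\ge 3).
\]
I apply this for $k=i+3$ and $k=i+4$, both at my disposal by the hypothesis $\{i,i+1,i+2,i+3,i+4\}\subset S_C$. Transposing the known terms to the left-hand side yields the system
\[
\begin{pmatrix} x_{i+1} & x_i \\ x_{i+2} & x_{i+1} \end{pmatrix}\begin{pmatrix} b\\ c\end{pmatrix} \;=\; \begin{pmatrix} V \\ U\end{pmatrix},
\]
whose determinant is exactly $x_{i+1}^2+x_ix_{i+2}=W$ (the minus sign collapses in characteristic $2$). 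Cramer's rule then returns precisely the expressions $b=(x_iU+x_{i+1}V)/W$ and $c=(x_{i+1}U+x_{i+2}V)/W$ claimed in the statement.

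It remains to check that these expressions evaluate correctly at every correctable syndrome, using the convention $0/0=0$. For $\mu\in\{0,1\}$ one verifies directly that $U=V=W=0$, so the convention forces $b=c=0$, as required. For $\mu=2$, expanding in the two locations $z_1,z_2$ gives $W=(z_1z_2)^i(z_1+z_2)^2\ne 0$, and Cramer returns $b=z_1z_2$ and $c=0$. For $\mu=3$, applying Cauchy--Binet to the factorisation of the coefficient matrix as $V^{\top}DV$ (with $V$ the $3\times 2$ matrix of rows $(1,z_k)$ and $D=\mathrm{diag}(z_1^i,z_2^i,z_3^i)$) yields
\[
W \;=\; \sum_{j<k}(z_jz_k)^i(z_j+z_k)^2,
\]
and one needs $W\ne 0$ on every admissible triple of distinct locations for Cramer to deliver the correct values.

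The main obstacle is this last step. In characteristic $2$ a sum of three nonzero terms can cancel, so I would combine the hypotheses $i,i+2\not\equiv 0\pmod{n}$ (which prevent $x_i$ or $x_{i+2}$ from trivially collapsing to constants) with $t=3$ (equivalently $d\ge 7$, so the code distinguishes every weight-$3$ pattern) to rule out simultaneous cancellation. In the prototypical instance $i=2$ the expression simplifies, using $x_2=x_1^2$, $x_4=x_1^4$ and $x_1b+c=\Sigma_{2,1}$, to $W=[(z_1+z_2)(z_1+z_3)(z_2+z_3)]^2$, visibly nonzero on distinct locations; for a general $i$ I expect an analogous, though less transparent, nonvanishing argument. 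The Newton reduction and Cramer step are by comparison routine.
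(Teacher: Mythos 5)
Your reduction is exactly the paper's: the same two Newton identities (for $k=i+3$ and $k=i+4$), the same $2\times 2$ linear system with determinant $W=x_{i+1}^2+x_ix_{i+2}$, the same Cramer formulas, and the same case analysis for $\mu\le 2$ (all of which you verify correctly). The one step you leave open --- that $W\neq 0$ on every weight-$3$ pattern --- is a genuine gap, and it is precisely the step the paper itself does not really prove: the published argument merely asserts that $W=\Sigma_{i,i+2}$ ``cannot be zero because $i$ and $i+2$ are not zero modulo $n$''.

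Your worry about characteristic-$2$ cancellation is well founded, and in fact no argument using only the stated hypotheses can close the gap uniformly in $i$. For $i=1$ one has the identity
\[
W=\sum_{j<k}z_jz_k(z_j+z_k)^2=(z_1+z_2+z_3)\prod_{j<k}(z_j+z_k),
\]
which vanishes on every weight-$3$ pattern with $z_1+z_2+z_3=0$. Such patterns occur for admissible codes: the $[15,5,7]$ BCH code with $S_C=\{1,3,5\}$ has complete defining set containing $\{1,2,3,4,5\}$, so the hypotheses hold with $i=1$ and $i+2=3\not\equiv 0\pmod{15}$, yet taking $z_1=1$, $z_2=\alpha$, $z_3=1+\alpha$ in $\FF_{16}$ gives $x_1=0$, hence $W=x_2^2+x_1x_3=0$ and both numerators vanish, so the convention $0/0=0$ returns $b=c=0$ even though $\sigma_3=z_1z_2z_3\neq 0$. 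A complete proof therefore has to either restrict $i$ (your $i=2$ computation, where $W=\bigl(\prod_{j<k}(z_j+z_k)\bigr)^2$ is visibly nonzero, is the clean instance and the one needed for most codes in the paper's tables) or use additional information about the particular code and the particular run of consecutive elements chosen in its defining set. In short, you have reproduced the paper's argument and correctly isolated its weakest point; neither your proposal nor the paper's proof establishes the nonvanishing of $W$ for general $i$.
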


\begin{proof}
Let us suppose that three errors occur, that is, $\mathbf{e}$ has weight three, and let $\mathbf{s}$ be its syndrome vector.
It is a simple computation to show, using the Newton's identities 
\[
\begin{cases}
x_{i+4}=x_1x_{i+3}+b x_{i+2}+c x_{i+1}\\ 
x_{i+3}=x_1x_{i+2}+bx_{i+1}+cx_{i}
\end{cases}
\]
that $b=\frac{x_iU+x_{i+1}V}{W}$, and $c=\frac{x_{i+1}U+x_{i+2}V}{W}$, where 
$W=x_{i+1}^2+x_ix_{i+2}=\Sigma_{i, i+2}$ which cannot be zero because $i$ and $i+2$ are not zero modulo $n$. 
Then, when $\mu=3$, $\mathcal{L}(\mathbf{s},z)$ is the error locator polynomial for $C$.

Let us show that it is actually a general error locator polynomial for $C$.  We have that 
\begin{multline*}
x_{i+1}U+x_{i+2}V= (z_1^{i+1}+z_2^{i+1}+z_3^{i+1})\left(z_1^{i+4}+z_2^{i+4}+z_3^{i+4}+
(z_1+z_2+z_3)(z_1^{i+3}+z_2^{i+3}+z_3^{i+3})\right)+\\ 
(z_1^{i+2}+z_2^{i+2}+z_3^{i+2})\left(z_1^{i+3}+z_2^{i+3}+z_3^{i+3}+
(z_1+z_2+z_3)(z_1^{i+2}+z_2^{i+2}+z_3^{i+2})\right)=\Sigma_{1,i+1,i+3},
\end{multline*}
and 
\begin{eqnarray*}
& x_iU+x_{i+1}V= (z_1^{i}+z_2^{i}+z_3^{i})\left(z_1^{i+4}+z_2^{i+4}+z_3^{i+4}+
(z_1+z_2+z_3)(z_1^{i+3}+z_2^{i+3}+z_3^{i+3})\right)+\\ & (z_1^{i+1}+z_2^{i+1}+z_3^{i+1})\left(z_1^{i+3}+z_2^{i+3}+z_3^{i+3}+
(z_1+z_2+z_3)(z_1^{i+2}+z_2^{i+2}+z_3^{i+2})\right)=\\
& =\Sigma_{1,i,i+3}+\Sigma_{1,i+1,i+2}+\Sigma_{i+1,i+3},
\end{eqnarray*}
Let us suppose that $\mu=2$. In this case, $W=z_1^iz_2^{i+2}+z_1^{i+2}z_2^{i}$, which is again different from zero. Furthermore, $x_{i+1}U+x_{i+2}V=\Sigma_{1,i+1,i+3}$ is
zero because $\mu=2$. Finally, $x_iU+x_{i+1}V$ is different from zero because $x_iU+x_{i+1}V=\Sigma_{1,i,i+3}+\Sigma_{1,i+1,i+2}+\Sigma_{i+1,i+3}$ and 
$\Sigma_{i+1,i+3}$ cannot be zero.
When $\mu=1$, $W=z_1^iz_2^{i+2}+z_1^{i+2}z_2^{i}=0$ and $x_iU+x_{i+1}V=\Sigma_{i+1,i+3}=0$. 
\end{proof}

To obtain a general error locator polynomial for codes in $3)$ of Theorem \ref{Classification_$t=3$}, we need the following
lemma.

\begin{lemma}

Let $\sigma_k=\sum_{1\leq i_1<\cdots< i_k\leq 3}z_{i_1}\cdots z_{i_k}$ be the $k$th elementary symmetric polynomial 
in the variables $z_1,z_2,z_3$ over $\mathbb{F}_2$, where $k\in\{1,2,3\}$, 
and let $x_h=\sum_{l=1}^{3}z_l^h\in\mathbb{F}_2[z_1,z_2,z_3]$ be the power sum polynomial of degree $h$, with $h\geq 0$.
Then, for $i\geq 0$ and $j\geq i+2$,
$$
x_1^{2^i+2^j}+x_{2^i+2^j}=\sigma_2^{2^i}x_{2^j-2^i}+\sigma_3^{2^i}x_{2^j-2^{i+1}}
$$

\begin{proof}
$x_1^{2^i+2^j}=(z_1+z_2+z_3)^{2^i+2^j}=(z_1+z_2+z_3)^{2^i}(z_1+z_2+z_3)^{2^j}=x_{2^i+2^j}+\Sigma_{2^i,2^j}$. On the other
hand, $\sigma_2^{2^i}x_{2^j-2^i}=(z_1 z_2 +z_1 z_3+z_2 z_3)^{2^i}(z_1^{2^j-2^i}+z_2^{2^j-2^i}+z_3^{2^j-2^i})=
\Sigma_{2^i,2^j}+\Sigma_{2^i,2^i,2^j-2^i}$ and $\sigma_3^{2^i}x_{2^j-2^{i+1}}=(z_1z_2z_3)^{2^i}
(z_1^{2^j-2^{i+1}}+z_2^{2^j-2^{i+1}}+z_3^{2^j-2^{i+1}})= \Sigma_{2^i,2^i,2^j-2^i}$. 
So $x_1^{2^i+2^j}=x_{2^i+2^j}+\sigma_2^{2^i}x_{2^j-2^i}+\sigma_3^{2^i}x_{2^j-2^{i+1}}$. 
\end{proof}

\end{lemma}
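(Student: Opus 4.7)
The plan is to expand both sides directly in $\mathbb{F}_2[z_1,z_2,z_3]$ using the Frobenius endomorphism, which in characteristic $2$ gives $(z_1+z_2+z_3)^{2^k}=z_1^{2^k}+z_2^{2^k}+z_3^{2^k}$ for every $k\geq 0$, and then to compare term by term, tracking monomials through the $\Sigma$-notation just introduced.

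First I would compute the left-hand side. Writing
\[
x_1^{2^i+2^j}=(z_1+z_2+z_3)^{2^i}(z_1+z_2+z_3)^{2^j}=\Big(\sum_{l=1}^{3}z_l^{2^i}\Big)\Big(\sum_{l=1}^{3}z_l^{2^j}\Big),
\]
and expanding, one obtains nine monomials. The three diagonal ones $z_l^{2^i+2^j}$, $l=1,2,3$, assemble into $x_{2^i+2^j}$, and the six off-diagonal ones $z_l^{2^i}z_m^{2^j}$ with $l\neq m$ are, by definition, $\Sigma_{2^i,2^j}$. Hence $x_1^{2^i+2^j}+x_{2^i+2^j}=\Sigma_{2^i,2^j}$.

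Next I would expand the right-hand side. By Frobenius, $\sigma_2^{2^i}=z_1^{2^i}z_2^{2^i}+z_1^{2^i}z_3^{2^i}+z_2^{2^i}z_3^{2^i}$ and $\sigma_3^{2^i}=z_1^{2^i}z_2^{2^i}z_3^{2^i}$. In $\sigma_2^{2^i}\,x_{2^j-2^i}$, whenever $z_l^{2^j-2^i}$ meets a $\sigma_2$-summand containing $z_l^{2^i}$, the exponents on $z_l$ combine to $2^j$, producing a term $z_l^{2^j}z_m^{2^i}$ with $l\neq m$, and summing over all such matchings recovers exactly $\Sigma_{2^i,2^j}$. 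Otherwise the factor $z_l^{2^j-2^i}$ multiplies the two $\sigma_2$-variables distinct from $z_l$, producing $z_l^{2^j-2^i}z_m^{2^i}z_n^{2^i}$ with $\{l,m,n\}=\{1,2,3\}$, which collect into $\Sigma_{2^i,2^i,2^j-2^i}$. Thus $\sigma_2^{2^i}x_{2^j-2^i}=\Sigma_{2^i,2^j}+\Sigma_{2^i,2^i,2^j-2^i}$. For the remaining summand, since $2^i+(2^j-2^{i+1})=2^j-2^i$, the product $\sigma_3^{2^i}x_{2^j-2^{i+1}}$ expands immediately to $\Sigma_{2^i,2^i,2^j-2^i}$.

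Finally I would add the two pieces of the right-hand side: the two copies of $\Sigma_{2^i,2^i,2^j-2^i}$ cancel in characteristic $2$, and what remains is $\Sigma_{2^i,2^j}$, matching the left-hand side computed above. The only real obstacle is the careful bookkeeping of the $\Sigma$-notation when two of the three exponents coincide; the hypothesis $j\geq i+2$ is exactly what ensures $2^j-2^{i+1}>0$, so that $x_{2^j-2^{i+1}}$ is a genuine nontrivial power-sum and the exponents appearing in $\Sigma_{2^i,2^i,2^j-2^i}$ are legitimately placed. No induction or auxiliary lemmas are needed.
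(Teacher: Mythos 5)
Your proof is correct and follows essentially the same route as the paper's: both expand $x_1^{2^i+2^j}$, $\sigma_2^{2^i}x_{2^j-2^i}$ and $\sigma_3^{2^i}x_{2^j-2^{i+1}}$ via the Frobenius endomorphism, identify the pieces as $\Sigma_{2^i,2^j}$ and $\Sigma_{2^i,2^i,2^j-2^i}$, and let the latter cancel in characteristic $2$. Your additional remark on why $j\geq i+2$ keeps the exponents legitimate is a welcome clarification but does not change the argument.
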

 
\begin{theorem}\label{pro:powers}
Let $C$ be a code with $t=3$ and $S_C$ containing $\{1, 3, 2^i+2^j, 2^j-2^i, 2^j-2^{i+1}\}$ with $i\geq 0$ and $j\geq i+2$.
Then $\mathcal{L}(X,z)=z^3+x_1 z^2+b z +c$ with
$$
b=\left(\frac{x_{2^j-2^{i+1}}U+V}{W}\right)^{(2^i)^+},\quad c=\left(\frac{x_{2^j-2^i}U+x_{1}^{2^i}V}{W}\right)^{(2^i)^+}
$$

where $U=(x_1^3+x_3)^{2^i}$, $V=x_1^{2^i+2^j}+x_{2^i+2^j}$, $W=x_{2^j-2^i}+x_1^{2^i}x_{2^j-2^{i+1}}$ and $(2^i)^+$ is the inverse of
$2^i$ modulo $n$.
\end{theorem}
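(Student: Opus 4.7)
The plan is to show that the claimed ratios evaluate, on every correctable syndrome $\mathbf{s}$, to the elementary symmetric functions $b = \sigma_2$ and $c = \sigma_3$ of the actual error locations, by reducing everything to two clean polynomial identities and then handling each weight $\mu\in\{0,1,2,3\}$ via the convention $\frac{0}{0}=0$.

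First I would rewrite $U$ and $V$ in terms of $b$ and $c$. Since $3\in S_C$, Newton's identity in characteristic two gives $x_3 + x_1^3 = x_1\sigma_2 + \sigma_3 = x_1 b + c$, so by the Frobenius
\[
U \;=\;(x_1^3+x_3)^{2^i}\;=\;x_1^{2^i}b^{2^i} + c^{2^i}.
\]
The preceding lemma applied to the correct power sums yields
\[
V \;=\; x_1^{2^i+2^j} + x_{2^i+2^j}\;=\; b^{2^i}\,x_{2^j-2^i} + c^{2^i}\,x_{2^j-2^{i+1}}.
\]

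Substituting these expressions and cancelling duplicated terms in characteristic two, a direct expansion gives the two key identities
\[
x_{2^j-2^{i+1}}\,U + V \;=\; b^{2^i}\bigl(x_{2^j-2^i} + x_1^{2^i}\,x_{2^j-2^{i+1}}\bigr) \;=\; b^{2^i}\,W,
\]
\[
x_{2^j-2^i}\,U + x_1^{2^i}\,V \;=\; c^{2^i}\bigl(x_{2^j-2^i} + x_1^{2^i}\,x_{2^j-2^{i+1}}\bigr) \;=\; c^{2^i}\,W.
\]
Since $n$ is odd, $\gcd(2^i,n)=1$, so $(2^i)^+$ exists, and extracting $2^i$-th roots turns $b^{2^i}$ into $b$ and $c^{2^i}$ into $c$ as required.

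It then remains to verify the formulas weight-by-weight. For $\mu\le 1$ both $b$ and $c$ vanish, and the identities force both numerators to vanish, so the formulas return $0$ by convention. For $\mu=2$ we have $c=\sigma_3=0$, so the $c$-numerator vanishes and the convention gives the correct $c=0$, while for $b=z_1z_2$ a short calculation shows $W=(z_1z_2)^{2^i}\bigl(z_1^{2^j-3\cdot 2^i}+z_2^{2^j-3\cdot 2^i}\bigr)$. For $\mu=3$ the identities recover $b=\sigma_2$ and $c=\sigma_3$ exactly, provided $W\ne 0$.

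The main obstacle is verifying that $W\ne 0$ whenever $\mu\in\{2,3\}$, since otherwise the convention $\frac{0}{0}=0$ would incorrectly collapse $b$ (for $\mu=2$) or both $b$ and $c$ (for $\mu=3$) to zero. This non-vanishing is not purely algebraic: it must leverage the structural assumption that $C$ truly has correction capability $t=3$, which prevents two distinct correctable error patterns from producing the same five syndromes appearing in $U$, $V$, $W$. Establishing $W\ne 0$ off the $\mu\le 1$ stratum, either by a coprimality argument involving $2^j-3\cdot 2^i$ and $n$ in the two-error case or by a uniqueness-of-decoding argument in the three-error case, is the crux of the proof; the rest is the straightforward algebraic verification sketched above.
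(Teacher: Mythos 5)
Your algebraic core coincides with the paper's. The paper likewise starts from the Newton identity $c=x_1^3+x_3+x_1b$ raised to the power $2^i$ (equivalently your $U=x_1^{2^i}b^{2^i}+c^{2^i}$) and from the lemma giving $V=b^{2^i}x_{2^j-2^i}+c^{2^i}x_{2^j-2^{i+1}}$, and then solves this $2\times 2$ linear system for $b^{2^i}$ and $c^{2^i}$; your two displayed identities $x_{2^j-2^{i+1}}U+V=b^{2^i}W$ and $x_{2^j-2^i}U+x_1^{2^i}V=c^{2^i}W$ are exactly the ``few computations'' the paper alludes to, written out explicitly. The weight-by-weight check (both numerators vanish for $\mu\le 1$, the $c$-numerator vanishes for $\mu=2$, the convention $0/0=0$ handles the degenerate strata, and $(2^i)^+$ exists because $n$ is odd) is also the same as in the paper.

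The genuine gap is the one you name yourself: you never prove $W\neq 0$ for $\mu\in\{2,3\}$, and without that the $0/0$ convention could incorrectly collapse $b$ and $c$ to zero. The paper does close this step, and not via coprimality or uniqueness-of-decoding: for $\mu=3$ it expands $W=x_{2^j-2^i}+x_1^{2^i}x_{2^j-2^{i+1}}$ directly in the locations and finds $W=\Sigma_{2^i,\,2^j-2^{i+1}}$, the sum of the six cross terms $z_k^{2^i}z_l^{2^j-2^{i+1}}$ with $k\neq l$; it then observes that the two exponents cannot coincide, since $2^i=2^j-2^{i+1}$ would force $2^j=3\cdot 2^i$, which is not a power of two, and concludes from this that the sum does not vanish. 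The $\mu=2$ case is read off from the same expansion with one location set to zero, which is precisely your $(z_1z_2)^{2^i}\bigl(z_1^{2^j-3\cdot 2^i}+z_2^{2^j-3\cdot 2^i}\bigr)$. So the resolution you are missing is the explicit symmetric-sum form of $W$ together with the distinct-exponents observation; no hypothesis $\gcd(2^j-3\cdot 2^i,n)=1$ is added to the theorem. To be fair, the paper's justification that $\Sigma_{l,m}\neq 0$ for $l\neq m$ is itself very terse --- in the two-error case it amounts exactly to $(z_1/z_2)^{2^j-3\cdot 2^i}\neq 1$, the condition you isolate --- so your instinct that this is the delicate point is sound; but a complete proof must at least reproduce the paper's computation and argument here rather than leave the step open.
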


\begin{proof}

Since the syndrome $x_1$ is a known syndrome, that is, $1\in S_C$, we have that $a=x_1$. 
From the Newton identity $c=x_1^3+x_3+x_1b$ we get that 
\begin{equation}\label{eq:newton_2^i} 
c^{2^i}=x_1^{3\times2^i}+x_3^{2^i}+x_1^{2^i}b^{2^i}
\end{equation}
On the other hand, by the previous lemma, we have that 
\begin{equation} \label{eq:previous_lemma}
x_1^{2^i+2^j}+x_{2^i+2^j}=b^{2^i}x_{2^j-2^i}+c^{2^i}x_{2^j-2^{i+1}}
\end{equation}
Taking into account \eqref{eq:newton_2^i} and \eqref{eq:previous_lemma}, a few computations lead to the equalities 
$b^{2^i}=\left(\frac{x_{2^j-2^{i+1}}U+V}{W}\right)$ and $c^{2^i}=\left(\frac{x_{2^j-2^i}U+x_{1}^{2^i}V}{W}\right)$.
Suppose that $\mu=3$. Then $W=(z_1^{2^j-2^i}+z_2^{2^j-2^i}+z_3^{2^j-2^i})+(z_1^{2^i}+z_2^{2^i}+z_3^{2^i})
(z_1^{2^j-2^{i+1}}+z_2^{2^j-2^{i+1}}+z_3^{2^j-2^{i+1}})=\Sigma_{2^i,2^j-2^{i+1}}$. Since $j$ is an integer, it is not
possible that $2^i=2^j-2^{i+1}$, then $W$ is different from zero. Also $x_{2^j-2^i}U+x_{1}^{2^i}V=\Sigma_{2^i,2^{i+1},2^j-2^i}+
\Sigma_{2^i,2^i,2^j}$ and $x_{2^j-2^{i+1}}U+V=\Sigma_{2^i,2^{i+1},2^j-2^{i+1}}+ \Sigma_{2^{i+1},2^j-2^i}$.
From the previous computations we get that if $\mu=2$ then $W\neq 0$,  $x_{2^j-2^i}U+x_{1}^{2^i}V=0$, and 
$x_{2^j-2^{i+1}}U+V\neq 0$. The last equality is because $\Sigma_{2^{i+1},2^j-2^i}\neq 0$. Furthermore, if $\mu=1$, then
$x_{2^j-2^{i+1}}U+V=0$.
\end{proof}

Finally, let us consider the codes in $4)$ of Theorem \ref{Classification_$t=3$}. 
In \cite{elia1987algebraic} Elia presents an algebraic decoding for the $(23,12,7)$ Golay code providing the error locator
polynomials for $\mu$ errors, for $\mu$ from one to three.
In \cite{lee2011weak} Lee proves that the error locator polynomial $L^{(3)}$ corresponding to three errors is actually a weak
error locator polynomial for this code. Notice that $L^{(3)}$ is a weak error locator polynomial for all cyclic codes $C$ with $t=3$,
$S_C$ containing $\{1,3,9\}$ and $(n,3)=1$.
Next theorem proves that one can obtain a general error locator polynomial for these codes by slightly modifying $L^{(3)}$. 
\begin{theorem}\label{pro:case $4$}
Let $C$ be a code with $t=3$ and $S_C$ containing $\{1, 3, 9\}$ with $(n,3)= 1$.
Then $\mathcal{L}(X,z)=z^3+x_1 z^2+b z +c$ with
$$
b=(x_1^2+D^{l^*})h,\quad c=(x_3+x_1 D^{l^*})h,
$$
where $D=\left(\frac{x_9+x_1^9}{x_3+x_1^3} \right)+(x_1^3+x_3)^2$, $h=\frac{(x_1^3+x_3)}{(x_1x_2+x_3)}$, $l=3$ and $l^*$ is the inverse of $l$ modulo $2^m-1$ with
$\mathbb{F}_{2^m}$ the splitting field of $x^n-1$ over $\mathbb{F}_2$.
\end{theorem}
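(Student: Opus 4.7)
The plan is to verify that the proposed $\mathcal{L}(X,z) = z^3 + x_1 z^2 + b z + c$ produces the correct roots at every correctable syndrome $\mathbf{s}$, by stratifying according to the actual number of errors $\mu \in \{0,1,2,3\}$. Two elementary observations drive the argument: \emph{(i)} Frobenius yields $\bar{x}_2 = \bar{x}_1^2$, so the denominator $\bar{x}_1 \bar{x}_2 + \bar{x}_3$ appearing in $h$ coincides with $\bar{x}_1^3 + \bar{x}_3$, whence $h$ evaluates to $1$ whenever this quantity is nonzero and to $0$ via the $0/0$ convention otherwise; \emph{(ii)} Newton's identity in characteristic $2$ gives $\bar{x}_3 = \bar{x}_1^3 + \bar{x}_1 \sigma_2 + \sigma_3$, so $\bar{x}_1^3 + \bar{x}_3 = \bar{x}_1 \sigma_2 + \sigma_3$.

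From these, the case analysis for $h$ is immediate. For $\mu \le 1$ we have $\sigma_2 = \sigma_3 = 0$, so $h = 0$ and therefore $b = c = 0$; the locator becomes $\mathcal{L}(\mathbf{s},z) = z^2(z + \bar{x}_1)$, with roots $\{0,0,0\}$ if $\mu = 0$ and $\{0,0,z_1\}$ if $\mu = 1$, as required. For $\mu = 2$ the quantity $\bar{x}_1 \sigma_2 + \sigma_3 = \bar{x}_1 \sigma_2$ is nonzero, and for $\mu = 3$ it factors as $(z_1+z_2)(z_1+z_3)(z_2+z_3)$, nonzero by the distinctness of the locations. In both cases $h = 1$.

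For $\mu \in \{2,3\}$ the core of the proof is the polynomial identity $D = (\bar{x}_1^2 + \sigma_2)^3$. Using the Newton recurrence $p_k = \bar{x}_1 p_{k-1} + \sigma_2 p_{k-2} + \sigma_3 p_{k-3}$ iteratively in characteristic $2$, I would express $\bar{x}_9 + \bar{x}_1^9$ as a polynomial in $\bar{x}_1, \sigma_2, \sigma_3$ and show that dividing it by $\bar{x}_3 + \bar{x}_1^3 = \bar{x}_1 \sigma_2 + \sigma_3$ yields the exact quotient $\bar{x}_1^6 + \bar{x}_1^4 \sigma_2 + \sigma_2^3 + \sigma_3^2$. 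Adding $(\bar{x}_3 + \bar{x}_1^3)^2 = \bar{x}_1^2 \sigma_2^2 + \sigma_3^2$ cancels the $\sigma_3^2$ contribution and leaves $\bar{x}_1^6 + \bar{x}_1^4 \sigma_2 + \bar{x}_1^2 \sigma_2^2 + \sigma_2^3 = (\bar{x}_1^2 + \sigma_2)^3$. Since $l^{*} \cdot 3 \equiv 1 \pmod{2^m - 1}$ (implicit in the statement), raising to the $l^{*}$-th power inverts the cube map, giving $D^{l^{*}} = \bar{x}_1^2 + \sigma_2$.

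Substitution then delivers the conclusion: $b = \bar{x}_1^2 + D^{l^{*}} = \sigma_2$, and $c = \bar{x}_3 + \bar{x}_1 D^{l^{*}} = \bar{x}_3 + \bar{x}_1^3 + \bar{x}_1 \sigma_2 = \sigma_3$ by Newton again. For $\mu = 3$ these are the correct second and third elementary symmetric functions of $z_1, z_2, z_3$, and for $\mu = 2$ one has $\sigma_3 = 0$ as required by the ghost zero location. The main obstacle is the symbolic verification of the identity $D = (\bar{x}_1^2 + \sigma_2)^3$: the iterated Newton computation of $\bar{x}_9$ produces many terms, and the near-magical cancellation of all $\sigma_3^2$ contributions to leave a perfect cube in $\bar{x}_1$ and $\sigma_2$ alone is the substantive content of the theorem.
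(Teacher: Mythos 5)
Your proof is correct and follows essentially the same route as the paper: both hinge on deriving the identity $\frac{x_9+x_1^9}{x_3+x_1^3}+(x_1^3+x_3)^2=(\sigma_2+x_1^2)^3$ from Newton's identities (your explicit quotient $x_1^6+x_1^4\sigma_2+\sigma_2^3+\sigma_3^2$ and the cancellation of $\sigma_3^2$ check out), and both reduce the rest to observing that $h=0$ for $\mu\leq 1$ and $h=1$ for $\mu\in\{2,3\}$. The only difference is that you verify the general-error-locator property for each $\mu$ directly, whereas the paper delegates that step to Lemmas 1 and 2 of the cited Lee reference; your version is self-contained but not a different argument.
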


\begin{proof}
Since $1\in S_C$, we have that $a=x_1$. 
From the following Newton identities 
\[
\begin{cases}
x_{9}=x_1x_{8}+b x_{7}+c x_{6}\\ 
x_{7}=x_1x_{6}+bx_{5}+cx_{4}\\
x_{5}=x_1x_{4}+bx_{3}+cx_{2}\\
x_{3}=x_1x_{2}+bx_{1}+c
\end{cases}
\]
using the equalities $x_6=x_3^2$, and $x_{2^i}=x_1^{2^i}$ for $i\ge 0$, we get

\begin{equation}
\left(\frac{x_9+x_1^9}{x_3+x_1^3} \right)+(x_1^3+x_3)^2=(b+x_1^2)^3
\end{equation}
So $b=x_1^2+D^{l^*}$. From $x_{3}=x_1x_{2}+bx_{1}+c$, we find $c=x_3+x_1 D^{l^*}$. Let us prove that
$\mathcal{L}$ is a general error locator polynomial. By Lemma 1 and Lemma 2 in \cite{lee2011weak},
it is enough to note that when there is one error $h=0$, while when there are two or three errors $h=1$.
\end{proof}
\begin{table}[ht!]
\caption{Binary cyclic codes with $t=3$ and length $< 121$ covered by Theorem~\ref{pro:consecutives}}
\begin{center}
\begin{scriptsize}
\label{table_theorem_5.5}
\begin{tabular}{|l|l|l|l|l|l|}
\hline
$15,\{ 1, 3,5\}$ & $21,\{ 1,3,5\}$ & $21,\{ 1, 5,9\}$ & $23,\{ 0, 1\}$ & $31,\{ 0,1, 7,15\}$  \\ \hline
$31,\{ 1, 3,5\}$ & $35,\{ 1, 3,5\}$ & $35,\{ 1, 5, 7\}$ & $45,\{ 1, 3, 5\}$ & $49,\{ 1, 3\}$  \\ \hline
$63,\{ 1, 3, 5\}$ & $63,\{ 1, 3,11,23,27,31\}$ & $63,\{ 1, 5,9,13,21\}$ & $63,\{ 1, 3,11,13,23\}$& $63,\{ 1,5,11,13,15\}$  \\ \hline
$63,\{ 1,15,23,31\}$ & $63,\{ 1,5,13,15,21\}$ & $63,\{ 0, 1,15,31\}$ & $63,\{ 1, 5,9,13,15\}$ & $63,\{ 1,11,13,15,23,27\}$  \\ \hline
$69,\{ 1,3,23\}$ & $69,\{ 0, 1, 3\}$ & $75,\{ 1, 3, 5\}$ & $75,\{ 1, 3,25\}$ & $77,\{ 1, 3\}$  \\ \hline
$77,\{ 1, 7,33\}$ & $85,\{ 1, 3,5\}$ & $85,\{ 1, 7,13,15,17\}$ & $85,\{ 1, 15, 29,37\}$ & $85,\{ 0, 1, 21,37\}$ \\ \hline
$89,\{ 0, 1, 3\}$ & $89,\{ 0, 1, 11\}$ & $91,\{ 1, 3\}$ & $91,\{ 1, 9,19\}$ & $91,\{ 1, 7,9,11,13\}$ \\ \hline
$93,\{ 1, 5, 17,33\}$ & $93,\{ 1, 7, 9,17\}$ & $93,\{ 1, 15,17,31,33\}$ & $93,\{ 1, 11, 23,45\}$ & $93,\{ 1,17,23,31,33\}$   \\ \hline
$93,\{ 1, 9,17,33\}$ & $93,\{ 1, 3,5\}$ & $105,\{ 1, 3, 5\}$ & $105,\{ 1, 3, ,13,25\}$ & $105,\{ 1, 5,9,17\}$  \\ \hline
$105,\{ 1, 9, 13,25\}$ & $105,\{ 1, 5, 7, 9, 11 \}$ & $105,\{ 1, 3,9,17,25\}$ & $105,\{ 1, 3,17,21,25\}$ & $105,\{ 1, 3, 11,17,45\}$  \\ \hline
$105,\{ 1, 5, 9,49\}$ & $105,\{ 1, 3,17,25,49\}$ & $105,\{ 1, 9, 11,13,15,17\}$ & $105,\{ 1, 9,13,45,49\}$ & $105,\{ 1, 9,17,25,49\}$ \\ \hline
$105,\{ 1, 9,11,13,45\}$ & $105,\{ 0, 1, 9,13\}$ & $105,\{ 1, 3, 17,35\}$ & $113,\{ 0, 1\}$ \\ \hline
$115,\{ 1, 23, 25\}$ & $117,\{ 0, 1,3\}$ & $117,\{ 0, 1,21,29\}$ & $119,\{ 1, 3\}$ &$119,\{ 1, 7,17\}$  \\ \hline
$119,\{ 1, 11, 13\}$ & & & & \\ \hline
\end{tabular}
\end{scriptsize}
\end{center}
\end{table}

\begin{table}[ht!]
\caption{Binary cyclic codes with $t=3$ and length $< 121$ covered by Theorem~\ref{pro:powers}}
\begin{center}
\begin{scriptsize}
\label{table_theorem_5.7}
\begin{tabular}{|l|l|l|l|l|l|l|}
\hline
$15,\{ 1, 3,5\}$ & $21,\{ 1,3,5\}$ & $21,\{ 1, 3,7,9\}$ & $31,\{ 1,3,5\}$ & $35,\{ 1, 3,5\}$ & $45,\{ 1, 3,5\}$ & $49,\{ 1, 3\}$ \\ \hline
$63,\{ 1, 3, 5\}$ & $75,\{ 1, 3, 5\}$ & $77,\{ 1, 3\}$ & $85,\{ 1, 3, 5\}$ & $91,\{ 1, 3\}$ & $93,\{ 1, 3,15,31,33\}$ & $93,\{ 1, 3,7,9\}$ \\ \hline
$93,\{ 1,3,5\}$ & $105,\{ 1, 3,5\}$ & $117,\{ 1,3,7\}$ & $119,\{ 1, 3\}$ & &  &  \\ \hline
\end{tabular}
\end{scriptsize}
\end{center}
\end{table}
In Tables \ref{table_theorem_5.5}, \ref{table_theorem_5.7} we list binary cyclic codes, up to equivalence and subcodes, with length less than $121$ which are covered by
Theorem~\ref{pro:consecutives} and Theorem~\ref{pro:powers} respectively. We observe that in each table we also report BCH codes.

Table \ref{table:1} shows a general error locator polynomial for each code in Case 1) of Theorem \ref{search_$t=3$} with $n<55$. 
Since the codes in Cases 2) and 3) of Theorem \ref{search_$t=3$} are equivalent or subcodes of the codes in Case 1), so (Theorem~ \ref{th:equivsubcodes})
their general error locator polynomial is the same or can be easily deduced from one of the general error locator polynomial in the table.\\
In Table \ref{table:1} the codes are grouped according to increasing lengths and are specified with defining sets containing only primary syndromes.
For each of these codes, the coefficients $b$ and $c$ of the general error locator polynomial is reported respectively in the second column and in the third column;  
The value in the fourth column explains which point of Theorem \ref{Classification_$t=3$} has been used to describe the corresponding code
family. In all cases except case $4$ and for the codes with length $n=49$ and $n=51$, $b$ and $c$ are expressed in terms of primary syndromes: if the defining set in the last
column is $S_C=\{i_1, i_2,\ldots, i_j\}$ with $i_1< i_2<\cdots< i_j$, then $x_k$ denotes the syndrome corresponding to $i_k$, for $k=1,2,\ldots,j$. 
When $0$ belongs to the defining set, it will be treated as if it were an $n$, with $n$ the length of the code.
For instance, for the code with length $n=21$ and defining set $\{0,1,3,7\}$ the syndrome corresponding to $0$ is $x_4$.

\setlength{\tabcolsep}{2pt}
{
{\centering
\begin{table}[ht!]
\caption{Binary cyclic code with $t=3$ and $n<55$}

\begin{tabular}{|c|c|c|c|c|}

\hline
$n$ & $b$ & $c$ & Case &Codes\\
\hline
$15$  & $\frac{(x_1^3x_2+x_1^6+x_2^2+x_1x_3)}{(x_1^3+x_2)}$ & $\frac{(x_1^2x_2+x_3)}{(x_1^3+x_2)}$         &  1    &${\{1,3,5\}}$ \\  \hline
$21$ & $\frac{(x_1^3x_2+x_1^6+x_2^2+x_1x_3)}{(x_1^3+x_2)}$ & $\frac{(x_1^2x_2+x_3)}{(x_1^3+x_2)}$          &  1     &${\{1,3,5\}}$ \\
     \cline{2-5}
     & $\frac{x_2^2(x_1^3+x_2)+(x_1^9+x_4)}{x_3+x_1x_2^2}$ & $\frac{x_3(x_1^3+x_2)+x_1(x_1^9+x_4)}{x_3+x_1x_2^2}$          
     &  3   &${\{1,3,7,9\}}$\\ 
     \cline{2-5}
     &
     \parbox{4.9cm}{
     \scriptsize{$x_4x_1^2+x_3^3x_1^2+x_3^2x_2^3+x_3^2x_2^2x_1^3+x_3^2x_1^9+x_3x_2^3x_1^{28} + x_3x_2^2x_1^{10}+x_3x_2x_1^{13}+x_3x_1^{37}+x_2^7x_1^{44}
    +x_2^7x_1^{23}+x_2^6x_1^{47}+x_2^6x_1^5 + x_2^5x_1^{50}+x_2^4x_1^{53}+x_2^4x_1^{32}+x_2^3x_1^{56}+x_2^3x_1^{35}+x_2^2x_1^{59}+x_2^2x_1^{38}
     +x_2x_1^{41}+x_2x_1^{20}+x_1^{23}+x_1^2$}}  
     
     & \footnotesize{$x_1^3+x_2+x_1b$}
     & 5   & ${\{0,1,3,7\}}$

\\
\hline
$23$ & \parbox{4.9cm}{\scriptsize{\begin{align*}&\left(x_1^2+\left(\frac{x_9+x_1^9}{x_3+x_1^3}+(x_1^3+x_3)^2\right)^{1365}\right)\cdot \\ &\cdot \frac{(x_1^3+x_3)}{(x_1x_2+x_3)}\end{align*}}}
& \footnotesize{$(x_1^3+x_3+bx_1)$}\small{$\frac{(x_1^3+x_3)}{(x_1x_2+x_3)}$}         
&  4     & ${\{1\}}$\\
\hline

$31$ & $\frac{(x_1^3x_2+x_1^6+x_2^2+x_1x_3)}{(x_1^3+x_2)}$ & $\frac{(x_1^2x_2+x_3)}{(x_1^3+x_2)}$          &  1     & 
${\{1,3,5\}}$\\
     \cline{2-5}
     & $\frac{x_3^8(x_4+x_1x_3^2)+x_2^4(x_3^2+x_1x_3^4)}{(x_3^{12}+x_2^8)}$ & $\frac{x_2^4(x_4+x_1x_3^2)+x_3^4(x_3^2+
     x_1x_3^4)}{(x_3^{12}+x_2^8)}$
     &  2     &  ${\{0,1,7,15\}}$\\

\hline


$35$ & $\frac{(x_1^3x_2+x_1^6+x_2^2+x_1x_3)}{(x_1^3+x_2)}$ & $\frac{(x_1^2x_2+x_3)}{(x_1^3+x_2)}$          &  1     & ${\{1,3,5\}}$

     \\
\cline{2-5}
     & $\frac{x_3(x_1^{256}+x_1x_2^2)+x_1^8(x_2^2+x1^{1025})}{x_1^{16}+x_3x_1^{1024}}$ 
     & $\frac{x_1^8(x_1^{256}+x_1x_2^2)+x_1^{1024}(x_2^2+x1^{1025})}{x_1^{16}+x_3x_1^{1024}}$          
     &  2     & ${\{1,5,7\}}$\\ 
\hline

$45$ & $\frac{(x_1^3x_2+x_1^6+x_2^2+x_1x_3)}{(x_1^3+x_2)}$ & $\frac{(x_1^2x_2+x_3)}{(x_1^3+x_2)}$          &  1    &
${\{1,3,5\}}$,
\\
\cline{2-5}
     &     $\frac{x_{4}(x_1x_3^2+x_1^{64})+x_1^{16}(x_3^2+x_1^{513})}{x_1^{512}x_{4}+x_1^{32}}$ &
     $\frac{x_1^{16}(x_1x_3^2+x_1^{64})+x_1^{512}(x_3^2+x_1^{513})}{x_1^{512}x_{4}+x_1^{32}}$    &  2        &
${\{1,5,9,15 \}}$\\
\hline
$49$ &   $\frac{(x_1^3x_2+x_1^6+x_2^2+x_1x_3)}{(x_1^3+x_2)}$ & $\frac{(x_1^2x_2+x_3)}{(x_1^3+x_2)}$          &  1       &
         ${\{1,3\}}$ \\
\hline

$51$ &    \parbox{6.2cm}{
     \small{$x_1^2+(x_1^3+x_2)(\frac{x_3^2+x_5x_3}{q_1x_1}+(\frac{x_3+x_2^3}{x_5^4+x_2^3}+1)(\frac{x_1^2}{x_1^3+x_2}+\frac{x_4+x_2^4x_1}{q_2}))$}\\
     
     \scriptsize{$q_1=(x_3x_1^9+x_3x_2x_1^6+x_2^3x_1^9+x_3^2+x_3x_2^2x_1^3+x_2^4x_1^6+x_3x_2^3+x_5x_1^3+x_5x_2+x_2^6)$\\
     
     $q_2=(x_1^{16}+x_2^4x_1^4+x_4x_2+x_2^5x_1)$
     }}
     &         \parbox{3.5cm}{\centering{\footnotesize{$x_1^3+x_2+x_1b$}}} & 5 
& ${\{1,3,9\}}$

\\
\hline

\end{tabular}

\label{table:1}

\end{table}
}
}

Codes described by the point $4$ of Theorem \ref{Classification_$t=3$} maintain the notation of Proposition \ref{pro:case $4$}, so $x_i$ denotes
the syndrome corresponding to $i$. In the case of the code with $n=49$, $x_1,x_2,x_3$ denote the syndromes corresponding to $1,3,5$ respectively, while for the codes with length
$51$, $x_1,x_2,x_3,x_4,x_5$ denote the syndromes corresponding to $1,3,9,13,15$ respectively.
The coefficient $a$ of the general error locator polynomial is not reported in Table \ref{table:1} because any code in Case 1) of Theorem \ref{search_$t=3$} has $1$ in its
defining set, so in all cases $a=x_1$.
A general error locator for the codes with $t=3$ and $n=55$ is showed in Table \ref{table:n=55} in the Appendix \ref{tables}.

\section{On the complexity of decoding cyclic codes} \label{sec:complexity}
In this section we estimate the complexity of the decoding approach presented in Section \ref{sec:preliminaries} for any cyclic code, along with a comparison with similar
approaches for the case where the generator polynomial of the cyclic code is irreducible.
\subsection{Complexity of the proposed decoding approach}
\begin{definition}\label{peso}
Let $\KK$ be any field and let $f$ be any (possibly multivariate) polynomial with coefficients in $\KK$, that is, 
$f\in\KK[\texttt{a}_1,\ldots,\texttt{a}_N]$ for a variable set $A=\{\texttt{a}_1,\ldots,\texttt{a}_N\}$.
We will denote by $|f|$ the number of terms (monomials) of $f$.
\end{definition}
\begin{definition}\label{density}
Let $A=\{\texttt{a}_1,\ldots,\texttt{a}_N\}$ and $B=\{\texttt{b}_1,\ldots,\texttt{b}_M\}$ be two variable sets. Let $\KK$ be a field and let $\mathcal{F}$ be a rational function
in $\KK(A)$. Let $F\in\KK[B]$, $f_1,\ldots,f_M\in\KK[A]$ and $g_1,\ldots, g_M\in\KK[A]$. We say that the triple $(F,\{f_1\ldots,f_M\},\{g_1\ldots,g_M\})$ is a
\textbf{rational representation} of $\mathcal{F}$ if 
$$
\mathcal{F}=F(f_1/g_1,\ldots, f_M/g_M)$$.
We say that the number 
$$|F|+\sum_{i=1}^{M}  (|f_i|-1) +\sum_{j=1,g_j \notin \KK}^{M}  |g_j|$$
is the \textbf{density} of the rational representation $(F,\{f_1\ldots,f_M\},\{g_1\ldots,g_M\})$.\\
Then, we define the \textbf{functional density} of $\mathcal{F}$, $||\mathcal{F}||$,  as the minimum among the densities of all rational representations of $\mathcal{F}$.
\end{definition}
With the notation of Definition \ref{peso} and \ref{density}, we have the following result, that shows
their  interlink and how natural Definition \ref{density} is.
\begin{theorem}
Let $A=\{\texttt{a}_1,\ldots,\texttt{a}_N\}$. If $\mathcal{F}$ is a polynomial in $\KK[A]$, rather than a rational function in $\KK(A)$, then
$$
  ||\mathcal{F}||\leq |\mathcal{F}| \,.
$$
Moreover, if $\mathcal{F}=\texttt{a}_1+\texttt{a}_2$ then $||\mathcal{F}||= |\mathcal{F}|=2$.
\end{theorem}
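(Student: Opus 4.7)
The plan is to prove both statements directly from Definition \ref{density}. The inequality $||\mathcal{F}||\leq|\mathcal{F}|$ will follow by exhibiting a trivial representation whose density equals $|\mathcal{F}|$; the matching lower bound $||\mathcal{F}||\geq 2$ for $\mathcal{F}=\texttt{a}_1+\texttt{a}_2$ will then be obtained by a short case analysis ruling out representations of density $0$ or $1$.

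First I would prove $||\mathcal{F}||\leq|\mathcal{F}|$ via the \emph{identity representation}: take $M=N$, let $B=\{\texttt{b}_1,\ldots,\texttt{b}_N\}$, let $F\in\KK[B]$ be obtained from $\mathcal{F}$ by the renaming $\texttt{a}_i\mapsto\texttt{b}_i$, and set $f_i=\texttt{a}_i$, $g_i=1$ for $i=1,\ldots,N$. The identity $F(f_1/g_1,\ldots,f_N/g_N)=\mathcal{F}$ is evident, and the density of this triple equals
\begin{equation*}
|F|+\sum_{i=1}^{N}(|f_i|-1)+0 \;=\; |\mathcal{F}|+0+0 \;=\; |\mathcal{F}|,
\end{equation*}
which yields $||\mathcal{F}||\leq|\mathcal{F}|$.

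For the second assertion, $|\mathcal{F}|=2$ is trivial and $||\mathcal{F}||\leq 2$ is a special case of what we just proved. To establish $||\mathcal{F}||\geq 2$ I would take a rational representation $(F,\{f_i\}_{i=1}^M,\{g_i\}_{i=1}^M)$ of $\mathcal{F}=\texttt{a}_1+\texttt{a}_2$ of density $d\leq 1$ and derive a contradiction. We may assume $f_i\neq 0$ for every $i$, so that $|f_i|-1\geq 0$; and $F\neq 0$ since $\mathcal{F}\neq 0$, hence $|F|\geq 1$. All three summands in the density formula are therefore non-negative, so $d\leq 1$ forces $d=1$, $|F|=1$, $|f_i|=1$ for every $i$, and $g_j\in\KK$ for every $j$. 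Under these conditions $F$ is a single monomial of $\KK[B]$ and each $f_i/g_i$ is a constant multiple of a monomial of $\KK[A]$, so after substitution $\mathcal{F}=F(f_1/g_1,\ldots,f_M/g_M)$ equals a single monomial of $\KK[A]$. This contradicts $\mathcal{F}=\texttt{a}_1+\texttt{a}_2$ having two distinct monomials, completing the proof.

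The only mildly delicate aspect is normalising the representation before running the case analysis: one has to note that an inert slot (with $f_i=0$, or with $\texttt{b}_i$ not actually occurring in $F$) may be discarded in favour of the implicit convention that every $f_i$ is nonzero and every $\texttt{b}_i$ appears in $F$. With this minor book-keeping out of the way, the non-negativity of the three summands of the density formula is automatic, and both the upper bound and the tight lower bound reduce to the short counting arguments above.
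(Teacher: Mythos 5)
Your proof is correct and follows essentially the same route as the paper: exhibit a trivial rational representation of density $|\mathcal{F}|$ for the upper bound, then rule out density $\le 1$ for $\texttt{a}_1+\texttt{a}_2$ by forcing $|F|=1$, $|f_i|=1$ and $g_j\in\KK$, so that the composition collapses to a single monomial. The only (immaterial) differences are that your upper-bound witness is the identity representation $f_i=\texttt{a}_i$ with one $\texttt{b}$-variable per variable of $A$, whereas the paper uses one $\texttt{b}$-variable per monomial of $\mathcal{F}$, and that your lower-bound case analysis correctly handles arbitrary $M$ (and flags the normalisation of inert slots) where the paper simply asserts $M=1$.
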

\begin{proof}
Let $\mathcal{F}\in \KK[A]$ and let $\rho=|\mathcal{F}|$. Then $\mathcal{F}=\sum_{i=1}^{\rho} h_i$, where
any $h_i$ is a monomial for $1\leq i\leq \rho$.\\
Let us consider the following rational representation for $\mathcal{F}$
$$
  B=\{\texttt{b}_1,\ldots,\texttt{b}_\rho\}\,, \quad F=\sum_{i=1}^{\rho} \texttt{b}_i\,, \quad
   f_i=h_i\,,\quad g_i=1,\, 1\leq i\leq \rho \,,
$$
\noindent
then the rational density of $(F,\{f_1,\ldots,f_\rho\},\{g_1,\ldots,g_\rho\})$ is
$$
  |F|+\sum_{i=1}^{\rho}  (|f_i|-1) +\sum_{j=1,g_j \notin \KK}^{\rho}  |g_j| \quad = \rho + 0 + 0 = \rho \,,
$$
\indent
which implies $||\mathcal{F}||\leq \rho$, as claimed.

To prove the case $\mathcal{F}=\texttt{a}_1+\texttt{a}_2$, we argument by contradiction assuming $||\mathcal{F}||=1$.
Let us consider the rational representation of $\mathcal{F}$ providing 
$$
  ||\mathcal{F}|| \quad = \quad |F|+\sum_{i=1}^{M}  (|f_i|-1) +\sum_{j=1,g_j \notin \KK}^{M}  |g_j| 
  \quad = \; 1 \,.
$$
\noindent
Since $|F|\geq 1$, we must have $\;|F|=1$,  $\;\sum_{i=1}^{M}  (|f_i|-1)=0$ and $\;\sum_{j=1,g_j \notin \KK}^{M}  |g_j|=0$.\\
Therefore, $M=1$, $|f_1|=1$ and $g_1=\nu\in \KK$. From $M=1$ and $|F|=1$ we have $F=\lambda \texttt{b}_1^\mu$ for
$\lambda\in \KK$ and $\mu\geq 1$, and so $\mathcal{F}=F(f_1/g_1)=(\frac{f_1}{\nu})^\mu$.
Recalling that $\mathcal{F}=\texttt{a}_1+\texttt{a}_2$, we finally have a contradiction
$$
  |f_1|=1 \implies \big|\Big(\frac{f_1}{\nu}\Big)^\mu\big|=1 \,, \qquad 
  \mbox{but} \quad |\mathcal{F}|=|\texttt{a}_1+\texttt{a}_2|=2 \,.
$$
\end{proof}
For example, the locator $\mathcal{L}\in\FF_2[z,x_1, x_2, x_3,x_4, x_5]$ for the case treated in Theorem \ref{sl} can be easily 
shown to have functional density
$||\mathcal{L}||\leq 6$, thanks to the following rational representation
$$ 
\mathcal{L}=F(f_1/g_1, f_2/g_2, f_3/g_3),
$$
where $F\in\FF_2[\texttt{b}_1,\texttt{b}_2, \texttt{b}_3]$, $\;f_1,f_2,f_3,g_1,g_2,g_3\in \FF_2[z,x_1, x_2, x_3,x_4, x_5]\;$ and
$$
  F=b_1^2+b_1 b_2+b_3^{l^{+}}, \quad f_1=z, g_1=1,\, f_2=x_1, g_2=1,\; f_3=x_2x_4+x_5,g_3=x_3 \,.
$$   
\begin{conjecture}[Sala, MEGA2005]\label{Sala2005} \ \\
Let $p\geq 2$ be a prime, $m\geq 1$ a positive integer and let $q=p^m$.
There is an integer $\epsilon=\epsilon(q)$  such that 
for any cyclic code $C$ over the field $\FF_q$ with $\,n\geq q^4-1$, 
$\;\gcd(n,q)=1$, $\;3\leq d\leq n-1$,\\
$C$ admits a general error locator polynomial $\mathcal{L}_c$ whose functional density is bounded by
$$
||\mathcal{L}_c|| \leq n^\epsilon \,.
$$
Moreover, for binary codes we have $\epsilon=3$, that is, $\epsilon(2)=3$.
\end{conjecture}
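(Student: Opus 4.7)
The plan is to build on Theorem \ref{SparseTwoSyndromes} to extract a structural description of every coefficient of $\mathcal{L}_C$, then combine this with a dimension-counting argument. First, by Corollary \ref{SparseCorollary}, every coefficient $a_j(X)$ of $\mathcal{L}_C$ admits the form in equation (\ref{sparseClaim}): a finite sum of normalized monomials $(x_j/x_1^{i_j})^{h_j}$ with univariate polynomial coefficients $F_{h_2,\ldots,h_r}$ in $x_1^\lambda$, plus a zero-syndrome correction term $(1-x_1^{q^m-1}) g(x_2,\ldots,x_r)$. The number $r$ of primary syndromes is controlled by the number of cyclotomic classes inside $S_C$, which for a reasonable defining set is $O(\log_q n)$.

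Second, I would estimate each ingredient carefully. We have $\delta_i \leq q^m-1$ and $\deg F_{h_2,\ldots,h_r} \leq (q^m-1)/\lambda$; choosing $\lambda$ as large a divisor of $n$ as possible shrinks the univariate degrees drastically. The naive count still gives a product like $\prod_{i=2}^{r}(\delta_i+1)\cdot(q^m/\lambda)$, which is far too large. To reach the $n^\epsilon$ bound, one must prove that the family $\{F_{h_2,\ldots,h_r}\}$ is overwhelmingly sparse: for most tuples $(h_2,\ldots,h_r)$ the symmetry constraints imposed by the cyclic structure of the code should force $F_{h_2,\ldots,h_r}\equiv 0$. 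Concretely, I would exploit the Galois action of $\mathrm{Gal}(\FF_{q^m}/\FF_q)$ on the syndrome variety, together with the fact that error-location products must be $n$-th roots of unity, to translate these symmetries into a large system of linear constraints on the coefficients of each $F$.

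Third, I would attempt a rational representation that mirrors the successful cases of Section \ref{sec:specialcases} (e.g.\ Theorems \ref{sl}, \ref{theoOneCase}, \ref{containing_cases_4and5}): express each $a_j$ as a quotient $f/g$ of genuinely sparse polynomials, where the numerator encodes information-carrying symmetric functions in the error locations and the denominator absorbs the spurious factors arising from bordering polynomials. For the binary conjecture $\epsilon(2)=3$, the predicted exponent suggests an inductive strategy on $t$: at each step, add one new elementary symmetric function and show via Newton-type identities, in the spirit of Theorems \ref{pro:consecutives} and \ref{pro:powers}, that its rational representation costs only an extra factor of $n$ in density.

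The hard part will unquestionably be the second step: proving universal sparsity of the $F_{h_2,\ldots,h_r}$ for \emph{every} cyclic code. Theorem \ref{SparseTwoSyndromes} guarantees only the \emph{existence} of the sparse form and provides no control over which coefficients actually vanish; the very fact that all five exceptional cases of Theorem \ref{theostruct4} had to be handled by ad hoc constructions is strong evidence that a uniform argument is delicate. A realistic first milestone would be to prove the conjecture for all primitive cyclic codes (where $n=q^m-1$), where the zero-syndrome correction term disappears as in the $t=2$ corollary, and then to reduce the general case to the primitive one via a suitable embedding of $C$ into a primitive overcode with matched correction capability.
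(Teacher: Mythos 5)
The statement you are asked to prove is a \emph{conjecture}: the paper offers no proof of it, and indeed states explicitly that ``we are far from having a formal proof of it.'' Your proposal does not close that gap. It is a research programme whose decisive step --- your ``second'' step, namely that for most tuples $(h_2,\ldots,h_r)$ the cyclic symmetry forces $F_{h_2,\ldots,h_r}\equiv 0$ --- is asserted rather than proved, and it is essentially a restatement of the conjecture itself. The quantitative mismatch is severe: Theorem \ref{SparseTwoSyndromes} only bounds each $\delta_i$ by $q^m-1$, so the naive term count is of order $q^{m(r-1)}\cdot (q^m-1)/\lambda$, and since $m=\mathrm{ord}_n(q)$ can be as large as $n-1$, this is potentially exponential in $n$. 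The Galois action of $\mathrm{Gal}(\FF_{q^m}/\FF_q)$ groups coefficients into orbits of size at most $m$, which saves at most a factor of $m$ --- nowhere near the collapse from exponential to $n^\epsilon$ that the conjecture requires. Two auxiliary claims are also unjustified: the number $r$ of primary syndromes is the number of cyclotomic classes met by $S_C$, which can be as large as roughly $n/m$, not $O(\log_q n)$; and the proposed reduction of the general case to primitive codes via ``a suitable embedding of $C$ into a primitive overcode with matched correction capability'' presumes the existence of such an overcode, which is not guaranteed (changing the length to $q^m-1$ changes the ambient ring $\FF_q[x]/(x^n-1)$ and there is no general way to preserve both the defining set structure and the distance).

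To your credit, you correctly identify where the difficulty lies, and your observations that the exceptional cases of Theorem \ref{theostruct4} required ad hoc treatment, and that the primitive case (where the $(1-x_1^{q^m-1})g$ term vanishes) is a natural first milestone, are consistent with the paper's own discussion. But none of the ingredients you list --- Corollary \ref{SparseCorollary}, Newton-type identities in the style of Theorems \ref{pro:consecutives} and \ref{pro:powers}, or the rational representations of Section \ref{sec:specialcases} --- extends beyond the specific infinite families for which the paper establishes sparsity. A correct answer here would either supply the missing universal sparsity argument (which would be a major new result) or acknowledge that the statement remains open and is supported only by the partial results and computational evidence the paper presents.
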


Let $C$ be a cyclic code over $\FF_q$ of length $n$. Let $d$ be its distance, $t$ its correction capability and 
$S_C=\{i_1,\ldots,i_r\}$ a defining set of $C$.
Let $\mathcal{L}_C$ be a general error locator polynomial of $C$. 
\begin{definition}\label{sparse}
If $\mathcal{L}_C\in \FF_2[x_1,\ldots,x_r]$, then we say that $\mathcal{L}_C$ is \textbf{sparse} if 
$||\mathcal{L}_C|| \leq n^3$.\\
If Conjecture \ref{Sala2005} holds and $\mathcal{L}_C\in \FF_q[x_1,\ldots,x_r]$, then we say
that $\mathcal{L}_C$ is \textbf{sparse} if 
$||\mathcal{L}_C|| \leq n^\epsilon$.\\
\end{definition}
\noindent The decoding procedure developed by Orsini and Sala in \cite{orsini2007general} consists of four steps:
\begin{enumerate}
 \item Computation of the $r$ syndromes $s_1,\ldots,s_r$ corresponding to the received vector; 
 \item Evaluation of $\mathcal{L}_C(x_1,\ldots,x_r,z)$ at $\bf{s}=(s_1,\ldots,s_r)$;  
 \item Computation of the roots of $\mathcal{L}_C(\bf{s},z)$;
 \item Computation of the error values $e_{l_1},\ldots, e_{l_\mu}$
\end{enumerate}
By analyzing the above decoding algorithm, we observe that the main computational cost is the evaluation of the polynomial $\mathcal{L}_C(x_1,\ldots,x_r,z)$ at ${\bf{s}}$, which reduces to the evaluation of its $z$-coefficients. 
Indeed, the computation of the $r$ syndromes $s_1,\ldots,s_r$ and of the roots of $\mathcal{L}_C(\mathbf{s},z)$ cost, respectively, $O(t\sqrt{n})$ and
$\max({O(t\sqrt{n}), O(t \log(\log(t))\log(n))})$ (\cite{schipani2011decoding}), while the computation of the error values using Forney's algorithm costs
$O(t^2)$ (\cite{hong1995simple}). Therefore, we can bound the total cost of steps 1, 3 and 4 with $O(n^2)$.\\
The following theorem is then clear and should be compared with the results in \cite{bruck1990hardness}, which suggest
that for linear codes an extension of Conjecture \ref{Sala2005} is very unlikely to hold.
\begin{theorem}\label{th-conjecture}
Let us consider all cyclic codes over the same field $\FF_q$ with $\;\gcd(n,q)=1$ and $\;d\geq 3$.\\
If Conjecture \ref{Sala2005} holds, they can be decoded in polynomial time in $n$, once a preprocessing has produced
sparse general error locator polynomials.
\end{theorem}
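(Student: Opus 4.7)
The plan is to examine each of the four steps of the decoding procedure recalled just before the theorem and show that, under the hypothesis of Conjecture \ref{Sala2005}, every step has cost polynomial in $n$ (the arithmetic being performed in $\FF_{q^m}$ where $m$ is the multiplicative order of $q$ modulo $n$, so $m\leq n$ and each field operation costs $\mathrm{poly}(n)$).

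First I would recall the costs already quoted from the literature: syndrome computation costs $O(t\sqrt n)$, root finding for $\mathcal{L}_C(\mathbf{s},z)$ costs $\max(O(t\sqrt n),O(t\log\log t\log n))$, and Forney's formula for the error values costs $O(t^2)$. Since $t\leq n$, all of Steps 1, 3 and 4 contribute at most $O(n^2)$ field operations. This reduces the whole complexity question to bounding Step 2, namely evaluating $\mathcal{L}_C(x_1,\ldots,x_r,z)$ at the tuple $\mathbf{s}=(s_1,\ldots,s_r)$.

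Here the key step is to use the definition of functional density: by Conjecture \ref{Sala2005}, there is a rational representation $(F,\{f_1,\ldots,f_M\},\{g_1,\ldots,g_M\})$ of $\mathcal{L}_C$ (viewed coefficient by coefficient in $z$, or equivalently as a polynomial in an auxiliary variable) whose density satisfies
\[
|F|+\sum_{i=1}^M(|f_i|-1)+\sum_{j=1,\,g_j\notin\KK}^M|g_j| \;\leq\; n^{\epsilon}.
\]
Evaluating such a representation at $\mathbf{s}$ requires at most $n^{\epsilon}$ multiplications and additions in $\FF_{q^m}$ (plus at most $M\leq n^{\epsilon}$ divisions, with the convention $0/0=0$ from Section~\ref{sec:preliminaries} handling the degenerate cases). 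Since each field operation in $\FF_{q^m}$ is polynomial in $m$ and $m\leq n$, Step 2 also costs $\mathrm{poly}(n)$.

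Summing, the total decoding cost is polynomial in $n$, which is the content of the theorem. The only non-routine part of the argument is the bookkeeping that a rational representation of density $D$ can indeed be evaluated using $O(D)$ field operations; this is an easy induction on $D$ following Definition~\ref{density}, noting that the outer polynomial $F$ contributes $|F|$ additions/multiplications and each inner pair $(f_i,g_i)$ contributes $|f_i|-1$ plus $|g_i|$ (or $0$ if $g_i$ is a field constant) operations before a single division. The main conceptual obstacle is purely that Conjecture \ref{Sala2005} is unproven; once it is granted, the complexity estimate is essentially just the observation that functional density was designed precisely to measure evaluation cost, which is why Definition~\ref{sparse} is the correct notion of sparsity for the decoding problem.
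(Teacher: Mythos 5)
Your argument for why a sparse locator yields polynomial-time decoding is sound and is exactly the reasoning the paper relies on: steps 1, 3 and 4 cost $O(n^2)$ by the cited results, and the functional density of Definition~\ref{density} is designed precisely so that a representation of density $D$ can be evaluated in $O(D)$ field operations, each of which is polynomial in $m\leq n$. The paper, however, treats all of this as already established by the discussion preceding the theorem (``The following theorem is then clear''), and its written proof consists \emph{only} of the part you have omitted: reconciling the hypotheses of Conjecture~\ref{Sala2005} with the scope of the theorem.

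Concretely, the conjecture guarantees a locator with $||\mathcal{L}_C||\leq n^\epsilon$ only for codes with $n\geq q^4-1$ and $3\leq d\leq n-1$, whereas the theorem claims polynomial-time decoding for \emph{all} cyclic codes over $\FF_q$ with $\gcd(n,q)=1$ and $d\geq 3$. Your invocation ``by Conjecture~\ref{Sala2005}, there is a rational representation \ldots{} whose density is at most $n^\epsilon$'' is therefore not justified for the finitely many lengths $n<q^4-1$, nor for the degenerate case $d=n$. The paper's proof disposes of these two situations explicitly: the finitely many small lengths cannot affect an asymptotic polynomial-time claim, and a code with $d=n$ can be decoded in polynomial time by elementary means without any general error locator polynomial. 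You should add these two observations to close the gap; the rest of your write-up is a correct (and more detailed) account of the step the paper leaves implicit.
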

\begin{proof}
The only special situations not tackled by  Conjecture \ref{Sala2005} are the finite cases when $n<q^4-1$,
which of course do not influence the asymptotic complexity, and the degenerate case when $d=n$,
which can be decoded in polynomial time without using the general error locator algorithm.
\end{proof}
Although all reported experiments (at least in the binary case) confirm Conjecture \ref{Sala2005}, we are far from having 
a formal proof of it,
therefore we pass to estimate the cost of the crucial step 2 starting from results claimed in this paper or 
found elsewhere in the literature. \\
To estimate the cost of evaluating the polynomial $\mathcal{L}_C$ (at the syndrome vector 
$\mathbf{s}$) we will mainly use Corollary~\ref{SparseCorollary},
its consequences for the case $\lambda=n$ (which we can always choose), and the corresponding degree bound. 
We can neglect the cost of computing the values $\frac{x_h}{x_1^{i_h}}$ and consider polynomials in the new obvious variables.
In \cite{ballico2013evaluation}, Ballico, Elia and Sala describe a method to evaluate a polynomial in $\FF_q[x_1,\ldots,x_r]$ of degree $\delta$ with a complexity $O(\delta^ {r/2})$.
To estimate our $\delta$, we observe that, by Corollary~\ref{SparseCorollary}, we have a bound on the degree of each $z$-coefficient of $\mathcal{L}_C$ in any new variable and so its total degree is at most 
$$
   \delta \quad \leq \quad  \left((q^m-1)(r-1)+\frac{q^{m}-1}{n}\right) \,,
$$
\noindent
then,
using the method in \cite{ballico2013evaluation}, the evaluation of (the $z$-coefficients of) $\mathcal{L}_c$ at $\bf{s}$ costs
\begin{equation}\label{eq:eval_cost}
O\left(t \left((q^m-1)(r-1)+\frac{q^{m}-1}{n}\right)^{r/2}\right).
\end{equation}
So, we get that the cost of the decoding approach we are proposing is given by
\begin{equation}\label{eq:cost}
O\left(n^2+t \left((q^m-1)(r-1)+\frac{q^{m}-1}{n}\right)^{r/2}\right).
\end{equation}
We are going to show that there are infinite families of codes for which this approach is competitive with more straightforward methods (even for low values of $t$). 

Let us fix the number of syndromes $r$, and let $\gamma$ be an integer $\gamma\geq 1$. Let $\mathcal{C}_{r,\gamma}^q$ be the set of all codes over $\FF_q$ with length $n$  
such that the splitting field of $x^n-1$ over $\FF_q$ is $q^m-1=O(n^\gamma)$ (and $\gcd(n,q)=1$). 
For codes in $\mathcal{C}_{r,\gamma}^q$, the complexity \eqref{eq:cost}
of this decoding depends on $r$ and it is
\begin{equation}\label{eq:costC}
r\geq 2, \quad O\left(t n^{\gamma r/2}\right),\qquad \qquad
r=1, \quad O(n^2 + t n^{\frac{\gamma-1}{2}}) \,.
\end{equation}
\noindent
So, any family $\mathcal{C}_{r,\gamma}^q$ provides a class containing infinite codes which can be decoded in polynomial time, with infinite values of distance and length. Obviously, these classes extend widely the classes which are known to be decodable
in polynomial time up to the {\em{actual distance}}.

Theorem ~\ref{lambda},\ref{sl},\ref{theoOneCase}, Corollary~\ref{corollary1_cases45} and Theorem~\ref{pro:consecutives}, \ref{pro:powers}, \ref{pro:case $4$} show cases where
the previous estimation can be drastically improved, at least for $t=2$ and $t=3$. Indeed, these theorems provide (infinite) classes of codes with $t=2$ and $t=3$ for which the evaluation of
$\mathcal{L}_C$ costs $O(1)$, and so the decoding process costs $O(n^2)$. For $t=2$ and $t=3$ exhaustive searching method
cost, respectively, $O(n^2)$ and $O(n^3)$. For $t=2$ we match the best-known complexity and for $t=3$
our method is better.

\subsection{Comparison with other approaches}

In the last years, several methods were proposed for decoding {\emph{binary}} quadratic residue (QR) codes generated by irreducible polynomials.
In \cite{chang2010algebraic}, Chang and Lee propose three algebraic decoding algorithms based on Lagrange Interpolation Formula (LIF) for these codes.
They introduce a variation for the general error locator polynomial, which we may call fixed-weight locator. A {\emph fixed-weight locator} is a polynomial able to correct
all errors of a fixed weight via the evaluation of the corresponding syndromes. They develop a method to obtain a representation of the primary unknown syndrome
in terms of the primary known syndrome and a representation of the coefficients of both fixed-weight locator and general error locator polynomial for these codes. These polynomials
are explicitly obtained for the $(17,9,5)$, $(23,12,7)$, $(41,21,9)$ QR codes. In Table \ref{tab:numberofterms1} we treat these three codes one per column showing the number of terms relevant to the alternative representations. 
For each code, the second row deals with representation of the chosen primary unknown syndrome, while the last  
deal with two locators.
\begin{table}[ht!]
 \caption{Number of terms of unknown syndrome, fixed-weight locator and general error locator}
 \begin{center}
 \begin{small}
 \begin{tabular}{c|c|c|c|}
   \cline{2-4} & $(17,9,5)$ & $(23,12,7)$ & $(41,21,9)$ \\ \hline
   \multicolumn{1}{|c|}{Splitting field} & $\mathbb{F}_{2^8}$ & $\mathbb{F}_{2^{11}}$ & $\mathbb{F}_{2^{20}}$ \\ \hline
   \multicolumn{1}{|c|}{Unknown syndrome} & $5$ & $17$ & $1355$ \\
   \hline
   \multicolumn{1}{|c|}{Fixed-weight locator} & $4$ & $15$ & $1270$ \\
   \hline
   \multicolumn{1}{|c|}{General error locator} & $4$ & $76$ & $1380$ \\ \hline
 \end{tabular}
 \label{tab:numberofterms1}
 \end{small}
 \end{center}
 \end{table}

\noindent Note that, for all the three codes, the general error locator polynomials are sparse 
(even without using the rational representation) as forseen in Conjecture \ref{Sala2005}.
In particular the $(41,21,9)$ code has correction capability $t=4$ and the number of
terms of its locator is less than $n^\epsilon={41}^3=68921$. Observe also that the evaluation of the locators of the $(23,12,7)$ code in Table \ref{table:1} and in \cite{chang2010algebraic} cost
approximately the same.

In \cite{lee2010more}, Chang et al. propose to decode {\emph{binary}} cyclic codes generated by irreducible polynomials using, as in \cite{chang2010algebraic}, an interpolation formula in order to get 
the general error locator polynomial but in a slightly different way. The general error locators they obtain satisfy 
at least one congruence relation, and they are explicitly found for the $(17,9,5)$ QR code, the $(23,12,7)$ Golay code, 
and one $(43,29,6)$ cyclic code. 
Table \ref{tab:numberofterms2} shows the maximum number of terms for the coefficients of these three polynomials.
\begin{table}[ht!]
 \caption{Maximum number of terms among the locator coefficients $\sigma_i$}
 \begin{center}
 \begin{small}
 \begin{tabular}{c|c|c|c|}
   \cline{2-4} & $(17,9,5)$ & $(23,12,7)$ & $(43,29,6)$ \\ \hline
   \multicolumn{1}{|c|}{Splitting field} & $\mathbb{F}_{2^8}$ & $\mathbb{F}_{2^{11}}$ & $\mathbb{F}_{2^{14}}$ \\ \hline
   \multicolumn{1}{|c|}{General error locator} & $9$ & $203$ & $25$ \\ \hline
 \end{tabular}
 \label{tab:numberofterms2}
 \end{small}
 \end{center}
 \end{table}
\noindent Also in this case, the locators are {\emph{sparse}} for the three codes.
In \cite{lee2012algebraic}, Lee et al. extend the method proposed by Chang and Lee in \cite{chang2010algebraic} for finding fixed-weight locators and general error locators
for binary cyclic codes generated by irreducible polynomials to the case of {\em ternary cyclic codes} generated by irreducible polynomials.
These polynomials are presented for two {\em ternary} cyclic codes, one $(11,6,5)$ code and one $(23,12,8)$ code. 
In Table \ref{tab:numberofterms3} we report the maximum number of
terms of the coefficients of the general error locator for these two codes.

\begin{table}[ht!]
 \caption{Maximum number of terms among the locator coefficients $\sigma_i$}
 \begin{center}
 \begin{small}
 \begin{tabular}{c|c|c|}
   \cline{2-3} & $(11,6,5)$ & $(23,12,8)$ \\ \hline
   \multicolumn{1}{|c|}{Splitting field} & $\mathbb{F}_{3^5}$ & $\mathbb{F}_{3^{11}}$ \\ \hline
   \multicolumn{1}{|c|}{General error locator} & $232$ & $15204$ \\ \hline
 \end{tabular}
 \label{tab:numberofterms3}
 \end{small}
 \end{center}
 \end{table}
\noindent To discuss the {\em sparsity} of these cases one would need to know $\epsilon(3)$ from Conjecture \ref{Sala2005}.
Assuming an optimistic stance, let us compare their sparsity with $\epsilon(3)=3$, that is, let us assume the polynomial
exponent of the ternary codes to be the same as that of binary codes 
(reasonably $\epsilon(3)\geq \epsilon(2)$).\\
The first locator is definitely sparse, with $|\mathcal{L}|=232 < 1331=11^3$. For the second locator
we have $|\mathcal{L}|=15204$ which compared to  $n^3=23^3=12167$ show that the locator is not sparse
(although the numbers are close) and indeed we believe much sparser locators exist for this code, still to be found.

In the same paper (\cite{lee2012algebraic}) the authors give also an interesting upper bound on $|\mathcal{L}|$ which
holds for any irreducible ternary cyclic code, as follows.
%
\begin{proposition}[\cite{lee2012algebraic}]\label{pro:lee}
Let $C$ be a ternary cyclic code of length $n$ with defining set $S_C=\{1\}$, and error correction capability $t$. 
Each coefficient of a general error locator polynomial can be expressed as a polynomial in terms of the known syndrome $x_1$ and the number of terms of this polynomial
is less than $\floor{\frac{\sum_{\nu=1}^{t} 2^\nu \binom{n}{\nu}}{n}}$.
\end{proposition}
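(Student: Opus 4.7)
The plan is to specialize Theorem \ref{SparseTwoSyndromes} (and its Corollary \ref{SparseCorollary}) to the situation $r=1$, $i_1=1$, $\lambda=n$, $q=3$, and then combine the resulting shape of each coefficient with a cyclic-orbit count of the correctable syndromes. Since $S_C=\{1\}$, the general error locator polynomial has the form $\mathcal{L}(x_1,z)=z^t+\sum_{i=0}^{t-1}a_i(x_1)z^i$ with $a_i\in\mathbb{F}_3[x_1]$. The coefficient $a_i$ evaluated at a correctable syndrome equals, up to sign, the elementary symmetric function $\sigma_{t-i}$ evaluated at the zero-padded error locations; this is symmetric and homogeneous of degree $\delta=t-i$, so a multiple of $i_1=1$ as required.

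First I would apply the theorem to obtain
\[
a_i(x_1)=x_1^{t-i}F_i(x_1^n)+(1-x_1^{3^m-1})g_i,
\]
for some $F_i\in\mathbb{F}_3[y]$ and constant $g_i\in\mathbb{F}_3$ (there is only one syndrome variable, so $g_i$ has nothing else to depend on). Evaluating on the all-zero error in $\mathcal{V}^0$ gives $\bar{x}_1=0$ and $a_i(0)=0$ for every $i<t$, forcing $g_i=0$. Therefore $a_i(x_1)=x_1^{t-i}F_i(x_1^n)$, and in particular the number of nonzero monomials of $a_i$ coincides with that of $F_i$.

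Next I would bound the sparsity of $F_i$ by counting distinct values of $x_1^n$ on nonzero correctable syndromes. Over $\mathbb{F}_3$, an error vector of weight $\nu$ is determined by a choice of $\nu$ positions in $\{0,\dots,n-1\}$ and a choice of nonzero values in $\mathbb{F}_3^{\ast}=\{1,2\}$, giving $2^\nu\binom{n}{\nu}$ vectors; since $t\le\lfloor(d-1)/2\rfloor$, distinct correctable error vectors yield distinct syndromes, so the total number of nonzero correctable syndromes is $N:=\sum_{\nu=1}^{t}2^\nu\binom{n}{\nu}$. The cyclic automorphism of $C$ acts on these $N$ syndromes via $x_1\mapsto \alpha x_1$, and $\bar{x}_1^n=\tilde{x}_1^n$ if and only if $\bar{x}_1/\tilde{x}_1\in\mu_n=\langle\alpha\rangle$; hence the number of distinct values of $x_1^n$ equals the number of orbits, which, since every orbit has size at most $n$, is at most $\floor{N/n}$.

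The main obstacle is the handling of orbits of size strictly smaller than $n$: for such an orbit whose stabiliser is generated by some $\alpha^k$ with $k\mid n$, the homogeneity identity $a_i(\alpha\bar{x}_1)=\alpha^{t-i}a_i(\bar{x}_1)$ iterated $k$ times together with $\alpha^k\bar{x}_1=\bar{x}_1$ forces either $a_i(\bar{x}_1)=0$ (so $F_i$ vanishes at that $x_1^n$) or $\alpha^{k(t-i)}=1$; a careful accounting then shows that the effective number of independent prescribed values of $F_i$ is still at most $\floor{N/n}$. Univariate interpolation then produces $F_i$ of degree less than $\floor{N/n}$, so of at most $\floor{N/n}$ terms, which bounds $|a_i|$ as claimed.
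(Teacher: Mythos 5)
Your proposal follows essentially the same route as the paper: the paper does not reprove this cited result directly, but it proves the $q$-ary generalization (Theorem \ref{ex-ternary}) in one line by invoking Corollary \ref{SparseCorollary} (i.e.\ the shape $a_i(x_1)=x_1^{t-i}F_i(x_1^n)$ from Theorem \ref{SparseTwoSyndromes} with $r=1$, $\lambda=n$) together with univariate Lagrange interpolation on the correctable syndromes, which is exactly your plan; your write-up just supplies the details the paper leaves implicit, including the observation that $g_i=0$.

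One step of your argument is garbled, though the conclusion survives. You write that the number of $\langle\alpha\rangle$-orbits on the $N$ nonzero correctable syndromes is at most $\lfloor N/n\rfloor$ ``since every orbit has size at most $n$'' --- that implication goes the wrong way (small orbits would give \emph{more} orbits, not fewer), and the subsequent paragraph about stabilisers and ``a careful accounting'' is patching a problem that does not exist. Multiplication by a primitive $n$-th root of unity $\alpha$ acts \emph{freely} on $\FF_{q^m}^{\times}$: for $\bar{x}_1\neq 0$, $\alpha^{k}\bar{x}_1=\bar{x}_1$ forces $\alpha^{k}=1$, i.e.\ $n\mid k$. Hence every orbit has size exactly $n$, the number of orbits is exactly $N/n$, and since $\bar{x}_1^{\,n}=\tilde{x}_1^{\,n}$ precisely when $\bar{x}_1$ and $\tilde{x}_1$ lie in the same orbit, $F_i$ is prescribed at exactly $N/n$ points and interpolation gives at most $N/n$ monomials. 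With that correction the argument is complete; the only residual discrepancy is the strict inequality ``less than'' in the statement versus the ``at most'' your interpolation bound yields, which is inherited from the cited source and not something your method (or the paper's) sharpens.
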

Indeed, we can generalize their result to the following theorem holding over any finite field.
\begin{theorem}\label{ex-ternary}
Let $C$ be any cyclic code over $\FF_q$ of length $n$ with defining set $S_C=\{1\}$, $\gcd(n,q)=1$ and 
error correction capability $t$. 
Each coefficient of a general error locator polynomial can be expressed as a polynomial in terms of the known syndrome 
$x_1$ and the number of terms of this polynomial is less than $\floor{\frac{\sum_{\nu=1}^{t} (q-1)^\nu \binom{n}{\nu}}{n}}$.
\end{theorem}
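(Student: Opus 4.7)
The plan is to specialize Theorem \ref{SparseTwoSyndromes} (equivalently, Corollary \ref{SparseCorollary}) to the one-syndrome setting $r=1$, $i_1=1$, with $\lambda=n$, and then tighten the resulting univariate description by a free group action argument.

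First, Corollary \ref{SparseCorollary} applied in this setting forces every coefficient $a$ of the general error locator polynomial---representing (up to sign) an elementary symmetric function $\sigma_\delta$ of degree $\delta\in\{1,\dots,t\}$ evaluated on the error locations---to take the shape
$$a(x_1)\;=\;x_1^{\delta}\,F(x_1^{n})\;+\;(1-x_1^{q^m-1})\,g,$$
where $F\in\FF_q[y]$ and, since there are no additional syndrome variables, $g\in\FF_q[x_2,\ldots,x_r]=\FF_q$ is a constant. Because $C$ corrects $t$ errors, the syndrome map is injective on errors of weight $\leq t$, so $x_1=0$ forces $\mu=0$ and hence $a(0)=\sigma_\delta(0,\dots,0)=0$; evaluating the displayed identity at $x_1=0$ yields $g=0$, and we conclude $a(x_1)=x_1^{\delta}F(x_1^n)$. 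In particular, the number of terms of $a$ equals the number of non-zero coefficients of $F$.

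Next I would count the distinct values that $F$ is constrained to realize. Let $\mathcal V^{\ast}:=\mathcal V\setminus\{0\}$ and $I:=\{x_1^n:x_1\in\mathcal V^{\ast}\}$. The group of $n$-th roots of unity in $\FF_{q^m}$, which has order exactly $n$ since $\gcd(n,q)=1$, acts on $\mathcal V^{\ast}$ by scaling $\beta\cdot x_1:=\beta x_1$: if $x_1=\sum_i y_iz_i$ arises from an error with locations $z_i\in R_n$, then $\beta x_1=\sum_i y_i(\beta z_i)$ is the syndrome of the same-weight error with rescaled locations $\beta z_i\in R_n$. This action is free, and its orbits coincide exactly with the fibers of $x_1\mapsto x_1^n$ (same fiber means $(x_1'/x_1)^n=1$). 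Hence $|I|=|\mathcal V^{\ast}|/n$. By correctability, $|\mathcal V^{\ast}|$ is exactly the number $N:=\sum_{\nu=1}^{t}(q-1)^{\nu}\binom{n}{\nu}$ of non-zero weight-$\leq t$ error patterns, so $|I|=N/n$ (which is therefore an integer).

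Finally, Lagrange interpolation on the $|I|$ points of $I$ produces an $F\in\FF_q[y]$ of degree at most $|I|-1$, hence with at most $|I|$ non-zero coefficients. Substituting back, $a(x_1)=x_1^{\delta}F(x_1^n)$ has at most $N/n=\lfloor N/n\rfloor$ terms, which is the claimed bound. The main technical step is the free-action argument that pins down $|I|=N/n$ exactly; everything else is formal manipulation of the sparse shape already supplied by Corollary \ref{SparseCorollary}.
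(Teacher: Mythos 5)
Your proof is correct and follows essentially the same route as the paper's: specialize Corollary \ref{SparseCorollary} to $r=1$, $i_1=1$, $\lambda=n$ to get $a=x_1^{\delta}F(x_1^{n})$ (with the $(1-x_1^{q^m-1})g$ term vanishing), then bound the number of terms of $F$ by Lagrange interpolation on the set of attained values of $x_1^{n}$. The paper's one-line proof leaves implicit the counting step you spell out --- that the free action of the $n$-th roots of unity partitions the nonzero correctable syndromes into orbits of size $n$, so $F$ need only interpolate $\frac{1}{n}\sum_{\nu=1}^{t}(q-1)^{\nu}\binom{n}{\nu}$ points --- so your write-up is, if anything, more complete than the original (modulo the harmless discrepancy between your ``at most'' and the statement's ``less than'').
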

\begin{proof}
By considering Corollary~\ref{SparseCorollary} and the fact that to obtain any locator coefficient, one can use
simply (univariate) Lagrange interpolation on the set of correctable syndromes, which are obviously
$1+\sum_{\nu=1}^{t} (q-1)^\nu \binom{n}{\nu}$.
\end{proof}
 %

With $q$ fixed, the codes covered by the previous theorem are actually the component of
 our families $\mathcal{C}_{1,\gamma}^q$ for $\gamma\geq 1$. Depending on the actual considered length we will
have the correct determination of $\gamma$, since this value strongly depends on the size of the splitting field.
By \eqref{eq:cost} case $r=1$, the time complexity of the decoding method for codes in $\mathcal{C}_{1,\gamma}^q$ 
is 
\begin{equation}\label{eq:costC1}
O\left(n^2+t n^{(\gamma-1)/2}\right).
\end{equation}
Using the estimation given by Proposition \ref{pro:lee}, the complexity of the same decoding approach for these codes is
\begin{equation}\label{eq:costC1Lee}
O\left(n^2+t n^{t-1}\right).
\end{equation}
We observe that which of the two estimations is better depends on the particular values of $t$ and $\gamma$.

\section{Conclusions} \label{sec:conclusions}
This paper provides additional theoretical arguments supporting the sparsity of the general error locator polynomial for infinite families of cyclic codes over $\FF_q$.
For infinite classes of binary codes with $t=2$ and $t=3$ a sparse general error locator polynomial is obtained. Furthermore, 
for all binary cyclic codes with length less than $63$ and correction capability $3$, we see that the number of monomials never exceeds five times the code length.

We provide some argument showing the link between the locators' sparsity and the bounded-distance decoding complexity of cyclic codes, which might turn out to be of interest.
\appendices
\section{Some tables}\label{tables}
Table \ref{table:eqcodes} report the codes with $t=3$ and $n<63$ grouped according to increasing lengths, and, within the same length according to Theorem \ref{search_$t=3$},
i.e. if two codes with the same length are equivalent or one is a subcode of the other,then they are in the same group.  
For each group there is a code in bold, which is the one reported in Table \ref{table:1}, i.e. the code for which we determined a general error locator polynomial
and that can be used to obtain locators for all the codes of the group.
\begin{table*}[ht!]
\begin{center}
 \begin{small}
\caption{Binary cyclic codes with $t=3$ and $n<63$}
\begin{tabular}{|c|>{\centering\let\newline\\\arraybackslash\hspace{0pt}}m{13cm} |}
\hline
$n$ & Codes\\
\hline

$15$ & \small{ ${\bf{\{1,3,5\}}}$, $\{3,5,7\}$, $\{ 0, 3, 5, 7 \}$, $\{ 0, 1, 3, 5 \}$} \\ \hline
$21$ & \small{ ${\bf{\{1,3,5\}}}$, $\{1,5,9\}$, $\{ 1, 3, 5, 9 \}$} \\  \cline{2-2}
           
     &\small{ ${\bf{\{1,3,7,9\}}}$, $\{3,5,7,9\}$, $\{ 0, 3, 5, 7, 9 \}$, $\{0,1,3,7,9 \}$}\\ 
     \cline{2-2}
      & \small{${\bf{\{0,1,3,7\}}}$, $\{0,5,7,9\}$, $\{ 0, 1, 3, 7, 9 \}$, $\{0,3,5,7,9 \}$}\\ \hline
$23$ & \small{${\bf{\{1\}}}$, $\{5\}$, $\{0,1 \}$, $\{0,5 \}$} \\ \hline

$31$ & \small{${\bf{\{1,3,5\}}}$,  $\{1,5,7\}$, $\{3,5,15\}$, $\{3,11,15\}$, $\{ 0,1, 5, 7 \}$, $\{ 0, 3, 5, 15 \}$, $\{ 0, 1, 3, 5 \}$, $\{1,3,11\}$, 
$\{1,7,11\}$, $\{ 0, 1, 7, 11 \}$, $\{ 0, 1, 3, 11 \}$, $\{5,7,15 \}$, $\{ 0, 5, 7, 15 \}$, $\{7,11,15 \}$, $\{ 0,3,11,15 \}$} \\

     \cline{2-2}
     &  \small{${\bf{\{0,1,7,15\}}}$, $\{0,1,3,15\}$, $\{0,3,7,11\}$, $\{0,5,11,15\}$, $\{0,1,5,11\}$, $\{0,3,5,7\}$}\\ \hline

$35$ & \small{${\bf{\{1,3,5\}}}$, $\{1,3,15\}$, $\{ 1, 3, 5, 15 \}$} \\ \cline{2-2}
     &  \small{${\bf{\{1,5,7\}}}$, $\{3,7,15\}$   
    $\{ 0, 3, 5, 7, 15 \}$,
    $\{ 0, 1, 5, 7, 15 \}$,    $\{ 0, 3, 7, 15 \}$, 
    $\{ 0, 1, 5, 7 \}$,
 $\{ 3, 5, 7, 15 \}$, $\{ 1, 5, 7, 15 \}$}\\ 
\hline

$45$ & \small{${\bf{\{1,3,5\}}}$,
$\{1,5,21 \}$, $\{5,7,21\}$, $\{3,5,7\}$,   
$\{ 0, 1, 3, 5, 9, 21 \}$,
$\{ 3, 5, 7, 9, 15 \}$,
$\{ 1, 3, 5, 9, 21 \}$, 
$\{ 1, 5, 9, 21 \}$, 
$\{ 1, 5, 15, 21 \}$,
$\{ 0, 1, 5, 9, 21 \}$, 
$\{ 0, 1, 5, 15, 21 \}$,
$\{ 0, 3, 5, 7, 9, 15 \}$, 
$\{ 1,  3, 5, 9 \}$, 
$\{ 0, 3, 5, 7, 15, 21 \}$,
$\{ 0, 1, 3, 5, 9 \}$,
$\{ 3, 5, 7, 15, 21 \}$, 
$\{ 0, 3, 5, 7, 21 \}$,
$\{ 1, 5, 9, 15, 21 \}$,
$\{ 3, 5, 7, 21 \}$, 
$\{ 0, 3, 5, 7, 9, 15, 21 \}$,  
$\{ 0, 1, 3, 5 \}$,
$\{ 5, 7, 9, 21 \}$, 
$\{ 1, 3, 5, 9, 15 \}$,
$\{ 0, 5, 7, 9, 21 \}$, 
$\{ 0, 1, 5, 9, 15, 21 \}$, 
$\{ 0, 1, 3, 5, 9, 15 \}$,
$\{ 0, 1, 3, 5, 15, 21 \}$,
$\{ 3, 5, 7, 9, 15, 21 \}$,
$\{ 1, 3, 5, 15, 21 \}$,
$\{ 3, 5, 7, 15 \}$,
$\{ 0, 3, 5, 7, 15 \}$,$\{ 0, 1, 3, 5, 9, 15, 21 \}$, 
$\{ 5, 7, 15, 21 \}$,
$\{ 1, 3, 5, 9, 15, 21 \}$,$\{ 0, 5, 7, 15, 21 \}$,
$\{ 0, 3, 5, 7, 9, 21 \}$, 
$\{ 0, 1, 3, 5, 21 \}$,
$\{ 1, 3, 5, 15 \}$,
$\{ 3, 5, 7, 9, 21 \}$,
$\{ 5, 7, 9, 15, 21 \}$,
$\{ 3, 5, 7, 9 \}$,
$\{ 0, 1, 3, 5, 15 \}$,
$\{ 1, 3, 5, 21 \}$,
$\{ 0, 5, 7, 9, 15, 21 \}$,
$\{ 0, 1, 5, 21 \}$,
$\{ 0, 3, 5, 7, 9 \}$,

$\{ 0, 3, 5, 7 \}$,
$\{ 5, 7, 21 \}$,
$\{ 0, 5, 7, 21 \}$
}
\\
\cline{2-2}
     &
${\bf{\{1,5,9,15 \}}}$, $\{5,7,9,15 \},$
$\{3,5,7,9,15 \}$, $\{0,3,5,7,9,15 \}$, $\{1,5,9,15,21 \}$,
$\{0,3,5,7,9,15,21 \}$, $\{1,3,5,9,15 \}$, $\{0,1,5,9,15,21 \}$,$\{0,5,7,9,15 \}$, $\{0,1,3,5,9,15 \}$, $\{3,5,7,9,15,21 \}$,
$\{0,1,3,5,9,15,21 \}$, $\{1,3,5,9,15,21 \}$, $\{0,5,7,9,15,21 \}$, 
$\{0,1,5,9,15 \}$ \\
\hline
$49$ & ${\bf{\{1,3\}}}$ \\

\hline

$51$ & \small{${\bf{\{1,3,9\}}}$,
$\{3,9,11 \}$, $\{3,9,19 \}$, $\{3,5,9\}$ 
$\{ 1, 3, 9, 17 \}$,
$\{ 0, 1, 3, 9 \}$,
$\{ 3, 5, 9, 17 \}$,
$\{ 0, 3, 5, 9, 17 \}$,
$\{ 3, 9, 11, 17 \}$,
$\{ 0, 3, 9, 11 \}$,
$\{ 3, 9, 17, 19 \}$,
$\{ 0, 3, 9, 11, 17 \}$,
$\{ 0, 3, 9, 17, 19 \}$,
$\{ 0, 3, 9, 19 \}$,
$\{ 0, 1, 3, 9, 17 \}$,
$\{ 0, 3, 5, 9 \}$
}
\\
\hline

$55$ & \small{ ${\bf{\{0,1\}}}$,$\{0,3 \}$}
\\
\hline

\end{tabular}
\label{table:eqcodes}
\end{small}
\end{center}
\end{table*}

In Table \ref{table:n=55} we show the coefficients $b$ and $c$ of a general error locator polynomial for binary cyclic codes with $t=3$ and $n=55$. 
For the sake of conciseness, both $b$ and $c$ are represented in the form described in Theorem \ref{SparseTwoSyndromes}, where $y_1$ stands for $x_1^{55}$.

\begin{tiny}
\begin{table*}[ht!]
\begin{center}
\caption{General error locator for cyclic codes with $t=3$ and $n=55$}
\begin{tabular}{|l|l|}

\hline

b & \parbox{15.5cm}{
     \scriptsize{${x_1^2}\cdot \bigl(y_1^{475}+y_1^{472}+y_1^{470}+y_1^{469}+y_1^{468}+y_1^{463}+y_1^{462}+y_1^{461}+y_1^{460}+y_1^{458}+y_1^{457}+y_1^{455}+y_1^{454}+y_1^{452}+y_1^{449}+y_1^{
448}+y_1^{446}+y_1^{444}+y_1^{443}+y_1^{440}+y_1^{436}+y_1^{434}+y_1^{427}+y_1^{426}+y_1^{425}+y_1^{424}+y_1^{417}+y_1^{416}+y_1^{413}+y_1^{410}+y_1^{408}+y_1^{
405}+y_1^{403}+y_1^{402}+y_1^{401}+y_1^{399}+y_1^{397}+y_1^{395}+y_1^{394}+y_1^{392}+y_1^{388}+y_1^{387}+y_1^{386}+y_1^{384}+y_1^{380}+y_1^{378}+y_1^{377}+y_1^{
376}+y_1^{375}+y_1^{374}+y_1^{372}+y_1^{370}+y_1^{369}+y_1^{368}+y_1^{364}+y_1^{363}+y_1^{361}+y_1^{360}+y_1^{359}+y_1^{358}+y_1^{357}+y_1^{355}+y_1^{350}+y_1^{
347}+y_1^{345}+y_1^{343}+y_1^{340}+y_1^{338}+y_1^{336}+y_1^{334}+y_1^{330}+y_1^{329}+y_1^{327}+y_1^{326}+y_1^{325}+y_1^{324}+y_1^{321}+y_1^{319}+y_1^{318}+y_1^{
316}+y_1^{315}+y_1^{312}+y_1^{308}+y_1^{306}+y_1^{305}+y_1^{302}+y_1^{301}+y_1^{296}+y_1^{295}+y_1^{292}+y_1^{290}+y_1^{289}+y_1^{285}+y_1^{284}+y_1^{278}+y_1^{
277}+y_1^{276}+y_1^{275}+y_1^{274}+y_1^{273}+y_1^{272}+y_1^{271}+y_1^{265}+y_1^{261}+y_1^{260}+y_1^{256}+y_1^{255}+y_1^{250}+y_1^{249}+y_1^{248}+y_1^{247}+y_1^{
243}+y_1^{242}+y_1^{240}+y_1^{239}+y_1^{235}+y_1^{234}+y_1^{233}+y_1^{231}+y_1^{230}+y_1^{229}+y_1^{227}+y_1^{225}+y_1^{224}+y_1^{222}+y_1^{221}+y_1^{217}+y_1^{
215}+y_1^{213}+y_1^{212}+y_1^{210}+y_1^{209}+y_1^{207}+y_1^{205}+y_1^{203}+y_1^{202}+y_1^{201}+y_1^{200}+y_1^{199}+y_1^{197}+y_1^{195}+y_1^{189}+y_1^{187}+y_1^{
183}+y_1^{182}+y_1^{181}+y_1^{180}+y_1^{179}+y_1^{178}+y_1^{175}+y_1^{172}+y_1^{169}+y_1^{167}+y_1^{165}+y_1^{164}+y_1^{163}+y_1^{160}+y_1^{159}+y_1^{157}+y_1^{
155}+y_1^{154}+y_1^{145}+y_1^{141}+y_1^{137}+y_1^{133}+y_1^{130}+y_1^{129}+y_1^{128}+y_1^{125}+y_1^{123}+y_1^{122}+y_1^{121}+y_1^{117}+y_1^{115}+y_1^{114}+y_1^{
113}+y_1^{112}+y_1^{111}+y_1^{110}+y_1^{109}+y_1^{108}+y_1^{107}+y_1^{102}+y_1^{98}+y_1^{96}+y_1^{95}+y_1^{90}+y_1^{89}+y_1^{88}+y_1^{86}+y_1^{84}+y_1^{83}+y_1^{81}+y_1^{
80}+y_1^{78}+y_1^{77}+y_1^{76}+y_1^{74}+y_1^{72}+y_1^{70}+y_1^{68}+y_1^{67}+y_1^{65}+y_1^{63}+y_1^{62}+y_1^{61}+y_1^{55}+y_1^{54}+y_1^{53}+y_1^{52}+y_1^{51}+y_1^{50}+y_1^{49}+y_1^{
47}+y_1^{46}+y_1^{45}+y_1^{43}+y_1^{42}+y_1^{40}+y_1^{38}+y_1^{36}+y_1^{35}+y_1^{33}+y_1^{32}+y_1^{31}+y_1^{30}+y_1^{29}+y_1^{28}+y_1^{24}+y_1^{23}+y_1^{22}+y_1^{21}+y_1^{20}+y_1^{
17}+y_1^{15}+y_1^{14}+y_1^{13}+y_1^{11}+y_1^{12}+y_1^{9}+y_1^{7}+y_1^{6}+y_1^{4}+y_1^{3}+y_1^{2}+y_1^{1}+1+\\
{x_2}\cdot \bigl(y_1^{26}+y_1^{24}+y_1^{23}+y_1^{13}+y_1^{11}+y_1^{10}+y_1^{8}+y_1^{7}+y_1^{6}+y_1^{3}+y_1\bigr)\bigr)$}}\\ 
\hline

c & \parbox{15.5cm}{
     \scriptsize{${x_1^3}\cdot \bigl(y_1^{477}+y_1^{476}+y_1^{473}+y_1^{472}+y_1^{470}+y_1^{469}+y_1^{466}+y_1^{463}+y_1^{461}+y_1^{459}+y_1^{458}+y_1^{457}+y_1^{456}+y_1^{453}+y_1^{452}+
y_1^{451}+y_1^{450}+y_1^{449}+y_1^{448}+y_1^{447}+y_1^{446}+y_1^{443}+y_1^{441}+y_1^{440}+y_1^{439}+y_1^{438}+y_1^{436}+y_1^{433}+y_1^{431}+y_1^{428}+y_1^{422}+y_1^
{420}+y_1^{419}+y_1^{414}+y_1^{413}+y_1^{410}+y_1^{409}+y_1^{407}+y_1^{406}+y_1^{403}+y_1^{402}+y_1^{400}+y_1^{399}+y_1^{394}+y_1^{391}+y_1^{388}+y_1^{385}+y_1^{384}+y_1^{383}+y_1^{382}+y_1^{381}+y_1^{379}+y_1^{373}+y_1^{372}+y_1^{368}+y_1^{367}+y_1^{366}+y_1^{363}+y_1^{362}+y_1^{359}+y_1^{358}+y_1^{357}+y_1^{356}+y_1^{
354}+y_1^{353}+y_1^{350}+y_1^{349}+y_1^{348}+y_1^{347}+y_1^{344}+y_1^{342}+y_1^{341}+y_1^{340}+y_1^{339}+y_1^{337}+y_1^{335}+y_1^{334}+y_1^{333}+y_1^{332}+y_1^{
331}+y_1^{330}+y_1^{328}+y_1^{325}+y_1^{324}+y_1^{323}+y_1^{322}+y_1^{321}+y_1^{320}+y_1^{319}+y_1^{313}+y_1^{312}+y_1^{310}+y_1^{307}+y_1^{305}+y_1^{304}+y_1^{
303}+y_1^{302}+y_1^{300}+y_1^{295}+y_1^{294}+y_1^{293}+y_1^{292}+y_1^{289}+y_1^{287}+y_1^{286}+y_1^{283}+y_1^{282}+y_1^{280}+y_1^{279}+y_1^{276}+y_1^{274}+y_1^{
272}+y_1^{270}+y_1^{269}+y_1^{267}+y_1^{264}+y_1^{263}+y_1^{262}+y_1^{261}+y_1^{259}+y_1^{256}+y_1^{255}+y_1^{254}+y_1^{253}+y_1^{251}+y_1^{246}+y_1^{244}+y_1^{
243}+y_1^{242}+y_1^{241}+y_1^{238}+y_1^{237}+y_1^{235}+y_1^{234}+y_1^{233}+y_1^{231}+y_1^{230}+y_1^{225}+y_1^{222}+y_1^{221}+y_1^{220}+y_1^{212}+y_1^{210}+y_1^{
208}+y_1^{207}+y_1^{206}+y_1^{205}+y_1^{199}+y_1^{198}+y_1^{197}+y_1^{193}+y_1^{191}+y_1^{190}+y_1^{189}+y_1^{188}+y_1^{187}+y_1^{185}+y_1^{184}+y_1^{180}+y_1^{
179}+y_1^{177}+y_1^{176}+y_1^{175}+y_1^{174}+y_1^{170}+y_1^{169}+y_1^{167}+y_1^{166}+y_1^{165}+y_1^{162}+y_1^{160}+y_1^{159}+y_1^{158}+y_1^{156}+y_1^{155}+y_1^{
154}+y_1^{148}+y_1^{146}+y_1^{142}+y_1^{141}+y_1^{139}+y_1^{138}+y_1^{137}+y_1^{135}+y_1^{131}+y_1^{130}+y_1^{129}+y_1^{128}+y_1^{126}+y_1^{125}+y_1^{123}+y_1^{
122}+y_1^{120}+y_1^{117}+y_1^{115}+y_1^{112}+y_1^{111}+y_1^{110}+y_1^{108}+y_1^{107}+y_1^{105}+y_1^{103}+y_1^{102}+y_1^{101}+y_1^{99}+y_1^{97}+y_1^{96}+y_1^{92}+y_1^{
91}+y_1^{86}+y_1^{85}+y_1^{83}+y_1^{82}+y_1^{81}+y_1^{80}+y_1^{79}+y_1^{78}+y_1^{77}+y_1^{76}+y_1^{75}+y_1^{74}+y_1^{72}+y_1^{70}+y_1^{68}+y_1^{67}+y_1^{65}+y_1^{62}+y_1^{61}+y_1^{
59}+y_1^{58}+y_1^{57}+y_1^{56}+y_1^{55}+y_1^{54}+y_1^{53}+y_1^{52}+y_1^{51}+y_1^{50}+y_1^{49}+y_1^{48}+y_1^{45}+y_1^{44}+y_1^{43}+y_1^{41}+y_1^{40}+y_1^{39}+y_1^{38}+y_1^{37}+y_1^{
36}+y_1^{31}+y_1^{30}+y_1^{27}+y_1^{26}+y_1^{25}+y_1^{24}+y_1^{22}+y_1^{21}+y_1^{20}+y_1^{18}+y_1^{17}+y_1^{16}+y_1^{15}+y_1^{14}+y_1^{13}+y_1^{12}+y_1^{11}+y_1^{9}+y_1^{8}+y_1^{7}+y_1^{
6}+y_1^{5}+y_1^{4}+y_1^{3}+y_1^{2}+y_1+\\
{x_2}\cdot \bigr(y_1^{24}+y_1^{23}+y_1^{21}+y_1^{19}+y_1^{17}+y_1^{15}+y_1^{11}+y_1^{10}+y_1^{8}+y_1^{7}+y_1^{6}+y_1^{5}+y_1^{4}+y_1^{2}+1\bigl)\bigl) $}}\\
\hline

\end{tabular}
\label{table:n=55}
\end{center}
\end{table*}
\end{tiny}
\newpage
\section{Proof of equation \texorpdfstring{$\eqref{eq:iff51}$}{TEXT}} \label{sec:appendix}

\begin{proposition}
Let $C$ be the code with length $n=51$ defined by $S_C=\{0,1,5\}$. If $\mu=2$, then  
$$
(x_1^7+x_3)=0\quad \textrm{if and only if}\quad x_1^{51}=1.
$$
\end{proposition}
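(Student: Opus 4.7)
My plan is to turn both sides of the equivalence into conditions on the single ratio $\eta=z_1/z_2$ of the two error locations, which are distinct nonzero $51$st roots of unity in $\mathbb{F}_{2^8}$ since $\mu=2$. First I would compute $x_1^7+x_3=(z_1+z_2)^7+z_1^7+z_2^7$: because every middle binomial coefficient $\binom{7}{k}$ ($1\le k\le 6$) is odd, the sum consists only of mixed monomials, and applying in sequence the characteristic-$2$ identities $z_1^6+z_2^6=(z_1^3+z_2^3)^2$ and $z_1^3+z_2^3=(z_1+z_2)(z_1^2+z_1z_2+z_2^2)$ should yield
\[
x_1^7+x_3 \;=\; z_1 z_2\,(z_1+z_2)\,(z_1^2+z_1z_2+z_2^2)^2.
\]
Since $z_1z_2(z_1+z_2)\ne 0$, this reduces the vanishing of $x_1^7+x_3$ to the condition $\eta^2+\eta+1=0$, i.e.\ $\eta\in\mathbb{F}_4\setminus\mathbb{F}_2$.

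The ``if'' direction is then immediate: when $\eta^2+\eta+1=0$ we have $1+\eta=\eta^2$, hence $x_1=z_2\eta^2$ and $x_1^{51}=z_2^{51}\eta^{102}=1\cdot(\eta^3)^{34}=1$. The substantive content, and the step I expect to be the main obstacle, is the converse: assuming $x_1^{51}=1$ I must force $\eta\in\mathbb{F}_4$. Writing $x_1=z_2(1+\eta)$ and using $z_2^{51}=1$, the hypothesis becomes $(1+\eta)^{51}=1$, so both $\eta$ and $\eta+1$ lie in the subgroup $U\subset\mathbb{F}_{2^8}^{*}$ of $51$st roots of unity.

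The plan for handling this is to exploit the factorization $51=3\cdot 17$: for any $u\in U$, $u^{17}$ is a cube root of unity and therefore lies in $\mathbb{F}_4$. Applying this to both $u=\eta$ and $u=\eta+1$, together with the char.\,$2$ identity $(\eta+1)^{17}=(\eta^{16}+1)(\eta+1)=\eta^{17}+\eta^{16}+\eta+1$, I expect to conclude $\eta^{16}+\eta\in\mathbb{F}_4$. Multiplying by $\eta$ should then convert this into a relation $\eta^2+c\,\eta+\eta^{17}=0$ with $c,\eta^{17}\in\mathbb{F}_4$, i.e.\ a monic quadratic over $\mathbb{F}_4$ satisfied by $\eta$; hence $\eta\in\mathbb{F}_{16}$. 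Since $\mathbb{F}_{16}^{*}\cap U$ has order dividing $\gcd(15,51)=3$ it equals $\mathbb{F}_4^{*}$, so $\eta\in\{1,\omega,\omega^2\}$; excluding $\eta=1$ (which would correspond to $z_1=z_2$, impossible when $\mu=2$) gives $\eta^2+\eta+1=0$, closing the loop. The delicate step is packaging $(\eta+1)^{51}=1$ as a quadratic relation over $\mathbb{F}_4$ via the $17$th-power computation; once this is in place, the field/order comparison in $\mathbb{F}_{16}^{*}$ finishes the argument.
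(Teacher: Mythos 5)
Your proof is correct, and it takes a genuinely different route from the paper's. You first factor the left-hand side completely, $x_1^7+x_3=z_1z_2(z_1+z_2)(z_1^2+z_1z_2+z_2^2)^2$ (which checks out: the sum of the six mixed terms $z_1^kz_2^{7-k}$ equals $z_1z_2\cdot\frac{z_1^6+z_2^6}{z_1+z_2}$, and $z_1^6+z_2^6=(z_1+z_2)^2(z_1^2+z_1z_2+z_2^2)^2$), so that the vanishing of $x_1^7+x_3$ is exactly the statement that $\eta=z_1/z_2$ is a primitive cube root of unity; the forward implication is then a one-line computation, and the converse is settled by the arithmetic of $51=3\cdot 17$ and the subfield tower $\mathbb{F}_4\subset\mathbb{F}_{16}\subset\mathbb{F}_{256}$: from $\eta,\eta+1\in U$ you extract $\eta^{17},(\eta+1)^{17}\in\mathbb{F}_4$, hence the monic quadratic $\eta^2+c\eta+\eta^{17}=0$ over $\mathbb{F}_4$, hence $\eta\in\mathbb{F}_{16}^*\cap U=\mathbb{F}_4^*$, and $\eta\neq 1$ finishes. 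The paper instead stays with the symmetric functions $z_1+z_2$ and $z_1z_2$ throughout: for the forward direction it isolates the relation $(z_1+z_2)^5=(z_1z_2)^2(z_1+z_2)$ and raises it to the $51$st power to compare with $x_1^{255}=1$; for the converse it uses Frobenius ($x_1^{64}=x_{13}$) and Newton's identities to derive $b^5x_3+bx_1^8x_3=0$ and then splits into the cases $b=x_1^2$ and $x_3=0$. Your version is more structural and arguably more illuminating --- it pinpoints that the denominator degenerates precisely when $z_1/z_2$ has order $3$, which is exactly the obstruction that the hypothesis $3\nmid n$ rules out in Theorem \ref{theoOneCase}, and it makes visible where the factorization $51=3\cdot 17$ enters --- while the paper's computation fits the syndrome-identity style used elsewhere in Section \ref{sec:specialcases} and avoids introducing the auxiliary quantity $\eta$. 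Both arguments are complete.
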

\begin{proof}

Let us suppose that $(x_1^7+x_3)=0$. Since we have that
\begin{equation}\label{eq:ax1^7}
x_1^7=(z_1+z_2)^7=z_1^7+z_2^7+(z_1 z_2)\left((z_1+z_2)^5+(z_1z_2)^2(z_1+z_2)\right),
\end{equation}
then $(x_1^7+x_3)=0$ and $\mu=2$ implie that $(z_1+z_2)^5=(z_1z_2)^2(z_1+z_2)$. So
\begin{equation}\label{eq:x255}
((z_1+z_2)^5)^{51}=((z_1z_2)^2(z_1+z_2))^{51}.
\end{equation} 
Since the splitting field of $x^{51}+1$ over $\FF_2$ is $\FF_{256}$, then $x_1^{255}=1$. 
But we have also that $z_1z_2\in\mathbb{F}_{256}$, so $(z_1z_2)^{2\times 51}=1$. Then, by \eqref{eq:x255} we get that $(z_1+z_2)^{51}=1$.

Vice-versa, if $x_1^{51}=1$ then $x_1^{64}=x_1^{13}$. But $x_1^{64}=z_1^{64}+z_2^{64}=z_1^{13}+z_2^{13}=x_{13}$, so $x_1^{13}=x_{13}$.
By Newton's identities \cite{CGC-cd-book-macwilliamsTOT}, we know that $x_1^5=x_5+b x_3$. So, 
\begin{equation}\label{eq:x13}
x_1^{13}=x_1^5x_1^8=(x_5+b x_3)x_1^8=x_5x_1^8+b x_3 x_1^8=x_{13}+b^5x_3+b x_3x_1^8.
\end{equation}
Since $x_1^{13}=x_{13}$, then, by \eqref{eq:x13}, $b^5 x_3+b x_1^8x_3=0$. So, since $b\neq 0$, either $b=x_1^2$ or $x_3=0$.\\
If $b=x_1^2$, then, by \eqref{eq:ax1^7}, $x_1^7=x_7+x_1^2(x_1^5+x_1^4x_1)=x_7$.\\
If $x_3=0$ then $x_1^5=x_5+b x_3=x_5$. Then 
$$x_1^7=x_1^5x_1^2=x_5x_1^2=x_7+b x_3$$.
Since $x_3=0$, then $x_1^7=x_7$.
\end{proof}

\section*{Acknowledgements}
Most results in this paper are from the last author's PHD thesis and so she would like
to thank her supervisor, the third author. The work of the second author  has been supported in part by EPSRC via grant EP/I03126X.


\bibliographystyle{ieeetr}
\bibliography{Refs}


\end{document}